\def\orcidID#1{\smash{\href{http://orcid.org/#1}{\protect\raisebox{-1.25pt}{\protect\includegraphics{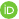}}}}}
\title{Symbolic Semantics for Probabilistic Programs}
\author{Erik Voogd \inst{1}\orcidID{0009-0007-9712-3224} \and Einar
  Broch Johnsen \inst{1}\orcidID{0000-0001-5382-3949} \and Alexandra
  Silva \inst{2}\orcidID{0000-0001-5014-9784} \and\\ Zachary J. Susag
  \inst{2}\orcidID{0000-0002-3981-9529} \and Andrzej Wąsowski
  \inst{3}\orcidID{0000-0003-0532-2685}}
\institute{University of Oslo, Oslo, Norway \and
  Cornell University, Ithaca, New York, USA \and
  IT University of Copenhagen, Copenhagen, Denmark}
\begin{document}
\maketitle

\begin{abstract}  We present a new symbolic execution semantics of probabilistic programs
  that include observe statements
  and sampling from continuous distributions.
  Building on Kozen's seminal work, this symbolic semantics consists of a countable collection of
  measurable functions, along with a partition of the state
  space. We use the new semantics to provide a full correctness proof of symbolic execution for probabilistic programs. We also 
   implement this semantics in the
  tool \toolname, and illustrate its use on examples.
\end{abstract}

\setcounter{footnote}{0}

\section{Introduction}
Probabilistic programming languages are designed to make probabilistic
computations easier to express for a broader scientific community.
They can be used to model behaviour based on data that carries
uncertainty or randomness, as found in, e.g.,
robot\-ics~\cite{thrun2002probabilistic}, machine learning
\cite{murphy2022probabilistic,ghahramani2015probabilistic}, statistics
\cite{bayesiandataanalysis1995}, and cryptography
\cite{goldwasser1984probabilistic}.  Besides traditional programming
constructs, a key aspect of a probabilistic language is the ability to
\emph{sample} random values, in order to represent the uncertainty
that occurs in the real world.  It is essential that these programming
languages have a rigorous foundation, such that correctness and safety
properties can be guaranteed when designing and implementing tools for
probabilistic program analysis, optimization, and compilation.

Probabilistic semantics was first studied
by Kozen~\cite{kozen1979semantics}, using measure theory to relate
 denotational and operational semantics of imperative
probabilistic programs.  The \emph{denotational}
semantics represents states as probability measures over program
variables and programs as measure transformers. It enables one
to effectively reason about programs as a whole. The
\emph{operational} semantics, on the other hand, is a computation model
describing step-by-step computation in terms of a measurable function. A
correctness theorem for the operational semantics states that
the produced sets of outputs
are distributed correctly according to the measure transformer of the
denotational semantics.

To enable more detailed reasoning about probabilistic programs, we introduce
a new semantics, inspired by symbolic execution techniques.
For non-probabilistic programs, symbolic
execution~\cite{king76cacm,baldoni18compsurv} has been very successful
in program analysis techniques such as debugging, test
generation, and verification (e.g.,
\cite{deboer2021symbolic,godefroid05pldi,cadar06ccs,cadar08osdi,cadar11icse,cadar13cacm,degouw15cav,hentschel19sttt}).
Its appeal arises from the fact that one symbolic execution is an
abstraction of possibly infinitely many executions in the concrete
state space, that all share a single execution path through the
program.  Symbolic execution interprets program variables
symbolically, such that
assignments update a symbolic substitution and conditional statements
produce so-called path conditions that must be true for the program's
input variables to execute each path.
Thus, symbolic execution generates pairs consisting of a symbolic
substitution and a path condition.  If the Boolean path
condition holds in some initial state, then running
the program is the same as applying the corresponding substitution
to the initial state.

We use the new semantics to provide a full correctness proof of symbolic execution of probabilistic programs with respect to a denotational semantics à la Kozen.
The correctness proof is highly nontrivial: one semantics deals with the program's symbolic execution traces, the other interprets programs as measure transformers.
Our approach is to first prove a one-to-one correspondence between traces and the elements of our new symbolic semantics.
The latter is equivalent to the operational semantics first introduced by Kozen \cite{kozen1979semantics}, and hence to the denotational semantics.
The full details of this proof are included in the appendices.

We consider a language with \emph{observe} statements: in some
interpretations of Bayesian inference, a programmer can express to
have observed a value as a sample from some discrete or continuous
distribution \cite{staton16semantics}.  In others, including the
imperative language that we consider, observe statements are given the
semantics of asserting the truth of a provided Boolean formula.
Observe statements were not studied in Kozen's work, and an
operational semantics for probabilistic programming with observe
statements has not been formally defined in that setting.  For our
correctness proof, therefore, we must extend both denotational and
operational semantics with an interpretation of \emph{observe}, and
state and prove the corresponding correctness theorem anew.

Intuitively, observe statements in an operational semantics can be
modelled by \emph{rejection sampling}. That is, any execution that violates the observed formula is aborted. The probability mass will then
reside in execution traces where the observe condition holds, and the
probability of the set of traces where it does not hold will be
annihilated.  We formalize this semantics and prove its
correctness with respect to a denotational semantics that interprets
$\observe$ as a measure transformer that restricts
measures to the set corresponding with the Boolean formula $\bexp$.
This is an interpretation based on Bayes' theorem.
\looseness -1

\medskip

\noindent
\textbf{Contributions and Overview.}  We start the paper by introducing the language using an example program and discussing the challenges in the technical
development (\Cref{sec:pp}). The subsequent technical
sections of the paper present the following contributions:

\begin{itemize}[topsep=0pt]
\item An in-depth description of symbolic execution of probabilistic programs (\Cref{sec:symbex}), which supports discrete and continuous sampling as well as Bayesian inference through observing Boolean formulas of positive-measure sets. In symbolic execution, sampled values will be represented by symbolic variables.
\item As a main contribution we provide a full correctness proof of symbolic execution of probabilistic programs (\Cref{sec:correctness}) by introducing a new \emph{symbolic} semantics for imperative probabilistic programs and proving correspondence with established semantics (which we extend to support \emph{observe} statements).
\item To showcase symbolic execution of probabilistic programs, we have developed the tool \toolname that performs bounded symbolic execution according to our new symbolic semantics (\Cref{sec:tool}).  At the end of a symbolic
execution, \toolname reports the path condition, the substitution
mapping, and the set of Boolean formulas which correspond to the
program's observe statements.
We report on running \toolname on a series of examples to
show how our semantics can be applied.
\end{itemize}

\section{Probabilistic Programming}\label{sec:pp}
We introduce the language through an example and then present its grammar.
\subsection{Example Program} \label{sec:example}
\begin{wrapfigure}{r}{.308\textwidth}
\vspace*{-8.5mm}
\begin{lstlisting}[xleftmargin=-1.8mm]
  gender ~ bern(0.51);
  if (gender = 1) {
    height ~ norm(175,72);
  } else {
    height ~ norm(161,50);
  }
  observe (height >= 200);
\end{lstlisting}
\vspace*{-1.0cm}
\end{wrapfigure}

\noindent
Consider the probabilistic program on the right, modelling the gender distribution among people taller than $200$ centimeters.
In the Bernoulli sampling statement $\syntax{{\color{red}bern}(0.51)}$ for the variable $\texttt{gender}$ we assume that $51\%$ of the total population is male. Among men, height is normally distributed
with mean $175$ and variance $72$, and among women with mean $161$ and variance $50$.
The last line conditions the distribution on people taller than $200$ centimeters.

Symbolic execution has been very effective in analysis of non-probabilistic programs.
The technique builds variable substitutions by analyzing assignments, and resolves conditional branching and iteration by use of nondeterminism.
The conditionals are stored under variable substitution as a Boolean formula -- called the \emph{path condition} -- that needs to be satisfied by the initial state for the corresponding substitution to be a representation of the program.
Some problems arise when programs have discrete and continuous sample and observe statements, such as the above.

Consider for example a Bernoulli sampling statement $\x_i \sim \syntax{{\color{red}bern}(}\exp\syntax{)}$.
One could introduce additional sets of symbolic variables for sampling statements, but it is practically infeasible to do so for each possible bias $\exp$, especially if one allows parameters to be arithmetic expressions.
In our approach, we assume a finite number of \emph{primitive} distributions (that is, unparameterized) and encode \emph{parameterized} distributions using these primitives.
A Bernoulli sampling statement $\x_i \sim \syntax{{\color{red}bern}(}\exp\syntax{)}$ is then encoded by use of uniform continuous sampling from the unit interval $\interval 01$, written $\psample$, as follows:
\begin{lstlisting}
  x$_i$ ~ rnd; if (x$_i$ < $t$) {x$_i$ := 1} else {x$_i$ := 0}
\end{lstlisting}
A denotational semantics can be used to show soundness of such encodings.

For Gaussian distributed samples, the primitive distribution used is the \emph{standard} Gaussian distribution, which has mean zero and variance one.
To obtain a sample from a Gaussian distribution with mean $\exp_1$ and variance
$\exp_2$ -- both arithmetic expressions -- a standard Gaussian sample is scaled (multiplied by $\sqrt{\exp_2}$) and translated
(add $\exp_1$).

With these encodings, one trace through the example program above has final substitution $\sub$ and path condition $\pc$, given by
\[ \sub = \{\syntax{gender} \mapsto 1, \syntax{height} \mapsto \z_0\sqrt{72}+175\} 
\qquad \pc \equiv \y_0 < 0.51 \sand 1=1 \]
Here, $\y_0$ is a symbolic variable that is uniformly distributed over the interval $[0,1]$.
As it turns out, the probability of the path condition as measured by the input measure is the \emph{prior} probability $0.51$ of the trace.

The obtained Boolean formula from \emph{observe} statements in this symbolic execution is 
\[ \po \equiv \z_0 \cdot \sqrt{72} + 175 \geq 200,\] 
where $\z_0$ is a symbolic variable representing a standard Gaussian sample.
Measuring the set represented by this Boolean formula with the input measure yields exactly the \emph{likelihood} of the model we have in mind.
To keep the prior and the likelihood separate, we collect Boolean conditions under \emph{observe} statements in what we coin the \emph{path observations}.

This paper formally describes the procedure above for general probabilistic programs, and proves correctness with respect to a denotational semantics.
Building on work by Kozen~\cite{kozen1979semantics}, we choose to interpret substitutions from symbolic execution as measurable functions on the value space, and path conditions and path observations as measurable sets of values for the program variables.
These can then be compared against Kozen's denotational semantics, where program states are subprobability measures over the domain of values for the program variables, and a program $\p$ is a transformer of input to output measures
$\sem{p} \colon \ms{\R^n} \to \ms{\R^n}$.
The use of measure theory is unavoidable in order to handle continuous random variables. For example the semantics of \lstinline{height ~ norm(175,72)}, given an input $\mu \in \ms{\R}$, is the measure:
\[
\sem{\text{\lstinline{height ~ norm(175,72)}}}(\mu) : A \mapsto \gamma_{175,72}(A)
\]
Here, $\gamma_{175,72}$ denotes the Gaussian measure with expected value $175$ and variance $72$.

After defining the language formally in \Cref{sec:syntax}, we provide a detailed description of symbolic execution of probabilistic programs in \Cref{sec:symbex}.
There, towards a correctness proof, we also introduce the \emph{symbolic semantics}. 
This is a \emph{big-step} semantics for symbolic execution---this is stated in \Cref{thm:symbexsemantics}.
\Cref{sec:correctness} is aimed at proving correctness with respect to a measure-transforming semantics.
The correctness statement (\Cref{thm:mainresult}) says that the sum of probabilities of all symbolic execution paths, as measured by the input measure, expressed using the final substitutions, path conditions, and path observations, is the same as the probability under the denotational semantics.
The technical contribution is concluded with a proof of concept tool that implements symbolic execution and we perform some experiments with it (\Cref{sec:tool}).

\subsection{Language}\label{sec:syntax}
To express programs like the above example, we consider a language that contains basic imperative constructs (assignments, sequential composition, conditionals, and loops) together with constructs to manipulate random variables (sampling) and observe statements that allow conditioning on an observation defined by a Boolean condition:
\[
\begin{array}{l}
\begin{array}{rcl}
  \E\ni\exp & ::=&~q\in\Q\\
            &\mid&~\x_i\\
            &\mid&~\op(\exp_1,\dots,\exp_{\arop})\\
            \\
            \\
\end{array}\\
\begin{array}{l}
  \x_i\in\XX\\
  \relop\in\{<,\leq,=,!=,\geq,>\}\\
\end{array}
\end{array}\quad
\begin{array}{rcl}
 \B\ni\bexp & ::=&~\false\\
            &\mid&~\true\\
            &\mid&~\exp\relop\exp\\
            &\mid&~\bexp\por\bexp\\
            &\mid&~\bexp\pand\bexp\\
            &\mid&~\pneg\bexp\\\\
\end{array}\quad
\begin{array}{rcl}
 \P\ni p & ::=&~\pskipcol\\
         &\mid&~\assign\\
         &\mid&~\psamplecol\\
         &\mid&~\observecol\\
         &\mid&~\seq\p\p\\
         &\mid&~\pifcol\bexp\p\p\\
         &\mid&~\whilecol\bexp\p
\end{array}
\]
It defines \emph{expressions} $\exp \in \E$ over program variables $\x_i \in \XX$ (where $i \in \N$) and constants $q \in \Q$ using operators $\op$ of arity $\arop$.
Expressions $\exp \in \E$ are interpreted as functions $\interpret\exp : \R^n \to \R$ that compute the
value of $\exp$ given a valuation $\val : \{0,1,\ldots,n-1\} \to \R$ of the
program variables.
Formally, $\op$ is a function $\opsem:\R^{m} \to \R$ where $m=\arop$, and
$$\begin{array}{@{}l@{}}
  \interpret{q} (\val) = q, \;\quad
  \interpret{\x_i} (\val) = \val(i), \;\quad
  \interpret{\op (\exp_1, \ldots, \exp_{\arop})} ( \val )
= \opsem (\interpret{\exp_1}(\val), \ldots,
  \interpret{\exp_{\arop}}(\val))
  \end{array}$$
The grammar also defines \emph{Boolean expressions} $\bexp \in \B$ that can relate expressions and apply the standard Boolean operators.
We slightly overload notation and interpret Boolean expressions $\bexp$ as subsets $\interpret {\bexp}$ of $\R^n$ where the formula holds.
  For example, $\interpret\true = \R^n$ and if $\bexp = \exp_1 \relop \exp_2$ then
  $$\interpret\bexp = \{\val \in \R^n \setst \interpret{\exp_1}(\val) \relop \interpret{\exp_2}(\val)\}.$$
  Boolean conjunction, disjunction, and negation are respectively interpreted as set intersection, union, and complement in $\R^n$.

The Boolean expressions are used in the \syntax{if} and \syntax{while} conditions and in \syntax{observe} statements of \emph{program statements} $\p \in \P$.
Whenever some program $p \in \P$ is fixed, there is also a fixed amount of $n$ variables.

Sampling statements $\psample$ draw uniform random samples from the unit interval $\interval 01$.
For presentation purposes, we sample from only one primitive distribution in the formal grammar, and we do so at the level of statements rather than in expressions.
Expressions can be made probabilistic, however, by first sampling and then using the variable in a (Boolean) expression.
We also stress again that the language can be extended to support sampling from a multitude of other primitive distributions, both discrete and continuous, without affecting any of the results in this paper.

With the extensions described in \Cref{sec:example}, the probabilistic program presented there is in the language generated by our grammar.

\section{Symbolic Execution} \label{sec:symbex}
  The inductive rules that define a transition system implementing symbolic execution of probabilistic programs are presented in \cref{fig:symos}.
  \begin{figure}[t]
  \center\boxed{\begin{tabular}{c}
  \\
  \saxiom{asgn}{(\assign,\sub,\idx,\pc,\po)\tr(\pskipcol,\sub\update{\x_i}{\app\sub\exp},\idx,\pc,\po)}\\\\
  \saxiom{smpl}{(\psamplecol,\sub,\idx,\pc,\po)\tr(\pskipcol,\sub\update{\x_i}{\y_{\idx}},\idx+1,\pc,\po)}\\\\
  \saxiom{obs}{(\observecol,\sub,\idx,\pc,\po)\tr(\pskipcol,\sub,\idx,\pc,\po\sand\app\sub\bexp)}\\\\
  \saxiom{seq-0}{(\seq{\pskipcol}{\p},\sub,\idx,\pc,\po)\tr(\p,\sub,\idx,\pc,\po)}\\\\
  \srule{seq-n}
  {(\p,\sub,\idx,\pc,\po)\tr(\p',\sub',\idx',\pc',\po')}
  {(\seq\p\q,\sub,\idx,\pc,\po)\tr(\seq{\p'}\q,\sub',\idx',\pc',\po')}\\\\
  \saxiom{if-T}{(\pifcol\bexp{\p_1}{\p_2},\sub,\idx,\pc,\po)\tr(\p_1,\sub,\idx,\pc\sand\app\sub\bexp,\po)}\\\\
  \saxiom{if-F}{(\pifcol\bexp{\p_1}{\p_2},\sub,\idx,\pc,\po)\tr(\p_2,\sub,\idx,\pc\sand\app\sub{\pneg\bexp},\po)}\\\\
  \saxiom{iter-F}{(\whilecol\bexp\p,\sub,\idx,\pc,\po)\tr(\pskipcol,\sub,\idx,\pc\sand\app\sub{\pneg\bexp},\po)}\\\\
  \saxiom{iter-T}{(\whilecol\bexp\p,\sub,\idx,\pc,\po)\tr(\seq\p{\whilecol\bexp\p},\sub,\idx,\pc\sand\app\sub\bexp,\po)}\\\\
  \end{tabular}}
  \caption{Inductive transition rules for symbolic execution}
  \label{fig:symos}
  \end{figure}
  The \emph{symbolic states} in this transition system are quintuples consisting of the following data: \rom 1 a \emph{program} $\p$ that is to be executed; \rom 2 a \emph{substitution} $\sub$ of program variables to symbolic expressions (defined shortly), which captures past program behavior; \rom 3 a \emph{sampling index}
  \footnote{One for each primitive distribution: we use one in this paper for presentation purposes. Two sampling indices are used in the tool \toolname.}
  $\idx \in \N$; \rom 4 the \emph{path condition} as a precondition for this symbolic trace; and \rom 5 the \emph{path observation} as a means to bookkeep which states will be accepted by \emph{observe} statements throughout execution.
  Progression of the system depends only on the program syntax \rom 1.
  Below we provide a detailed description of the components \rom 2-\rom 5.
  The substitutions \rom 2 and path conditions \rom 4 follow established methods from symbolic execution \cite{deboer2021symbolic}; the notion of path observation \rom 5 is novel.

  The program $\pskip$ cannot make a transition and represents the \emph{terminated} program.
  The system is nondeterministic due to the pairs of rules for \emph{if} and \emph{while} statements.
  The system has infinite symbolic traces due to \refRule{iter-T}.
  Customarily, $\trCl$ denotes the reflexive-transitive closure of $\tr$.

  \subsection{Symbolic Substitutions} \label{sec:substitutions}
  To capture the behavior of a symbolic trace, a \emph{substitution} assigns to every program variable a \emph{symbolic expression}.
  Symbolic expressions are generated by the following grammar (recall that $\arop$ denotes the arity of $\op$):
\[\begin{array}{lclllllllllll}
 \SE \ni \sexp     & ::= & ~ q \in \Q
                  & \mid \x_i \in \XX
                  & \mid \y_\idx \in \YY
                  & \mid \op (\sexp, \ldots, \sexp_{\arop})
\end{array}\]
  $\SE$ extends $\E$ with the set $\YY = \{\y_0, \y_1, \dots\}$ as base cases, representing the random samples that may be drawn during execution.
  When the language samples from more than one primitive distribution, a set of symbolic variables $\YY_{\DD}$ is used for every primitive distribution $\DD$.
  For each such distribution $\DD$, one will need a sampling index $\idx_\DD$ in the transition system.
  We use $\ZZ = \{\z_0, \z_1, \dots\}$ for symbolic variables representing standard normal samples in our examples and in the tool.

  Thus, formally, substitutions are maps $\sub : \XX \to \SE$.
  We sometimes write $\sub(i)$ in lieu of $\sub(\x_i)$.
  The \emph{updated} substitution $\sub\update i\sexp$ denotes the substitution $\sub'$ such that $\sub'(j)=\sub(j)$ for $j\neq i$ and $\sub'(i)=\sexp$.
  Any substitution $\sub$ inductively extends to expressions $\exp \in \E$ by $\sub(\op(\exp_1, \ldots, \exp_{\arop})) := \op (\sub(\exp_1), \ldots, \sub(\exp_{\arop}))$.

  If symbolic execution is to conform with a denotational semantics, the symbolic substitutions have to be interpreted as concrete state transformers.
  For this, first, symbolic \emph{expressions} are interpreted as functions $\R^{n+\omega} \to \R$, where the first $n$ arguments are the values of the program variables, and the rest is an infinite stream of samples to be drawn.
  Here, one may use extra streams of sample spaces for any additional primitive distributions in the language.
  The mapping $\sinterpret {\cdot}: \R^{n+\omega} \to \R$ equips symbolic expressions with a formal interpretation as follows:
  $$\begin{array}{@{}l@{}}
    \sinterpret{q} : \rho \mapsto q,\;\qquad\qquad
    \sinterpret{\x_i} : \rho \mapsto \rho_i,\;\qquad\qquad
    \sinterpret{\y_k} : \rho \mapsto \rho_{n+k},\\[0.6em]
    \sinterpret{\op (\sexp_1, \ldots, \sexp_{\arop})} : \rho \mapsto \opsem (\sinterpret{\sexp_1}(\rho), \ldots, \sinterpret{\sexp_{\arop}}(\rho))\;
  \end{array}$$
  The symbols $\y_k$ thus pick the $k$-th sample available in $\R^\omega$.
  Symbolic expressions free of $\y_k$ are just program expressions, and their interpretations agree in the following way:
  \begin{restatable}[Substitution lemma for expressions]{lemma}{substitutionlemmaexp}\label{lem:substitutionlemmaexp}
    For all $\exp \in \E$, $\sinterpret\exp(\rho) = \interpret\exp (\res \rho n)$.
  \end{restatable}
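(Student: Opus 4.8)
The plan is to proceed by structural induction on the program expression $\exp \in \E$. The grammar for $\E$ offers exactly three productions---a constant $q \in \Q$, a program variable $\x_i$, and an operator application $\op(\exp_1, \ldots, \exp_{\arop})$---and, crucially, omits the symbolic-variable base case $\y_\idx$ present in $\SE$. Every $\exp \in \E$ is therefore built without ever reading the sample stream, so the identity must survive the restriction $\res \rho n$, which discards precisely the stream coordinates $\rho_n, \rho_{n+1}, \ldots$ that program expressions cannot reference.

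First I would dispatch the two base cases directly from the defining equations. When $\exp = q$, both $\sinterpret{q}(\rho)$ and $\interpret{q}(\res \rho n)$ unfold to the constant $q$. When $\exp = \x_i$, the symbolic interpretation gives $\sinterpret{\x_i}(\rho) = \rho_i$; since $\x_i$ is one of the $n$ program variables, its index satisfies $i < n$, so the restricted valuation still carries this coordinate and $(\res \rho n)(i) = \rho_i$, which is exactly $\interpret{\x_i}(\res \rho n)$.

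For the inductive step, take $\exp = \op(\exp_1, \ldots, \exp_{\arop})$ and assume the claim for each subexpression $\exp_j$. Unfolding $\sinterpret{\cdot}$ rewrites the left-hand side as $\opsem$ applied argumentwise to the values $\sinterpret{\exp_j}(\rho)$; the induction hypothesis replaces each of these by $\interpret{\exp_j}(\res \rho n)$, and refolding through the definition of $\interpret{\cdot}$ collapses the result to $\interpret{\op(\exp_1, \ldots, \exp_{\arop})}(\res \rho n)$. Because $\opsem$ is the same interpreted operator in both semantics and is applied coordinatewise on either side, the two expressions coincide.

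I do not expect a genuine obstacle here; the one place where the statement could fail, and thus the step worth flagging, is the variable case. It depends entirely on program variables being indexed within $\{0, \ldots, n-1\}$, so that none of them is among the coordinates dropped by $\res \rho n$. The symbolic variables $\y_\idx$, which read the coordinates $\rho_{n+\idx}$ beyond the restriction, are exactly what $\E$ excludes---which is why the lemma is stated for $\exp \in \E$ rather than for all of $\SE$.
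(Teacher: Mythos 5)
Your proof is correct and follows essentially the same route as the paper: structural induction on $\exp \in \E$, with the constant and variable base cases unfolding directly from the definitions of $\sinterpret{\cdot}$ and $\interpret{\cdot}$, and the operator case applying the induction hypothesis $\arop$ times before refolding. Your added observation that the variable case hinges on program-variable indices lying below $n$ (and that the excluded $\y_\idx$ case is exactly why the lemma is restricted to $\E$) is a helpful clarification but does not change the argument.
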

  This interpretation of $\sexp\in\SE$ as a function $\R^{n+\omega} \to \R$ extends naturally to symbolic substitutions $\sub \in \SE^n$ as functions $\R^{n+\omega} \to \R^n$, by mere point-wise application after substitution.
  However, since we want to compose substitutions (their interpretations) due to sequencing, the codomain must be extended from $\R^n$ to $\R^{n+\omega}$:
  \begin{definition}[Interpretation of symbolic substitutions] \label{def:substitutioninterpretation}
    Let $\sub : \XX \to \SE$ be a substitution and $k \in \N$ a sampling index.
    The interpretation of $\sub$ at sampling index $k$, denoted $\sinterpretk \sub k$, is the function $\R^{n+\omega} \to \R^{n+\omega}$ defined by
    $$\sinterpretk\sub k : \rho \mapsto (\sinterpret {\sub(\x_0)}(\rho), \ldots, \sinterpret {\sub(\x_{n-1})}(\rho), \rho_{n+k}, \rho_{n+k+1}, \rho_{n+k+2}, \ldots) $$
  \end{definition}
  The first $n$ values of $\sinterpretk \sub k (\rho)$ are just pointwise applications of $\sinterpret {\cdot}$.
  The remaining elements of $\sinterpretk \sub k (\rho)$ are the same as those of $\rho$, but left-shifted $k$ positions.
  This formalizes the idea that $\sinterpretk \sub k$ has already drawn $k$ samples.

  \subsection{Path Conditions and Path Observations}
  The Boolean formulas for the condition in \emph{branching} and \emph{iteration} statements are aggregated under substitution to build the \emph{path condition} of a symbolic trace.
  The path condition represents the unique part of the input space that triggers the symbolic trace.
  Similarly, \emph{observed} Boolean formulas under substitution make up the \emph{path observation}, and represent the part of the input space that will lead to acceptance of \emph{observe} statements in this trace.

  Path conditions and observations are expressed as \emph{symbolic Boolean expressions}:
\[\begin{array}{lclllllllllll}
 \SB \ni \sbexp & ::=& \sfalse
                & \mid \strue
                & \mid \sexp\relop\sexp
                & \mid \sbexp\sor\sbexp
                & \mid \sbexp\sand\sbexp
                & \mid \sneg\sbexp
\end{array}
\]
  Subtitutions $\sub : \XX \to \SE$ extend through $\E \to \SE$ further to Booleans, as in $\B \to \SB$, in a completely trivial way.
  For example, $\sub (\exp_1 \relop \exp_2 \pand \true) = \sub(\exp_1) \relop \sub(\exp_2) \sand \strue$.

  Path conditions and path observations -- their symbolic Boolean expressions -- are interpreted as subsets of $\R^{n+\omega}$ where the formula is satisfied.
  The mapping $\sinterpret{\cdot}$ equips symbolic Boolean expressions with a formal interpretation as follows:
\begin{flalign*}
& \sinterpret{\sfalse} = \emptyset,
&& \sinterpret{\sbexp_1 \sor \sbexp_2} = \sinterpret{\sbexp_1} \cup \sinterpret{\sbexp_2},\\
& \sinterpret{\strue} = \R^{n+\omega},
&& \sinterpret{\sbexp_1 \sand \sbexp_2} = \sinterpret{\sbexp_1} \cap \sinterpret{\sbexp_2}, \hfill \\
& \sinterpret{\sneg \sbexp} = \R^{n+\omega} \setminus \sinterpret{\sbexp},
&& \sinterpret{\sexp_1\relop\sexp_2} = \{\rho \in \R^{n+\omega} \setst
\sinterpret{\sexp_1}(\rho) \relop \sinterpret{\sexp_2}(\rho) \}.
\end{flalign*}
  Here we overload notation of $\sinterpret{\cdot}$ to use it for both expressions and Boolean expressions.

  \subsection{Final Configurations}
  Let $\initsub$ denote the \emph{initial} substitution $\{\x_i \mapsto \x_i\}_{\x_i \in \XX}$.
  We call the quadruple $(\sub,\idx,\pc,\po)$ the (symbolic) \emph{configuration} of a state $(\p,\sub,\idx,\pc,\po)$ and $(\initsub,0,\strue,\strue)$ is the \emph{initial configuration}.
  The configurations we are mostly interested in result from a finite symbolic execution trace starting from the initial configuration:
  $$ \Gamma_\p := \{ (\sub,\idx,\pc,\po) \setst (\p,\initsub,0,\strue,\strue) \trCl (\pskip,\sub,\idx,\pc,\po) \} $$
  is called the set of \emph{final configurations} of $\p$.

  For a program $\p \in \P$ and a configuration $(\sub,\idx,\pc,\po)\in\Gamma_\p$, $\p$ transforms inputs $\rho\in\sinterpret {\pc} \cap \sinterpret{\po}$ to the output $\sinterpretk\sub\idx(\rho)$.
  That is, $\p$ behaves like $\sinterpretk\sub\idx$ on $\sinterpret\pc\cap\sinterpret\po$.
  Execution of $\p$ on inputs from $\sinterpret{\pc} \setminus \sinterpret {\po}$ leads to an unsatisfied \emph{observe} statement.
  \begin{example}
    Consider the example program in \Cref{sec:example} (call it $\p$) that models the gender distribution among people taller than 200 centimeters.
    Due to Bernoulli sampling, it contains an additional \emph{if} statement hidden in the encoding.
    The program has thus \emph{four} symbolic traces; their respective final configurations $\cfg_1,\cfg_2,\cfg_3,\cfg_4 \in \Gamma_\p$ are:
    \begin{figure}[H]
      \renewcommand*\arraystretch{1.3}
      \renewcommand*\tabcolsep{1.5mm}
      \centering
    \begin{tabular}{c|l|c|c|l|l}
      \textbf{Final} &
      \textbf{Substitution } $\sub$ & $\idx_{\y}$ & $\idx_{\z}$ &
      \textbf{Path condition } $\pc$ &
      \textbf{Path observation } $\po$ \\\hline
      $\cfg_1$ & $\{g \mapsto 1, h \mapsto \z_0\sqrt{72}+175\}$ & $1$ & $1$ &
      $\y_0 < 0.51 \sand 1=1$ &
      $\z_0\sqrt{72}+175 \geq 200$ \\
      $\cfg_2$ & $\{g \mapsto 1, h \mapsto \z_0\sqrt{50}+161\}$ & $1$ & $1$ &
      $\y_0 < 0.51 \sand 1\neq 1$ &
      $\z_0\sqrt{50}+161 \geq 200$ \\
      $\cfg_3$ & $\{g \mapsto 0, h \mapsto \z_0\sqrt{72}+175\}$ & $1$ & $1$ &
      $\y_0 \geq 0.51 \sand 0=1$ &
      $\z_0\sqrt{72}+175 \geq 200$ \\
      $\cfg_4$ & $\{g \mapsto 0, h \mapsto \z_0\sqrt{50}+161\}$ & $1$ & $1$ &
      $\y_0 \geq 0.51 \sand 0\neq 1$ &
      $\z_0\sqrt{50}+161 \geq 200$
    \end{tabular}
  \end{figure}

  \noindent
  The final configurations $\cfg_2$ and $\cfg_3$ have unsatisfiable path conditions.
  Path conditions represent the \emph{priors} in the model and path observations represent \emph{likelihoods}.
  \end{example}

\subsection{Symbolic Semantics}
Now we introduce symbolic semantics for probabilistic programs with the main purpose of proving correctness of symbolic execution, which we just described.
The new semantics is a set of functions, and can in fact be considered a denotational semantics for symbolic execution of probabilistic programs.
Defined directly on the syntax of programs, the semantics consists of the interpretations of all final symbolic configurations---this is \Cref{thm:symbexsemantics} below.

For a triple $(F,B,\OO)$ in the definition below, $F$ is the final substitution of a trace, $B$ is the path condition, and $\OO$ is the path observation.
Recall that $\BB(X)$ is the Borel $\sigma$-algebra of $X$ and let $\BB(X,Y)$ denote the space of Borel measurable functions $X \to Y$.
Then, for programs $\p \in \P$ in $n$ variables, the sets
  $$\FF_{\p} \subset \BB(\R^{n+\omega},\R^{n+\omega}) \times \BB(\R^{n+\omega}) \times \BB(\R^{n+\omega})$$
are defined inductively on the structure of $\p$ as follows:
  \begin{itemize}
    \item For \emph{inaction} $\pskip$, the state remains unaltered and there is no restriction on the path condition or the path observation:
      \[ \FF_{\pskip} :=  \{(\rho\mapsto\rho,\statespace,\statespace)\} \]
    \item An \emph{assignment} has no restriction on the precondition, but the state is updated according to the assignment:
      \[ \FF_{\assign} := \{(\rho\mapsto\rho\update i{\interpret\exp(\res\rho n)},\statespace,\statespace) \} \]
      Only the first $n$ values $\res\rho n$ of the state $\rho$ are needed to evaluate $\exp$, and $\rho\update ia$ denotes the state $\rho'$ where $\rho'(j)=\rho(i)$ if $j\neq i$ and $\rho'(i)=a$.
    \item When \emph{sampling} we also merely perform an appropriate state update:
      \[ \FF_{\psample} := \{(\rho\mapsto\sample i (\rho),\statespace,\statespace)\} \]
      Here, $\sample i(\rho)$ is an updated stream $\rho'$ that has drawn one sample:
      $$ \sample i (\rho) = (\rho_0, \dots, \rho_{i-1}, \bm{\rho_n}, \rho_{i+1}, \ldots, \rho_{n-1}, ~{\bm{\rho_{n+1}}}, \bm{\rho_{n+2}}, \bm{\ldots} ) $$
    \item \emph{Observing} a Boolean formula only updates the path observation:
      \[ \FF_{\observe} := \{(\rho\mapsto\rho,\statespace,\interpret{\bexp}\times\R^\omega)\} \]
    \item When \emph{sequencing} two programs $\p_1$ and $\p_2$, range over all pairs of executions $(F_1,B_1,\OO_1) \in \FF_{\p_1}$ and $(F_2,B_2,\OO_2) \in \FF_{\p_2}$ and compose them.
      The first path condition $B_1$ should be satisfied and, after executing the first component $F_1$, the second path condition $B_2$ should be satisfied (and sim. for the path observation):
      \[ \FF_{\seq{\p_1}{\p_2}} := (F_2\circ F_1,B_1\cap F^{-1}_1[B_2],\OO_1\cap F_1^{-1}[\OO_2])
    \setst (F_i,B_i,\OO_i) \in \FF_{\p_i}, i=1,2 \} \]
    \item The two branches of an \emph{if} statement are put together in a binary union of sets.
      The path conditions are updated accordingly ($\compl -$ denotes complement):
      \[ \begin{array}{rl}
        \FF_{\pif\bexp\p\q} :=
             & \{(F,B\cap(\interpret\bexp\phantom{\compl{}}\times\R^\omega),\OO)\setst(F,B,\OO)\in\FF_\p\} \\
        \cup & \{(F,B\cap(\bexpcset),\OO)\setst(F,B,\OO)\in\FF_\q\}
    \end{array}\]
    \item In a \emph{while} statement, the union is over every possible number of iterations $m$.
      For $m=0$, the behavior is that of $\pskip$ and the precondition is the negation of the Boolean formula.
      Every next number $m+1$ of loop iterations takes all possible executions of $m$ iterations, pre-composes all possible additional iterations, and updates the preconditions accordingly:
      \[ \FF_{\while \bexp\p} := \bigcup_{m=0}^\infty \mathbb F_{\bexp,\p}^m\{ (\val \mapsto \val, \compl{\interpret\bexp}, \statespace) \}, \]
      where $\mathbb F_{\bexp,\p}^m$ denotes $m$ applications of the mapping $\mathbb F_{\bexp,\p}$ from $\BB(\R^{n+\omega},\R^{n+\omega}) \times \BB(\R^{n+\omega}) \times \BB(\R^{n+\omega})$ to itself that pre-composes an additional iteration of the loop:
    $$ \begin{array}{rl}
      \mathbb F_{\bexp,\q} : \mathcal F \mapsto \{  & (F \circ F_\q,(\bexpset) \cap B_\q \cap F_\q^{-1}[B],\OO_\q \cap F_\q^{-1}[\OO]) \\
                                                      & \setst (F,B,\OO) \in \mathcal F, (F_\q,B_\q,\OO_\q) \in \mathcal F_{\q} \}
      \end{array}$$
  \end{itemize}
  The sets $\FF_\p$ form a big-step semantics for symbolic execution of probabilistic programs:
  \begin{restatable}{theorem}{symbexsemantics}\label{thm:symbexsemantics}
  For any program $\p \in \P$, there is a one-to-one correspondence between final configurations $(\sub,\idx,\pc,\po) \in \Gamma_\p$ and triples $(F,B,\OO) \in \FF_\p$ such that $F = \sinterpretk\sub\idx$, $B = \sinterpret{\pc}$, and $\OO = \sinterpret {\po}$.
  \end{restatable}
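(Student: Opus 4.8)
The plan is to prove the correspondence by structural induction on $\p$, realizing it as the interpretation map
\[ \Phi \colon (\sub,\idx,\pc,\po) \longmapsto \bigl(\sinterpretk\sub\idx,\ \sinterpret\pc,\ \sinterpret\po\bigr), \]
and showing that $\Phi$ cuts down to a bijection $\Gamma_\p \to \FF_\p$. For each syntactic form I would establish three facts: \emph{soundness}, that $\Phi$ sends every final configuration into $\FF_\p$; \emph{completeness}, that every triple of $\FF_\p$ is the image of some final configuration; and \emph{injectivity}, that distinct final configurations yield distinct triples. Because the inductive clauses defining $\FF_\p$ stand in one-for-one correspondence with the transition rules of \cref{fig:symos}, these three properties can be propagated in lockstep through the induction, so the bijection is built up constructively.

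The base cases \pskip, \assign, \psample, and \observe\ are direct computations. In each, the initial configuration admits a single transition (or none, for \pskip), so $\Gamma_\p$ is a singleton, and I would evaluate $\Phi$ on it using \Cref{lem:substitutionlemmaexp} together with its evident Boolean analogue $\sinterpret{\app\initsub\bexp} = \interpret\bexp \times \R^\omega$. For instance, for \psample\ the unique final configuration is $(\initsub\update i{\y_0},1,\strue,\strue)$, and unfolding \Cref{def:substitutioninterpretation} shows $\sinterpretk{\initsub\update i{\y_0}}1 = \sample i$, which is precisely the single element of $\FF_{\psample}$; the remaining base cases are analogous.

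The inductive cases $\seq{\p_1}{\p_2}$, $\pif\bexp{\p_1}{\p_2}$, and $\while\bexp\p$ all require executing a subprogram from a \emph{non-initial} configuration, whereas the clauses of $\FF_\p$ compose executions that each start afresh. The engine bridging this gap is a \emph{composition lemma}: if $(\q,\initsub,0,\strue,\strue) \trCl (\pskip,\sub_2,\idx_2,\pc_2,\po_2)$ is a fresh run of $\q$, then running $\q$ from an arbitrary configuration $(\sub_1,\idx_1,\pc_1,\po_1)$ terminates in the composite configuration obtained by substituting $\sub_1$ into $\sub_2$, adding the indices, and pulling the fresh path condition and observation back through $\sub_1$. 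At the level of interpretations this gives
\[ \sinterpretk\sub\idx = \sinterpretk{\sub_2}{\idx_2}\circ\sinterpretk{\sub_1}{\idx_1}, \qquad \sinterpret\pc = \sinterpret{\pc_1}\cap(\sinterpretk{\sub_1}{\idx_1})^{-1}[\sinterpret{\pc_2}], \]
and the analogous identity for $\po$. These are exactly the shapes $F_2\circ F_1$ and $B_1\cap F_1^{-1}[B_2]$ appearing in $\FF_{\seq{\p_1}{\p_2}}$, so combining the composition lemma with the induction hypotheses for $\p_1$ and $\p_2$ settles sequencing; the $\pif\bexp{\p_1}{\p_2}$ case is the same argument in which the guard merely seeds the path condition with $\interpret\bexp\times\R^\omega$ (resp.\ its complement) while leaving the substitution and index untouched.

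For $\while\bexp\p$ I would add a second, numerical induction on the iteration count $m$: rule \Rule{iter-F} yields the single $m=0$ trace matching the base triple $(\val\mapsto\val,\compl{\interpret\bexp},\statespace)$, while each application of \Rule{iter-T} unfolds one iteration into $\seq\p{\while\bexp\p}$, so a trace with $m+1$ iterations is handled by the sequencing argument above followed by the inner hypothesis for $m$ iterations; this reproduces one more application of $\mathbb F_{\bexp,\p}$ and matches $\mathbb F_{\bexp,\p}^{m}$ of the base, giving the countable union. I expect the composition lemma to be the main obstacle: its statement and proof demand careful bookkeeping of the sampling index and of the stream left-shift built into $\sinterpretk\sub\idx$ (\Cref{def:substitutioninterpretation}), and one must check that the preimages under $\sinterpretk{\sub_1}{\idx_1}$ line up exactly with the intersections prescribed by $\FF_\p$. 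Injectivity then comes for free, since the complementary guards in \refRules{if-T}{if-F} (and in \Rule{iter-T}/\Rule{iter-F}) force the interpreted path conditions of sibling branches to disagree.
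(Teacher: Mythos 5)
Your architecture is essentially the paper's: the same map $\Phi$, a composition lemma for relaunching subprograms from non-initial configurations (this is exactly the paper's \Cref{lem:canonicalexecution}), and structural induction for surjectivity with an inner induction on the iteration count for \syntax{while}. The genuine gap is your final claim that injectivity ``comes for free.'' The complementary guards in \Rule{if-T}/\Rule{if-F} and \Rule{iter-T}/\Rule{iter-F} give you that two distinct final configurations $(\sub,\idx,\pc,\po)$ and $(\sub',\idx',\pc',\po')$ carry conjuncts $\sbexp$ and $\sneg\sbexp$ in their respective path conditions, hence $\sinterpret\pc\cap\sinterpret{\pc'}=\emptyset$. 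But disjointness of two sets does not mean they ``disagree'': both may be empty, in which case $\Phi$ can send the two configurations to literally the same triple. The paper's own proof exhibits a counterexample,
$$ \syntax{if(x<0)\{ if(1=0) x:=42 else x:=0 \} else \{ if(1=0) x:=42 else x:=1 \}}, $$
which has \emph{four} final configurations but only \emph{three} elements in $\FF_\p$: the two traces guarded by $\syntax{1=0}$ both have unsatisfiable path conditions and both realize the triple $(x\mapsto 42,\emptyset,\R^{1+\omega})$, which collapses to a single element because $\FF_\p$ is a set.

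So the literal map $\Phi$ is not injective, and the theorem only holds after identifying, on both sides, all configurations and triples whose (interpreted) path condition is empty --- a convention the paper states immediately after the theorem statement and which your proof must invoke explicitly. The correct form of your injectivity step is: distinct traces have disjoint interpreted path conditions, hence either both are empty (and the configurations are identified with each other, as are their images) or the images under $\Phi$ differ in their second component. Without this quotient the induction cannot close, since $\Gamma_\p$ distinguishes traces that $\FF_\p$ silently merges. The rest of the proposal --- the base cases via \Cref{lem:substitutionlemmaexp} and its Boolean analogue, the composition lemma with its index arithmetic and preimage bookkeeping, and the $m$-fold unfolding matching $\mathbb F_{\bexp,\p}^m$ --- is sound and matches the paper's development.
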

  \noindent
  In this correspondence, we consider all final configurations $(\sub,\idx,\pc,\po)\in\Gamma_\p$ with unsatisfiable path condition, i.e., $\sinterpret\pc = \emptyset$, equivalent. 
  Similarly, all triples $(F,B,\OO)$ for which $B=\emptyset$ are considered equivalent.
  The proof is in \Cref{app:symbexsemanticsproof}.

  \section{Correctness} \label{sec:correctness}
The symbolic execution engine described in the previous section is now proven correct with respect to a denotational semantics.
Following Kozen \cite{kozen1979semantics}, probabilistic programs are mappings of measures on the Borel measurable space of $\R^n$, where $n$ is the number of program variables.
Observe statements were not considered in his work. 
They \emph{have} been studied \cite{vangael2013} in this context of measure transformer semantics, but in the absence of unbounded loops.

To be fully precise, we need to recall some definitions from measure theory.
A \emph{measurable space} $(X,\Sigma)$ is a set $X$ equipped with a $\sigma$-\emph{algebra} $\Sigma$, i.e., a set $\Sigma \subseteq \powerset {X}$ that \rom{1} contains the emptyset, \rom{2} is closed under complements in $X$, and \rom{3} is closed under countable union.
Elements of $\Sigma$ are called \emph{measurable sets}.
The \emph{Borel} $\sigma$-algebra $\BB(X)$ is the one generated by the open sets of $X$.
Whenever we say that functions or sets are measurable, we mean with respect to the Borel $\sigma$-algebras.
A \emph{(sub)probability measure} on $(X,\Sigma)$ is a function $\mu : \Sigma \to \interval 01$ such that $\mu (\emptyset) = 0$ and
$\mu (\bigcup_{i \in I} A_i) = \sum_{i \in I} \mu(A_i)$ for any
countable disjoint family of sets $(A_i)_{i\in I} \subseteq \Sigma$.
We let $\ms{X}$ denote the set of measures on the Borel $\sigma$-algebra of $X$; this makes $\ms{\R^n}$ the state space in the denotational semantics.

The denotational semantics of a program
$\p$ is a function
$\sem\p : \ms{\R^n} \to \ms{\R^n}$, pushing a measure
corresponding to the current program state forward, according to the
statement being interpreted. The inductive definition is as follows:
\begin{itemize}[topsep=3pt]
\item \emph{Skip} does not change the given measure:
  $\sem{\pskip}(\mu)=\mu$.
\item For an \emph{assignment}, let $\asgn^i_\exp$ be the function that updates the $i$-th value appropriately:
$$ \asgn^i_{\exp} : \R^n \to \R^n, \quad (x_1,\ldots,x_n) \mapsto (x_1, \ldots, x_{i-1}, \interpret{\exp} (x_1,\ldots,x_n), x_{i+1}, \ldots, x_n ) $$
This function is measurable, so its preimages can be measured by $\mu$.
Thus, the semantics of assignments as \emph{pushforward} measures
$ \sem{\assign}(\mu) := \pushforward{\mu}{(\asgn^i_{\exp})} $
is well-defined.
Intuitively, the measure $\sem{\assign}(\mu)$ measures with $\mu$ the set of values in $\R^n$ that lead to appropriately updated values for the $i$-th variable.
\item \emph{Sampling} updates the measure component of the
  corresponding variable with the distribution measure $\lambda$ to be sampled from:
  $$ \sem{\psample}(\mu) : A_1 \times \dots \times A_n \mapsto \mu(A_1 \times \dots \times A_{i-1} \times \R \times A_{i+1} \times \dots \times A_n) \cdot \lambda (A_i).$$
  For $\lambda$, we use the Lebesgue measure on the unit interval for uniform continuous sampling.
  Other measures can be used for other primitive sampling statements.
  The measure $\sem{\psample}(\mu)$ here is defined on the rectangles of $\R^n$, and by
Carathéodory's extension theorem, defines a unique measure on all of
$\R^n$.
\item When \emph{observing}, we restrict the measure to the observed
  measurable set.  For measurable $B\subseteq \R^n$, let
  $\restrictMeasure B (\mu)$ denote the subprobability measure
  $A \mapsto \mu(A \cap B)$.  Then
$ \sem{\observe} (\mu) := \restrictMeasure{\interpret{\bexp}}(\mu).$
Note that this measure is not normalized.
\item \emph{Sequencing} is just composition:
  $ \sem{\seq \p\q} = \sem{\q} \circ \sem\p$.
\item \emph{Branching} restricts the measure of one branch (resp. the other) to the
  measurable subset where the condition is true (resp. false).  Formally,
$$ \sem{\pif {\bexp} {\p_1} {\p_2}}(\mu) = (\sem{\p_1} \circ \restrictMeasure{\interpret{b}})(\mu) + (\sem{\p_2} \circ \restrictMeasure{\interpret{\pneg b}})(\mu) $$
This sum of measures is
 setwise.  The measure is first restricted by
$\restrictMeasure {\interpret{\bexp}}$ (or
$\restrictMeasure{\interpret{\pneg \bexp}}$) to a
subprobability measure over the part of the state space where the
condition is true (or false) and then passed on to be transformed by $\sem{\p_1}$ (or $\sem{\p_2}$).
\item Interpreting \emph{iterations} as repeated unfolding of if
  statements, the infinite sum
$$ \sem{\while {\bexp} \p}(\mu) = \Big( \sum_{m=0}^{\infty} \restrictMeasure {\interpret{\pneg \bexp}} \circ (\sem\p \circ \restrictMeasure {\interpret{\bexp}})^m \Big) (\mu)$$
describes the semantics of while loops.
\end{itemize}

\noindent
Now $\sem\p$ is a positive
operator of norm at most one for all $\p$ defined by the
grammar in \Cref{sec:syntax}.
This means that any subprobability measure is mapped to a
subprobability measure \cite{barthe2020foundations}.
Without while
and observe statements, this norm would be exactly one.  Intuitively,
this is because programs would then always terminate, and no
probability mass would ever be lost either by diverging while loops or
by conditional probability.
The measure semantics is not normalized.

  Recall that $\Gamma_\p$ is the set of final configurations of symbolic execution of a program $\p$.
  \begin{restatable}[Correctness of Symbolic Execution of Probabilistic Programs]{theorem}{mainresult}\label{thm:mainresult}
    Let $\p \in \P$ be a program, $\mu \in \ms{\R^n}$ a distribution measure over the input variables, and $A \subseteq \R^n$ a measurable set.
    Then
    $$ \sem \p (\mu) (A) = \sum_{(\sub,\idx,\pc,\po) \in \Gamma_\p} {(\mu\otimes\infleb)} \big(\sinterpretk\sub\idx^{-1}[A \times \R^\omega] \cap \sinterpret\pc \cap \sinterpret\po \big) $$
  \end{restatable}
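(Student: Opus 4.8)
The plan is to first rewrite the right-hand side over the inductively defined family $\FF_\p$ rather than over the operational set $\Gamma_\p$, and then argue by structural induction on $\p$ with the input measure $\mu$ left universally quantified. By \Cref{thm:symbexsemantics} each $(\sub,\idx,\pc,\po)\in\Gamma_\p$ corresponds to a unique $(F,B,\OO)\in\FF_\p$ with $F=\sinterpretk\sub\idx$, $B=\sinterpret\pc$, $\OO=\sinterpret\po$, so every summand equals $(\mu\otimes\infleb)(F^{-1}[A\times\R^\omega]\cap B\cap\OO)$; the terms collapsed by the theorem's equivalence (those with $B=\emptyset$) contribute $0$ and may be dropped. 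It then suffices to prove, for every $\p$ and every $\mu\in\ms{\R^n}$, that
\[ \sem\p(\mu)(A)=\sum_{(F,B,\OO)\in\FF_\p}(\mu\otimes\infleb)\big(F^{-1}[A\times\R^\omega]\cap B\cap\OO\big). \]
Keeping $\mu$ arbitrary is exactly what lets the inductive hypothesis be reapplied to the intermediate measures produced in the compound cases.

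The base cases match the clauses of $\sem\cdot$ against the singleton families $\FF_{\pskip},\FF_{\assign},\FF_{\psample},\FF_{\observe}$. Inaction, assignment, and observe are immediate once one notes that the relevant $F$ alters only the first $n$ coordinates (or none) and leaves the stream fixed, so $\mu\otimes\infleb$ factorizes with $\infleb(\R^\omega)=1$. The only delicate base case is sampling: for a measurable rectangle $A$ I would expand $\sample i^{-1}[A\times\R^\omega]$, using that $\sample i$ copies the first stream coordinate into variable slot $i$ and left-shifts the remaining stream; integrating against $\mu\otimes\infleb$ then splits into a $\mu$-factor on the untouched variables and a single $\lambda$-factor on the consumed sample, reproducing $\sem{\psample}(\mu)(A)$. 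Agreement on rectangles extends to all measurable $A$ by uniqueness of measures.

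For branching I would split $\FF_{\pif\bexp{\p_1}{\p_2}}$ into its two unioned families and use that intersecting $\mu\otimes\infleb$ with $\interpret\bexp\times\R^\omega$ (resp. its complement) yields $\restrictMeasure{\interpret\bexp}(\mu)\otimes\infleb$ (resp. $\restrictMeasure{\interpret{\pneg\bexp}}(\mu)\otimes\infleb$); the two resulting sums are exactly the inductive hypotheses whose sum is $\sem{\pif\bexp{\p_1}{\p_2}}(\mu)(A)$. For sequencing I would fix a first triple $(F_1,B_1,\OO_1)\in\FF_{\p_1}$, put $\nu_1(D):=(\mu\otimes\infleb)(B_1\cap\OO_1\cap F_1^{-1}[D])$, and use $(F_2\circ F_1)^{-1}=F_1^{-1}\circ F_2^{-1}$ together with commutation of preimage and intersection to rewrite the inner sum over $\FF_{\p_2}$ as $\sum_{(F_2,B_2,\OO_2)}\nu_1(F_2^{-1}[A\times\R^\omega]\cap B_2\cap\OO_2)$. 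If $\nu_1$ has product form $\mu_1\otimes\infleb$, the $\p_2$ hypothesis turns this into $\sem{\p_2}(\mu_1)(A)$; summing over $\FF_{\p_1}$, noting that $\mu_1$ sums to $\sem{\p_1}(\mu)$ by the $\p_1$ hypothesis, and using linearity of the transformer $\sem{\p_2}$ (extended to countable sums of measures) gives $\sem{\p_2}(\sem{\p_1}(\mu))(A)$. Iteration then unfolds $\FF_{\while\bexp\p}=\bigcup_m\mathbb F^m_{\bexp,\p}\{(\id,\bexpcset,\statespace)\}$ as a union graded by the iteration count $m$; each graded piece is an $m$-fold composition handled by the sequencing and branching cases via a sub-induction on $m$, and regrouping the nonnegative path sum by $m$ (Tonelli) matches the setwise infinite sum defining $\sem{\while\bexp\p}$.

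The main obstacle is the product-form claim $\nu_1=\mu_1\otimes\infleb$ underlying the sequencing step. This is where the sampling-index bookkeeping does the real work: $F_1=\sinterpretk{\sub_1}{\idx_1}$ emits as its tail exactly the input samples from index $\idx_1$ onward, whereas $B_1$, $\OO_1$, and the first $n$ outputs of $F_1$ depend only on the initial state and the first $\idx_1$ samples. Since $\infleb$ is the infinite product of the uniform measure on $[0,1]$, the samples at indices $\ge\idx_1$ are independent of everything $F_1$ reads and are themselves distributed as $\infleb$; this independence and shift-invariance is precisely what forces $\nu_1$ to factor, with first marginal $\mu_1$ accumulating to $\sem{\p_1}(\mu)$. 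Making this factorization fully rigorous — isolating the fresh-sample block, verifying its independence from the constraints $B_1\cap\OO_1$, and checking shift-invariance of $\infleb$ — is the crux on which the entire induction rests.
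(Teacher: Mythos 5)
Your proposal is a genuinely different route from the paper's. The paper does not prove the measure-level identity $\sem\p(\mu)(A)=\sum_{(F,B,\OO)\in\FF_\p}(\mu\otimes\infleb)(F^{-1}[A\times\R^\omega]\cap B\cap\OO)$ by direct structural induction; instead it interposes Kozen's \emph{operational} semantics, a pointwise partial function $\f_\p:\statespace\rightharpoonup\statespace\cup\{\abort\}$ extended with observe. It then splits the work into three lemmas: (a) $(\mu\otimes\infleb)(\f_\p^{-1}[B])=(\sem\p(\mu)\otimes\infleb)(B)$, proved by structural induction on cylinder sets and extended to all Borel sets via a Dynkin-system ($\pi$-$\lambda$) argument; (b) a purely pointwise correspondence showing $\f_\p^{-1}[A\times\R^\omega]=\bigcup_{(F,B,\OO)\in\FF_\p}F^{-1}[A\times\R^\omega]\cap B\cap\OO$; and (c) disjointness of the path-condition sets $B$ across distinct triples, so that the measure of the union is the sum of the measures. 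The effect is that all of the measure theory is quarantined in (a), where the intermediate measures $\restrictMeasure{\interpret\bexp}(\mu)$, $\sem{\p_1}(\mu)$, etc.\ are threaded through the induction exactly as you propose to thread $\mu_1$, while the passage from one execution to a sum over paths is handled set-theoretically. Your route collapses (a) and (b) into a single induction and consequently must do the measure theory inline at every sequencing and iteration step; this is where your product-form claim $\nu_1=\mu_1\otimes\infleb$ enters, and you are right that it is the crux. Two things you would still have to nail down. First, the factorization is not derivable from the definition of $\FF_{\p_1}$ alone: the triples $(F_1,B_1,\OO_1)$ are a priori arbitrary measurable data, and you need the structural invariant that $B_1$, $\OO_1$, and the first $n$ outputs of $F_1$ depend only on the first $n+\idx_1$ coordinates while the tail of $F_1$ is the left shift by $\idx_1$ --- this must be imported via \Cref{thm:symbexsemantics} together with an (easy but unproven) invariant that a final configuration with sampling index $\idx$ mentions only $\y_0,\dots,\y_{\idx-1}$. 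Second, interchanging $\sem{\p_2}$ with the countable sum $\sum_1\mu_1=\sem{\p_1}(\mu)$ requires $\sem{\p_2}$ to be countably additive as an operator on $\ms{\R^n}$ (true for this semantics, by monotone convergence through pushforwards, restrictions, and sums, but it is a lemma), and the rewriting of the single sum over $\FF_{\seq{\p_1}{\p_2}}$ as a double sum over pairs silently uses the disjointness property (the paper's \Cref{lem:disjointpathconditions}) to rule out double counting when distinct pairs collapse to one triple. With those pieces supplied, your argument goes through; what the paper's detour through $\f_\p$ buys is that the delicate independence/shift-invariance reasoning is replaced by a one-time cylinder-set computation, at the cost of defining and relating an extra semantics.
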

  \noindent A friendly reminder that the measure $\lambda$ was some chosen measure implicitly used in the denotational semantics; $\infleb$ denotes the product measure of infinitely many copies of it.
  \begin{proof}[Sketch]
  For every program $\p$, there is a function $f_\p$ such that
  \begin{itemize}
    \item $(\mu\otimes\infleb) \Big(\f^{-1}_{\p}[A\times\R^{\omega}]\Big) = \sem \p (\mu) (A)$, and
    \item $(\mu\otimes\infleb) \Big(\f^{-1}_{\p}[A\times\R^{\omega}]\Big) = \sum_{(F,B,\OO) \in \FF_\p} (\mu\otimes\infleb)(F^{-1}[A\times\R^{\omega}] \cap B \cap \OO)  $.
  \end{itemize}
  The function $f_\p$ is basically the operational semantics defined in \cite{kozen1979semantics}, but extended to an observe construct that rejects unsatisfied observed formulas.
  The theorem is proven by chaining these two equalities and applying \Cref{thm:symbexsemantics}.
  See \Cref{app:mainresultproof} for the full proof.
  \end{proof}

\section{Implementation and Experiments} \label{sec:tool}

We have developed \toolname as a prototype implementation of the
symbolic execution technique presented in
\Cref{sec:symbex}\footnote{Source code for \toolname and all
  experiments are available on Zenodo~\cite{symProbArtifact2023}.}.
\toolname~takes as input a probabilistic program written in the
language described in \Cref{sec:syntax} (up to some natural
imperative-style syntactic additions such as keywords for \emph{else}
and delimiter use of parentheses and braces). \toolname~performs
\emph{bounded} symbolic execution, meaning that all while loops are
unrolled a finite number of times, to ensure termination. Finite loops
are fully unrolled while all other loops are unrolled a configurable,
yet fixed, number of times.  \toolname~reports the final
configuration: the final substitution $\sigma$, the amount of samples,
the path condition $\pc$, and the path observation $\po$.  All
symbolic configurations reported by \toolname are expressed as
uninterpreted symbolic expressions.  \toolname~is written in Rust in
around 2,000 lines of code and uses Z3~\cite{demoura2008} for real
numbers to determine branch satisfiability.

\begin{table*}[t]
  \centering
    \begin{tabularx}{.9\textwidth}{@{}l RRRCR @{}}
        \toprule
        &\multicolumn{2}{c}{Number of Paths} \\ \cmidrule{2-3}
        \textbf{Case Study} & \textbf{Actual} & \textbf{Discarded} & \textbf{Samples} & \textbf{Lines} & \textbf{Time (sec.)}\\ \midrule
        BurglarAlarm & 4 & 12 & 4 & 26 & 0.31\\
        DieCond & 20 & 0 & 20 & 17 & 2.15 \\
        Grass & 28 & 36 & 6 & 21 & 0.80 \\
        MurderMystery & 2 & 2 & 2 & 12 & 0.06 \\
        NeighborAge & 4 & 4 & 4 & 14 & 0.10 \\
        NeighborBothBias & 7 & 5 & 4 & 19 & 0.24 \\
        NoisyOr & 256 & {--} & 8 & 37 & 2.33 \\
        Piranha & 3 & 1 & 2 & 12 & 0.07 \\
        Random Z2 Walk (1) & 16 & {--} & 4 & 14 & 0.19 \\
        Random Z2 Walk (2) & 52 & {--} & 6 & 14 & 0.60 \\
        Random Z2 Walk (4) & 712 & {--} & 10 & 14 & 8.37 \\
        Random Z2 Walk (8) & 159,436 & {--} & 18 & 14 & 2167.90 \\
        SecuritySynthesis & 1 & 0 & 8 & 14 & 0.04 \\
        TrueSkillFigure9 & 1 & 0 & 9 & 20 & 0.03 \\
        TwoCoins & 3 & 1 & 2 & 6 & 0.05 \\
        \bottomrule
      \end{tabularx}
      \medskip
    \caption{Performance metrics for \toolname on a series of case studies. For the ``Random Z2 Walk ($i$)'' cases, $i$ refers to the number of times the main \texttt{while} loop was unrolled.}
    \label{tab:casestudies}
    \vspace{-6mm}
 \end{table*}

 We have executed
 \toolname on a series of examples sourced from
 PSI~\cite{gehr2016psi}, R2~\cite{nori2014}, ``Fun'' programs from
 Infer.NET~\cite{InferNET18}, and Barthe et al.~\cite{barthe2020foundations}.
 We summarize our findings
 in Table~\ref{tab:casestudies}. All the experiments were done on a
 machine with 3.3GHz Intel Core i7-5820K and 32 GB of RAM, running
 Linux 6.3.2-arch1-1.

 One feat of symbolic execution of probabilistic programs that \toolname~implements, is that paths with an unsatisfiable \emph{observe} statement can be filtered out, since they have zero likelihood in the probabilistic model.
\toolname~therefore \emph{discards} such paths.

 For each experiment, we report the
 following metrics: the number of traces explored, the number of traces which were discarded
 due to a failed observe statement (`{--}' if there were no observe statements in the program), the maximum number
 of random samples drawn, the number of lines of code, and the
 time \toolname took to explore all paths.

 We make a few general observations about the results. Outside
 of the DieCond, Grass, NoisyOr, and Random Z2 Walk (4,8) examples,
 \toolname terminates in under a second. While these are relatively
 small examples, they still showcase a range of path counts and number
 of random samples drawn. Additionally, of the case studies that had observe statements present, many paths were
 discarded due to reaching an unsatisfiable observe statement.
 The Grass case study, for example, which involves six Bernoulli samples about weather conditions and such, had 36 paths discarded.
 In Section~\ref{sec:conclusions}, we
 discuss how we might utilize this information in future work to optimize the performance
 of probabilistic programs with observe statements.

 \section{Related Work} \label{sec:relwork}
 \vspace*{-1mm}
 Symbolic transition systems as described in this work have recently
 been formalized in a non-probabilistic setting by de Boer and
 Bonsangue \cite{deboer2021symbolic}.  From that starting point, this
 work extends to sampling in probabilistic programming by
 incorporating symbolic random variables $\{\y_0,\y_1,\dots\}$ in
 the symbolic substitutions and path conditions.  Furthermore, we
 introduced \emph{path observations} to keep track of observe
 statements.

 Denotational semantics for Bayesian inference is an active research
 area
 \cite{staton16semantics,staton17commutative,dahlqvist18borelkernelapprimations,holtzen2020scaling}. Mostly,
 the focus has been on \emph{discrete} probabilistic programs
 \cite{holtzen2020scaling,olmedo2018conditioning}, whereas we support
 sampling of both discrete \emph{and} continuous distributions,
 thereby generalizing the probabilistic choice based on discrete
 sampling used in these works.

Staton \cite{staton17commutative,staton16semantics} describes
\emph{observe-from} statements in his denotational semantics as an
encoding for a \emph{score} construct used to attach a likelihood
scoring to an execution trace.  The construct
$\syntax{observe}~\exp~\syntax{from}~\mathcal D$, where
$\mathcal D$ is some distribution, is then sugar for
$\syntax{score}~f_{\mathcal D}(\exp)$, where $f_{\mathcal D}$ is the
probability density function for the distribution $\mathcal D$.  The
score value in that semantics is therefore akin to the probability
measure of our path observation.  Staton's work considers language
semantics of a functional nature, where the difficulty mainly lies in
handling higher-order functions.  In an imperative language,
this is generally not a major concern.  This allows us to consider the
more general construct of observing Boolean formulas (of positive
measure), rather than observing fixed samples (which may have zero
measure).  To the best of our knowledge, our work provides the first
semantics for Boolean observe statements as a forward state
transformer (as proposed by Kozen \cite{kozen1979semantics}) for an
imperative probabilistic programming language in the presense of unbounded loops.

Sampson et al.~\cite{sampson2014expressing} used symbolic execution to
transform probabilistic programs into a Bayesian network.  Their
semantics was aimed at verification of certain probabilistic
assertions, however, and lacked a formal foundation in measure
theory. Luckow et al.\ combined symbolic
  execution with model counting to analyze programs with discrete
  distributions and nondeterministic choice \cite{luckow2014exact}. Their work used
  schedulers to reduce Markov Decision Processes to Markov Chains, in
  contrast to our work with continuous distributions and observe
  statements, founded in measure theory.  Susag et
al.~\cite{susag2022} considered symbolic execution for programs with
random sampling to automatically verify quantitative correctness
properties over unknown inputs.  While they also used symbolic random
variables, they were more focused on verifying correctness properties
of ``real-world'' programs (e.g., written in C++) that use random
sampling. They solely considered discrete distributions
and did not support observe statements.

Gehr et al.~\cite{gehr2016psi} developed PSI, \emph{probabilistic
  symbolic inference}, a tool that enables programmers to perform
posterior distribution, expectation, and assertion queries through
symbolic inference.  Their symbolic reasoning engine works on symbolic
representations of probability distributions, whereas we have symbolic
terms which are interpreted as \emph{random variables} of which we do
not know the distribution in principle.  By default, PSI computes a
symbolic representation of the joint posterior distribution
represented by the given probabilistic program.  In contrast,
\toolname explores all paths through a given probabilistic program and
reports on the path condition, substitution map, and path observation
of each path.  We see these two tools as complementary, as inference
is only one aspect of probabilistic programming.

\section{Conclusion and Future Work} \label{sec:conclusions}
We have defined new symbolic semantics for imperative probabilistic
programs supporting forward execution and conditioning (Bayesian
inference), which we proved to be a big-step semantics for symbolic execution of probabilistic programs.
To support Bayesian inference, we extended Kozen's denotational semantics to
include \lstinline{observe} statements of positive-measure events.
Significantly, the symbolic semantics thus theoretically supports implementation of a symbolic
executor for imperative probabilistic programs with continuous domain
variables.  We introduced \emph{path observations} to keep track of
the part of the initial state space that lead to acceptance of observe
statements for each symbolic trace.
The symbolic transition system is implemented in our
prototype tool \toolname. Its effectiveness has been demonstrated on
 example programs, producing results with path
conditions and path observations that correctly represent (prior) path
probabilities and likelihoods of the symbolic traces, as expected in
light of \Cref{thm:mainresult}.
Since path conditions represent priors and path observations represent likelihoods, we believe it to be beneficial to conceptually separate the two in a symbolic executor.

Interestingly, our semantics and its accompanying correctness results
enable one to prove the correctness of certain program
transformations.  More generally, we believe the symbolic semantics
has potential applications in model checking tools, deductive proof
systems, optimizations, and reasoning about termination.  For example,
one can exploit this theory to commute sample and observe statements
under appropriate substitutions.  In complex probabilistic models, the
placement of observe statements may then become an optimization
problem. We plan to explore such program transformations both
theoretically and experimentally in the future.
Moreover, we intend to investigate the limiting behavior of our correctness result, which contains an infinite sum in the presence of unbounded loops.
One naturally wonders how fast the sum approximates the true posterior of the program, and how different structures in a program influence the speed of this approximation.

A shortcoming of our semantics for \emph{observe} statements is that
zero-measure events cannot be observed in our language.
Denotationally, the measure would then always yield zero;
operationally, \emph{almost all} simulations will be aborted.
Zero-measure observed Boolean conditions may be included in future
work, for example by considering \emph{measure couplings} and
\emph{disintegrations} \cite{dahlqvist18borelkernelapprimations}.
\paragraph{Data availability.} The source code for \toolname and the experiments we performed with it are available on Zenodo~\cite{symProbArtifact2023}.
\bibliographystyle{splncs04}
\bibliography{ref.bib}

\appendix

\section{Proofs}
\subsection{Proof of \Cref{lem:substitutionlemmaexp}}
  \substitutionlemmaexp*
  \noindent For \emph{Boolean} expressions, we moreover have the following:
  \begin{lemma}[Substitution lemma for Boolean expressions]\label{lem:substitutionlemmabexp}
    For all $\bexp \in \B$, $\sinterpret{\app\initsub\bexp} = \bexpset$.
  \end{lemma}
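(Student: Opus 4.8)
The plan is to prove the identity by structural induction on $\bexp \in \B$, with two ingredients driving every case: the substitution lemma for expressions (\Cref{lem:substitutionlemmaexp}) and the observation that the initial substitution $\initsub$ acts as the identity on $\E$, so that $\app\initsub\exp = \exp$ for every $\exp \in \E$ (viewed inside $\SE$). I would record this identity-on-expressions fact once at the outset: since $\initsub(\x_i) = \x_i$ and $\initsub$ distributes over operators, a trivial induction gives $\app\initsub\exp = \exp$, whence $\sinterpret{\app\initsub\exp} = \sinterpret{\exp}$, which \Cref{lem:substitutionlemmaexp} rewrites as $\rho \mapsto \interpret\exp(\res\rho n)$. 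This is exactly what lets me pass between the symbolic interpretation $\sinterpret{\cdot}$ on $\R^{n+\omega}$ and the concrete interpretation $\interpret{\cdot}$ on $\R^n$ in the atomic case.

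For the base cases I would read off the defining clauses directly: $\app\initsub\false = \sfalse$ gives $\sinterpret{\sfalse} = \emptyset = \emptyset \times \R^\omega = \bexpset$, and $\app\initsub\true = \strue$ gives $\sinterpret{\strue} = \R^{n+\omega} = \R^n \times \R^\omega = \bexpset$. The only atomic case with content is $\bexp \equiv \exp_1 \relop \exp_2$, where $\app\initsub\bexp = \exp_1 \relop \exp_2$, so the relational clause for $\sinterpret{\cdot}$ together with the identity-on-expressions remark yields
\[ \sinterpret{\app\initsub\bexp} = \{\rho \in \R^{n+\omega} \setst \interpret{\exp_1}(\res\rho n) \relop \interpret{\exp_2}(\res\rho n)\} = \bexpset, \]
the last equality holding because $\res\rho n$ ranges over all of $\R^n$ while the tail of $\rho$ is left unconstrained, which is precisely the product with $\R^\omega$.

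For the inductive step I would push $\initsub$ and then $\sinterpret{\cdot}$ through each connective and apply the induction hypothesis to the immediate subformulas. For $\bexp_1 \por \bexp_2$ and $\bexp_1 \pand \bexp_2$ the clauses turn $\sinterpret{\cdot}$ into $\cup$ and $\cap$, and the result follows from the product-distributivity identities $(S_1 \times \R^\omega) \cup (S_2 \times \R^\omega) = (S_1 \cup S_2) \times \R^\omega$ and $(S_1 \times \R^\omega) \cap (S_2 \times \R^\omega) = (S_1 \cap S_2) \times \R^\omega$, combined with the interpretation of $\por,\pand$ as union and intersection in $\R^n$. For $\pneg\bexp_1$ the clause gives $\sinterpret{\sneg\app\initsub{\bexp_1}} = \R^{n+\omega} \setminus \sinterpret{\app\initsub{\bexp_1}}$, and the induction hypothesis reduces the goal to the identity $\R^{n+\omega} \setminus (S \times \R^\omega) = (\R^n \setminus S) \times \R^\omega$.

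I do not expect a genuine obstacle; the argument is routine bookkeeping once the induction is set up. The two points demanding care are (i) making the identity-on-expressions observation explicit, so that \Cref{lem:substitutionlemmaexp} is actually applicable in the relational case, and (ii) the product-complement identity $\R^{n+\omega} \setminus (S \times \R^\omega) = (\R^n \setminus S) \times \R^\omega$ underlying the negation case, which holds only because every set we manipulate has the full second factor $\R^\omega$.
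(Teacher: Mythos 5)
Your proposal is correct and follows essentially the same route as the paper: structural induction on $\bexp$, with \Cref{lem:substitutionlemmaexp} discharging the relational atomic case and the set-operation clauses of $\sinterpret{\cdot}$ handling the connectives. The only difference is presentational -- you make explicit the identity $\app\initsub\exp=\exp$ and the product-set identities for $\cup$, $\cap$, and complement, which the paper leaves implicit.
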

  \begin{proof}
    By induction on the structure of expressions $\exp \in \E$ and Boolean expressions $\bexp \in \B$ as presented in \Cref{sec:syntax}.
    There are two base cases for $\exp$:
    \begin{itemize}
      \item For $\exp = \x_i$, we have $ \sinterpret {\x_i} (\rho) = \rho_i = \interpret {\x_i}(\res \rho n) $.
      \item For $\exp = q$: $ \sinterpret {q} (\rho) = q = \interpret {q} (\res \rho n) $.
    \end{itemize}
    There is an induction step for every operator $\op$ of arity $\arop$.
    Each induction step applies $\arop$ induction hypotheses:
    $$ \begin{array}{rll}
      \sinterpret{\op (\exp_1, \ldots, \exp_{\arop})}(\rho) = &
      \opsem ( \sinterpret{\exp_1}(\rho), \ldots, \sinterpret{\exp_{\arop}} (\rho)) & \textup{definition of } \sinterpret {\cdot} \\
      = & \opsem ( \interpret {\exp_1}(\res \rho n), \ldots, \interpret {\exp_{\arop}} (\res \rho n) ) & \textup {IH } \arop \textup { times} \\
      = & \interpret { \op ( \exp_1, \ldots, \exp_{\arop} ) } (\res \rho n) & \textup{definition of } \interpret {\cdot}
    \end{array}$$
    We proceed now with structural induction on $\bexp$.
    The base cases:
    \begin{itemize}
      \item $\bexp = \true$: $\rho \in \sinterpret {\initsub (\true)}=\sinterpret{\strue}$ if and only if $\rho \in (\interpret {\true} \times \R^\omega)$.
      \item $\bexp = \false$: $\sinterpret{\initsub(\false)} = \sinterpret{\sfalse}= \emptyset = \interpret{\false} \times \R^\omega$.
      \item $\bexp = \exp_1 \relop \exp_2$: $\rho \in \sinterpret {\initsub (\exp_1 \relop \exp_2)}$ if and only if $\sinterpret{\exp_1}(\rho) \relop \sinterpret{\exp_2}(\rho)$ (by def.) if and only if $\interpret {\exp_1} (\res \rho n) \relop \interpret {\exp_2} (\res \rho n)$ (by \Cref{lem:substitutionlemmaexp}) if and only if $\rho \in \interpret {\exp_1 \relop \exp_2} \times \R^\omega$ (by def.).

        We thus conclude this case with the identity $\sinterpret{\initsub (\exp_1 \relop \exp_2)} = \interpret {\exp_1 \relop \exp_2} \times \R^\omega$.
      \item $\bexp = \bexp_1 \por \bexp_2$: simply apply IH twice to show that their respective binary unions are equal, and then conclude that
        $$ \sinterpret {\initsub(\bexp_1 \por \bexp_2)} = \interpret {\bexp_1 \por \bexp_2} \times \R^\omega$$
      \item $\bexp = \bexp_1 \pand \bexp_2$ is analogous.
      \item $\bexp = \pneg \bexp_1$ idem.
    \end{itemize}
    This completes the proof.
  \end{proof}

\subsection{Proof of \Cref{thm:symbexsemantics}} \label{app:symbexsemanticsproof}
  Recall that $\Gamma_\p = \{(\sub,\idx,\pc,\po) \setst (\p,\initsub,0,\strue,\strue) \trCl (\pskip,\sub,\idx,\pc,\po) \}$.

  Throughout all the proofs in this appendix, the fixed initial symbolic configuration $(\initsub,0,\strue,\strue)$ is denoted by $\initconfig$ and $\cfg,\cfg',\cfg'',\cfg_1,\cfg_2$ respectively abbreviate configurations $(\sub,\idx,\pc,\po)$, $(\sub',\idx',\pc',\po')$, $(\sub'',\idx'',\pc'',\po'')$, $(\sub_1,\idx_1,\pc_1,\po_1)$, and $(\sub_2,\idx_2,\pc_2,\po_2)$.
  \symbexsemantics*
  \begin{proof}
    For every $\p$, define the mapping
    $$\Phi_\p : \Gamma_\p \to \FF_\p, \quad  (\sub,\idx,\pc,\po) \mapsto (\sinterpretk\sub\idx,\sinterpret\pc,\sinterpret\po) $$
    \Cref{prop:welldefined} says that this mapping is well-defined for all $\p$, i.e., $(\sinterpretk\sub\idx,\sinterpret\pc,\sinterpret\po)$ is actually an element of $\FF_\p$.
    Indeed, take $\q = \pskip$ and use that $\FF_{\pskip}$ is the singleton $\{ (\id_{\statespace},\statespace,\statespace)\}$.
    \Cref{prop:mappingsurjective} says that these mappings are all surjective.

    For injectivity, there are some technical subtleties to be addressed.
    Consider the following example program $\p$:
    $$ \syntax{if(x<0)\{ if(1=0) x:=42 else x:=0 \} else \{ if(1=0) x:=42 else x:=1 \}}$$
    The program contains an \emph{if} statement with a nested \emph{if} statement in both branches.
    Symbolic execution of this program will thus yield \emph{four} execution paths.
    \emph{Two} of these paths have an unsatisfiable path condition due to the Boolean tests $\syntax{1=0}$.
    Furthermore, these two paths \emph{both} transform the input variable $\x$ to the value $42$.
    Hence, the big-step semantics, in contrast, has \emph{three} elements:
    $$ \FF_\p = \{ (x \mapsto 42, \emptyset, \R^{1+\omega}), (x\mapsto 0, \{x<0\}, \R^{1+\omega}), (x\mapsto 1, \{x\geq 0\}, \R^{1+\omega}) \} $$
    where the two branches with unsatisfiable path conditions have ``collapsed'' to the first triple.
    Hence, technically speaking there is no bijection between $\Gamma_\p$ and $\FF_\p$ for this edge case.
    
    This problem could perhaps be solved by considering \emph{disjoint} union of branching in the symbolic semantics, or \emph{multisets}.
    But this would also require loop iterations and sequencing to be done in a disjoint union manner and would result in a lot of bookkeeping.
    To avoid this, we choose instead to identify all program executions (on the side of $\Gamma_\p$ as well as on the side of $\FF_\p$) that have unsatisfiable path conditions.
    This does not influence the well-definedness property, nor the established surjectivity.

    By analysis of the rules in \Cref{fig:symos}, two traces $(\p,\initconfig)\trCl(\pskip,\cfg)$ and $(\p,\initconfig)\trCl(\pskip,\cfg')$ in $\Gamma_\p$ are different iff there is a symbolic Boolean expression $\sbexp$ such that $\pc$ contains $\sbexp$ and $\pc'$ contains $\sneg \sbexp$ as a conjunct.
    This immediately entails $\sinterpret\pc \cap \sinterpret{\pc'} = \emptyset$.
    Then, either both are empty, hence $\pc$ and $\pc'$ are unsatisfiable and therefore considered equal, or their image under $\Phi_\p$ is distinct.
    This shows injectivity of all mappings $\Phi_\p$.
  \end{proof}

  \begin{proposition}[Multi-step backward assimilation into symbolic semantics] \label{prop:welldefined}
    For all $\p,\q\in\P$, if $(\p,\initconfig)\trCl(\q,\sub,\idx,\pc,\po)$ then for all $(F,B,\OO) \in \FF_{\q}$:
    \begin{equation} \label{eq:symbolicmappinggoal}
    (F \circ \sinterpretk\sub\idx,\sinterpret {\pc} \cap {\sinterpretk\sub\idx}^{-1}[B] ,\sinterpret{\po} \cap {\sinterpretk\sub\idx}^{-1}[\OO]) \in \FF_{\p} \end{equation}
  \end{proposition}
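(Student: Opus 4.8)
The plan is a double induction: an outer induction on the length of the trace $(\p,\initconfig)\trCl(\q,\sub,\idx,\pc,\po)$, with the one-step case isolated as a lemma proved by induction on the derivation of a single transition. For the empty trace we have $\q=\p$ and $(\sub,\idx,\pc,\po)=\initconfig$, so $\sinterpretk{\initsub}{0}=\id_{\statespace}$ and $\sinterpret{\strue}=\statespace$; the claimed triple then collapses to $(F,B,\OO)$ itself, which lies in $\FF_\q=\FF_\p$ by hypothesis. For the inductive step I would peel off the \emph{last} transition, writing the trace as $(\p,\initconfig)\trCl(\p'',\cfg'')\tr(\q,\cfg)$, so that the outer induction hypothesis applies to the prefix ending in $(\p'',\cfg'')$.

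Everything then reduces to a single-step lemma: whenever $(\p'',\cfg'')\tr(\q,\cfg)$, every $(F,B,\OO)\in\FF_\q$ admits a witness $(G,C,D)\in\FF_{\p''}$ with
\[
G\circ\sinterpretk{\sub''}{\idx''}=F\circ\sinterpretk{\sub}{\idx},\qquad
\sinterpret{\pc''}\cap{\sinterpretk{\sub''}{\idx''}}^{-1}[C]=\sinterpret{\pc}\cap{\sinterpretk{\sub}{\idx}}^{-1}[B],
\]
and the analogous identity relating $D,\OO$ to the path observations. Granting the lemma, I feed $(G,C,D)$ into the outer induction hypothesis for the prefix; the resulting member of $\FF_\p$ is, by these three matching identities, exactly the triple claimed in \eqref{eq:symbolicmappinggoal}, closing the outer induction.

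To prove the single-step lemma I would case split on the last rule used. The axiom rules are direct. For \Rule{asgn} and \Rule{smpl} the witness is the unique element of $\FF_{\p''}$, and the needed function identities ${(\rho\mapsto\rho\update i{\interpret\exp(\res\rho n)})}\circ\sinterpretk{\sub''}{\idx''}=\sinterpretk{\sub''\update i{\sub''(\exp)}}{\idx''}$ and $\sample i\circ\sinterpretk{\sub''}{\idx''}=\sinterpretk{\sub''\update i{\y_{\idx''}}}{\idx''+1}$ (the latter exploiting the left-shift built into $\sample i$) follow from a substitution-composition identity for expressions, mildly generalizing \Cref{lem:substitutionlemmaexp} from $\initsub$ to an arbitrary $\sub$. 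For \Rule{obs}, \Rule{if-T}, \Rule{if-F} and \Rule{iter-F}, I pick the matching component or branch of $\FF_{\p''}$ and the path-condition/observation bookkeeping lines up after rewriting ${\sinterpretk{\sub''}{\idx''}}^{-1}[\interpret\bexp\times\R^\omega]=\sinterpret{\sub''(\bexp)}$ (the Boolean analogue, generalizing \Cref{lem:substitutionlemmabexp}) and using $\sinterpret{\pc''\sand\sub''(\bexp)}=\sinterpret{\pc''}\cap\sinterpret{\sub''(\bexp)}$ together with the fact that preimages commute with intersection. \Rule{seq-0} is immediate once one unfolds $\FF_{\seq{\pskip}{\q}}=\FF_\q$.

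The structural cases are where I expect the real work. In \Rule{seq-n}, with $\p''=\seq{\r}{s}$ and $\q=\seq{\r'}{s}$ coming from a subderivation $(\r,\cfg'')\tr(\r',\cfg)$, I decompose a given $(F,B,\OO)\in\FF_{\seq{\r'}{s}}$ as $(F_2\circ F_1,\,B_1\cap F_1^{-1}[B_2],\,\OO_1\cap F_1^{-1}[\OO_2])$ with $(F_1,B_1,\OO_1)\in\FF_{\r'}$ and $(F_2,B_2,\OO_2)\in\FF_{s}$, apply the single-step induction hypothesis to $(F_1,B_1,\OO_1)$ to obtain $(G_1,C_1,D_1)\in\FF_{\r}$, and take as witness $(F_2\circ G_1,\,C_1\cap G_1^{-1}[B_2],\,D_1\cap G_1^{-1}[\OO_2])\in\FF_{\seq{\r}{s}}$; checking the three identities is then pure function/set algebra via $(g\circ f)^{-1}=f^{-1}\circ g^{-1}$, associativity of composition, and distributivity of preimages over intersection. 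The \Rule{iter-T} case runs the same way, the witness being the precomposed extra iteration, but it additionally needs the observation that $\FF_{\while{\bexp}{\r}}=\bigcup_m\mathbb F_{\bexp,\r}^m\{\cdots\}$ is closed under one more application of $\mathbb F_{\bexp,\r}$, so that the constructed triple genuinely lands in $\FF_{\p''}$. I expect the main obstacle to be exactly this threading of the induction hypothesis through sequential composition (and its loop analogue) while keeping the accumulated-configuration interpretation $\sinterpretk{\sub''}{\idx''}$ aligned with the semantic operators defining $\FF$; the two substitution-composition identities are routine inductions, but they are the indispensable glue that makes the symbolic and denotational sides comparable.
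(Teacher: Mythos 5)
Your argument is correct, but it is organized differently from the paper's. The paper also inducts on the length of the chain, but it peels off the \emph{first} transition, writing the trace as $(\p,\initconfig)\tr(\p',\cfg')\trCl(\q,\cfg)$; since the suffix then no longer starts from $\initconfig$, the induction hypothesis cannot be applied to it directly, and the paper invokes a separate \emph{Canonical Symbolic Execution} lemma (\Cref{lem:canonicalexecution}) to replace the suffix by an equal-length chain $(\p',\initconfig)\trCl(\q,\cfg'')$ whose endpoint composes back via $\cfg\cfgeq\cfg''\cfgcmp\cfg'$. Its one-step lemma (\Cref{lem:symbolicmappingsmallstep}) is correspondingly stated only for transitions out of $\initconfig$. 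You instead peel off the \emph{last} transition, so the outer induction hypothesis applies verbatim to the prefix, and all the renormalization work is pushed into a one-step lemma for transitions out of an \emph{arbitrary} configuration $\cfg''$. This trades the canonical-execution machinery for a strengthened single-step statement; the price is the same in both designs, namely the substitution-composition identities generalizing \Cref{lem:substitutionlemmaexp,lem:substitutionlemmabexp} from $\initsub$ to arbitrary $\sub$ (these appear in the paper as facts (C) and (D) inside the proof of \Cref{lem:canonicalexecution}, and the identity you use for \refRule{smpl}, $\sample{i}\circ\sinterpretk{\sub''}{\idx''}=\sinterpretk{\sub''\update{i}{\y_{\idx''}}}{\idx''+1}$, checks out against the index bookkeeping of \Cref{def:substitutioninterpretation}). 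Your handling of \refRule{seq-n} and \refRule{iter-T} correctly identifies where the compositional definitions of $\FF_{\seq{\p_1}{\p_2}}$ and $\mathbb F_{\bexp,\q}$ absorb the constructed witness. One practical remark: the paper still needs \Cref{lem:canonicalexecution} independently for the surjectivity argument (\Cref{prop:mappingsurjective}), so your route saves work only if that lemma is replaced elsewhere too; as a proof of \Cref{prop:welldefined} alone, it is self-contained and arguably more direct.
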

  \begin{proof}
    The proof is by induction on the length of the transition chain $(p,\initconfig) \trCl (S',\sub,\idx,\pc,\po)$, where, in the inductive step, we analyze the \emph{first} rule that was used.

    If $(\p,\initconfig) \trCl (\q,\cfg)$ has length zero (from reflexivity) then we know that $\q = \p$ and $\cfg = \initconfig$.
        With this, we may observe that ${\sinterpretk\sub\idx} = \sinterpretk{\initsub}0 = \id_{\statespace}$, and $\sinterpret \pc = \sinterpret \po = \sinterpret \strue = \statespace$.
        Now let $(F,B,\OO) \in \FF_{\q}$.
        The result follows since the triple in \eqref{eq:symbolicmappinggoal} in this case is just $(F,B,\OO)$, and $(F,B,\OO) \in \FF_{\q} = \FF_{\p}$.
    For the inductive step let $(F,B,\OO) \in \FF_{\q}$ be arbitrary and let
    $$(\p,\initconfig) \tr (\p',\cfg') \trCl (\q,\cfg) $$
    be a transition chain of length $\l+1$.
    We cannot immediately apply IH to the transition chain of length $\l$, since $\cfg'$ may not be the initial configuration.
    Instead, we use \Cref{lem:canonicalexecution} to obtain the chain
    $$ (\p',\cfg_0) \trCl (\q,\cfg'') $$
    of length $\l$ such that $\cfg''$ has the following properties:
    \begin{enumerate}[label=(\roman*)]
      \item ${\sinterpretk\sub\idx} = {\sinterpretk{\sub''}{\idx''}} \circ {\sinterpretk{\sub'}{\idx'}} $;
      \item $\sinterpret \pc = \sinterpret {\pc'} \cap {\sinterpretk{\sub'}{\idx'}}^{-1}[\sinterpret{\pc''}]$; and
      \item $\sinterpret \po = \sinterpret {\po'} \cap {\sinterpretk{\sub'}{\idx'}}^{-1}[\sinterpret{\po''}]$.
    \end{enumerate}
    Now, by IH, since $(\p',\initconfig) \trCl (\q,\cfg'')$ is a chain of length $\l$ and $(F,B,\OO) \in \FF_{\q}$:
    $$(F',B',\OO') := (F \circ \sinterpretk{\sub''}{\idx''},\sinterpret {\pc''} \cap {\sinterpretk{\sub''}{\idx''}}^{-1}[B] ,\sinterpret{\po''} \cap {\sinterpretk{\sub''}{\idx''}}^{-1}[\OO]) \in \FF_{\p'}$$
    From the one-step \Cref{lem:symbolicmappingsmallstep}, using the transition $(\p,\initconfig) \tr (\p',\cfg')$, then
    $$ (F' \circ {\sinterpretk{\sub'}{\idx'}}, 
        {\sinterpretk{\sub'}{\idx'}}^{-1} [ B' ] \cap \sinterpret{\pc'}, 
        {\sinterpretk{\sub'}{\idx'}}^{-1} [ \OO' ] \cap \sinterpret{\po'})
        \in \FF_{\p} $$
    We rewrite each of the three components appropriately as follows:
    \begin{enumerate}[label=(\roman*)]
      \item $F' \circ {\sinterpretk{\sub'}{\idx'}} = F \circ {\sinterpretk{\sub''}{\idx''}} \circ {\sinterpretk{\sub'}{\idx'}} = F \circ {\sinterpretk\sub\idx}$;
      \item ${\sinterpretk{\sub'}{\idx'}}^{-1}\Big[\sinterpret {\pc''} \cap {\sinterpretk{\sub''}{\idx''}}^{-1}[B]\Big] \cap \sinterpret{\pc'} = \underbrace{{\sinterpretk{\sub'}{\idx'}}^{-1}[\sinterpret{\pc''}] \cap \sinterpret{\pc'}}_{\sinterpret{\pc}} \cap \underbrace{{\sinterpretk{\sub'}{\idx'}}^{-1}[{\sinterpretk{\sub''}{\idx''}}^{-1}[B]]}_{{\sinterpretk\sub\idx}^{-1}[B]}$, and this is just $\sinterpret {\pc} \cap \sinterpretk{\sub}{\idx}^{-1}[B]$.
      \item Analogous to the previous item, ${\sinterpretk{\sub'}{\idx'}}^{-1} [ \OO' ] \cap \sinterpret{\po'} = \sinterpret \po \cap \sinterpretk{\sub}{\idx}^{-1}[\OO]$.
    \end{enumerate}
    We have thus verified that
    $$ (F \circ \sinterpretk\sub\idx, \sinterpret {\pc} \cap \sinterpretk{\sub}{\idx}^{-1}[B], \sinterpret \po \cap \sinterpretk{\sub}{\idx}^{-1}[\OO]) \in \FF_{\p} $$
    and so the inductive step, and the whole proof with it, is finished.
  \end{proof}
  \begin{lemma}[One-step backward assimilation into symbolic semantics] \label{lem:symbolicmappingsmallstep}
    If $(\p,\initconfig) \tr (\q,\cfg)$ and $(F,B,\OO) \in \FF_{\q}$ then
    \begin{equation}\label{eq:smallstepgoal}
    (F \circ \sinterpretk{\sub}{\idx}, \sinterpretk{\sub}{\idx}^{-1}[B] \cap \sinterpret {\pc}, \sinterpretk{\sub}{\idx}^{-1}[\OO] \cap \sinterpret {\po}) \in \FF_{\p}
    \end{equation}
  \end{lemma}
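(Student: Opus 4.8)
The plan is to argue by induction on the derivation of the one-step transition $(\p,\initconfig)\tr(\q,\cfg)$ and to proceed by case analysis on the rule applied, where $\cfg=(\sub,\idx,\pc,\po)$ is the configuration reached after the step. The only rule with a premise is \refRule{seq-n}, whose premise is a strictly smaller derivation, so this is the only case that will invoke the induction hypothesis; every other rule is an axiom whose conclusion I can read off directly. A useful simplification throughout is that the step starts in the fixed configuration $\initconfig=(\initsub,0,\strue,\strue)$: for \refRule{obs}, \refRule{seq-0}, the two \syntax{if} rules, and the two \syntax{while} rules the resulting substitution is still $\initsub$ at index $0$, so $\sinterpretk{\initsub}{0}=\id_{\statespace}$ and the goal reduces to checking the path-condition and path-observation components.

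For the axiom rules I would read off $\cfg$ and verify membership directly against the defining clause of $\FF_{\p}$. In \refRule{asgn} and \refRule{smpl} we have $\q=\pskip$, hence $(F,B,\OO)=(\id_{\statespace},\statespace,\statespace)$, and the content is to check that the interpreted result substitution is the concrete transformer used in the semantics: $\sinterpretk{\sub}{0}$ equals $\rho\mapsto\rho\update i{\interpret{\exp}(\res{\rho}{n})}$ (using \Cref{lem:substitutionlemmaexp} to rewrite $\sinterpret{\app{\initsub}{\exp}}$), and $\sinterpretk{\sub}{1}$ equals $\sample i$. In \refRule{obs}, \refRule{if-T}, \refRule{if-F}, and \refRule{iter-F} the newly appended guard occurs under $\initsub$, and \Cref{lem:substitutionlemmabexp} gives $\sinterpret{\app{\initsub}{\bexp}}=\bexpset$ and $\sinterpret{\app{\initsub}{\pneg\bexp}}=\bexpcset$; these are exactly the sets intersected into $\FF_{\observe}$, the two branch clauses of $\FF_{\pif\bexp{\p_1}{\p_2}}$, and the $m=0$ summand of $\FF_{\while\bexp{\p}}$ respectively. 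The case \refRule{seq-0} is immediate once one notes $\FF_{\seq{\pskip}{\q}}=\FF_{\q}$.

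The two substantial cases are \refRule{seq-n} and \refRule{iter-T}, both of which start by expanding the given $(F,B,\OO)\in\FF_{\q}$ through the sequencing clause as $(F_2\circ F_1,\,B_1\cap F_1^{-1}[B_2],\,\OO_1\cap F_1^{-1}[\OO_2])$. For \refRule{seq-n}, with $\p=\seq{\p_0}{\q_0}$, $\q=\seq{\p_0'}{\q_0}$ and premise $(\p_0,\initconfig)\tr(\p_0',\cfg)$, I would apply the induction hypothesis to the premise with $(F_1,B_1,\OO_1)\in\FF_{\p_0'}$ to obtain a triple in $\FF_{\p_0}$, then recombine it with $(F_2,B_2,\OO_2)\in\FF_{\q_0}$ via the sequencing clause; matching this with the goal is routine preimage algebra using $(F_1\circ\sinterpretk{\sub}{\idx})^{-1}=\sinterpretk{\sub}{\idx}^{-1}\circ F_1^{-1}$ and distributivity of preimages over intersection. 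For \refRule{iter-T}, with $\p=\while\bexp{\p_0}$ and $\q=\seq{\p_0}{\while\bexp{\p_0}}$, the factor $(F_2,B_2,\OO_2)$ lies in $\FF_{\while\bexp{\p_0}}=\bigcup_m\mathbb{F}_{\bexp,\p_0}^m\{\dots\}$, hence in $\mathbb{F}_{\bexp,\p_0}^m\{\dots\}$ for some $m$. Applying $\mathbb{F}_{\bexp,\p_0}$ once more with the extra body iteration $(F_1,B_1,\OO_1)\in\FF_{\p_0}$ produces $(F_2\circ F_1,\,\bexpset\cap B_1\cap F_1^{-1}[B_2],\,\OO_1\cap F_1^{-1}[\OO_2])$, which is exactly $(F,\,B\cap\bexpset,\,\OO)$ and lives in $\mathbb{F}_{\bexp,\p_0}^{m+1}\{\dots\}\subseteq\FF_{\while\bexp{\p_0}}$; since $\sinterpret{\pc}=\bexpset$ and $\sinterpretk{\initsub}{0}=\id_{\statespace}$ here, this is the required triple.

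I expect \refRule{iter-T} to be the main obstacle: one has to recognize that prepending a single loop unfolding in the transition system corresponds precisely to one more application of the functional $\mathbb{F}_{\bexp,\p_0}$ that defines the loop's symbolic semantics, and in particular that the guard set $\bexpset$ contributed by the transition is absorbed into exactly the slot where $\mathbb{F}_{\bexp,\p_0}$ places it. The \refRule{seq-n} case is the only one needing the induction hypothesis, but its difficulty is purely the bookkeeping of preimages; all remaining cases reduce to unfolding definitions and the two substitution lemmas.
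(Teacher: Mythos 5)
Your proposal is correct and follows essentially the same route as the paper's proof: induction on the derivation of the one-step transition with \refRule{seq-n} as the sole inductive case, the substitution lemmas (\Cref{lem:substitutionlemmaexp} and \Cref{lem:substitutionlemmabexp}) discharging the assignment/sampling and guard cases, and \refRule{iter-T} handled by unfolding the sequencing clause and recognizing one extra application of $\mathbb{F}_{\bexp,\p_0}$ landing in $\mathbb{F}_{\bexp,\p_0}^{m+1}$. Nothing to add.
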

  \begin{proof}
    By induction on the height of the proof tree that justifies the transition rule $(p,\initconfig) \tr (\q,\cfg)$.
    All except \refRule{seq-n} are base cases.
    \begin{itemize}
      \item For \refRule{asgn}, we have the following transition from the initial configuration:
        $$(\assign,\initsub,0,\strue,\strue) \tr (\pskip,\initsub\update{\x_i}{\app\initsub\exp},0,\strue,\strue)$$
        By definition of $\FF_{\pskip}$, which is the singleton $\{ (\id_{\statespace},\statespace,\statespace)$, it must be that $F = \id_{\statespace}$, and $B = \OO = \statespace$.
        Thus, since $\app\initsub\exp=\exp$, with $\sub = \initsub \update {i}\exp$ and $\idx=0$, we need to verify that
        $$ (\id_\statespace \circ \sinterpretk{\sub}{\idx}, {\sinterpretk{\sub}{\idx}}^{-1} [\statespace] \cap \sinterpret{\strue}, {\sinterpretk{\sub}{\idx}}^{-1}[\statespace] \cap \sinterpret{\strue}) \in \FF_{\assign} $$
        and recall that $ \FF_{\assign} = \{ (\rho \mapsto \rho \update{i}{\interpret{\exp}(\res \rho n)},\statespace,\statespace)\} $.
        Since $\sinterpretk \sub \idx$ is total and $\sinterpret {\strue} = \statespace$, the only thing to show is that $\sinterpretk{\sub}\idx : \rho \mapsto \rho \update i {\interpret{\exp}(\res \rho n)}$.
        Let us look at $\sinterpretk{\sub}\idx (\rho)$:
        $$ \begin{array}{lr}
          ( \sinterpret{\sub(\x_0)}(\rho), \ldots, \sinterpret{\sub(\x_{n-1})}(\rho), \rho_n, \rho_{n+1}, \ldots )  & \textup{def. of } \sinterpretk\sub\idx \\
          \quad =
          ( \sinterpret{\x_0}(\rho), \ldots, \sinterpret{\x_{i-1}}(\rho), \sinterpret{\exp}(\rho), \\
          \quad \qquad
          \sinterpret{\x_{i+1}}(\rho), \ldots, \sinterpret{\x_{n-1}}(\rho), \rho_n, \rho_{n+1}, \ldots ) & \textup {def. of } \sub \\
          \quad =
          ( \rho_0, \ldots, \rho_{i-1}, \sinterpret{\exp}(\rho), \rho_{i+1}, \ldots, \rho_{n-1}, \rho_n, \rho_{n+1}, \ldots ) & \quad \textup{def. of } \sinterpret{\cdot} \\
          \quad =
          ( \rho_0, \ldots, \rho_{i-1}, \interpret {\exp} (\res \rho n), \rho_{i+1}, \ldots, \rho_{n-1}, \rho_n, \rho_{n+1}, \ldots ) & \textup{\Cref{lem:substitutionlemmaexp}} \\
          \quad = \rho \update i {\interpret {\exp} (\res \rho n)} & \textup{notation}
        \end{array}$$
        This concludes the case.

      \item We continue with \refRule{smpl}, for which we have the following transition:
        $$(\psample,\initconfig) \tr (\pskip,\initsub \update {\x_i} {\y_{0}},1,\strue,\strue)$$
        and by definition, we have $ \FF_{\psample} = \{(\rho \mapsto \sample i (\rho),\statespace,\statespace) \}$ and $\FF_{\pskip} = \{ (\id_{\statespace}, \statespace ,\statespace) \}$.
        Similar to the previous case, here it suffices to show that $\sinterpretk \sub 1 (\rho) = \sample i (\rho)$ (where now $\sub = \initsub\update{i}{\y_0}$):
        $$ \begin{array}{rlr}
          \sinterpretk \sub 1 (\rho)
          = & ( \sinterpret{\sub(\x_0)}(\rho), \ldots, \sinterpret{\sub(\x_{n-1})}(\rho), \rho_{n+1}, \rho_{n+2}, \ldots )  & \textup{def. of } \sinterpretk\sub\idx \\
          = & ( \sinterpret{\x_0}(\rho), \ldots, \sinterpret{\x_{i-1}}(\rho), \sinterpret{\y_0}(\rho), \\
            & \sinterpret{\x_{i+1}}(\rho), \ldots, \sinterpret{\x_{n-1}}(\rho), \rho_{n+1}, \rho_{n+2}, \ldots ) & \textup {def. of } \sub \\
          = & ( \rho_0, \ldots, \rho_{i-1}, \rho_n, \rho_{i+1}, \ldots, \rho_{n-1}, \rho_{n+1}, \rho_{n+2}, \ldots ) & \quad \textup{def. of } \sinterpret{\cdot} \\
          = & \sample i (\rho) & \textup{notation}
        \end{array}$$
        This case is now finished.

      \item \refRule{obs}: the transition has the following shape:
        $$ (\observe,\initconfig) \tr (\pskip,\initsub,0,\strue,\strue \sand \initsub(\bexp)) $$
        Again, $(F,B,\OO) \in \FF_{\pskip}$ means $F = \id_{\statespace}$ and $B=\OO=\statespace$.
        Note that $\FF_{\observe} = \{ (\id_{\statespace},\statespace,\bexpset) \}$.
        \begin{enumerate}[label=(\roman*)]
          \item $F \circ \sinterpretk {\initsub}0 = \id_{\statespace} \circ \id_{\statespace} = \id_{\statespace}$;
          \item $\sinterpretk {\initsub}0^{-1}[B] \cap \sinterpret{\pc} = \sinterpretk {\initsub}0^{-1}[\statespace] \cap \sinterpret{\strue} = \statespace \cap \statespace = \statespace$; and
          \item $\sinterpretk {\initsub}0^{-1}[\OO] \cap \sinterpret{\po} = \sinterpretk {\initsub}0^{-1}[\statespace] \cap \sinterpret{\strue \sand \initsub(\bexp)} = \statespace \cap \sinterpret {\strue} \cap \sinterpret{\initsub(\bexp)} = \bexpset$.
        \end{enumerate}
        Here, we used \Cref{lem:substitutionlemmabexp} in the last equality.
        This case is now finished.

      \item For \refRule{seq-0}, we have the following transition:
        $$ (\seq\pskip\p,\initconfig) \tr (\p,\initsub,0,\strue,\strue)  $$
        We recognize that if $(F,B,\OO) \in \FF_{\p}$, then
        $$(F \circ \sinterpretk{\initsub}0, \sinterpretk{\initsub}0^{-1}[B] \cap \sinterpret {\strue}, \sinterpretk{\initsub}0^{-1}[\OO] \cap \sinterpret{\po}) = (F,B,\OO)$$
        Now, since $(F,B,\OO)\in\FF_{\p}$ and $(\id_{\statespace},\statespace,\statespace)\in\FF_{\pskip}$, by definition, $\FF_{\seq\pskip\p}$ also contains $(F,B,\OO)$, so this case is done.

      \item \refRule{seq-n}: the only inductive step.
        The derivation tree for the transition is the following:
        $$ \srule{}{(\p,\initconfig) \tr (\p',\cfg)}{(\seq\p\q,\initconfig) \tr (\seq{\p'}\q,\cfg)}$$
        The induction hypothesis is the following: if $(\p,\initconfig)\tr(\p',\cfg)$ and $(F,B,\OO)\in\FF_{\p'}$ then $(F\circ\sinterpretk\sub\idx,\sinterpretk\sub\idx^{-1}[B]\cap\sinterpret{\pc},\sinterpretk\sub\idx^{-1}[\OO] \cap \sinterpret{\po})$.

        Now let $(F,B,\OO)\in\FF_{\seq{\p'}\q}$.
        Then, by definition, there are $(F_1,B_1,\OO_1)\in\FF_{\p'}$ and $(F_2,B_2,\OO_2)\in\FF_{\q}$ such that \rom 1 $F = F_2 \circ F_1$, \rom 2 $B = B_1 \cap F_1^{-1}[B_2]$, and \rom 3 $\OO = \OO_1 \cap F_1^{-1}[\OO_2]$.
        We can apply the IH to $(F_1,B_1,\OO_1) \in \FF_{\p'}$ with the transition $(\p,\initconfig)\tr(\p',\cfg)$ to learn that
        $$(F_1 \circ \sinterpretk{\sub}{\idx}, \sinterpretk{\sub}{\idx}^{-1}[B_1] \cap \sinterpret {\pc}, \sinterpretk{\sub}{\idx}^{-1}[\OO_1] \cap \sinterpret {\po}) \in \FF_{\p} $$
        But then, by definition, $\FF_{\seq\p\q}$ contains $(F',B',\OO')$, where
        \begin{enumerate}[label=(\roman*)]
          \item $ F' = F_2 \circ (F_1 \circ \sinterpretk{\sub}{\idx}) = F \circ \sinterpretk{\sub}{\idx}$;
          \item $ B' = \sinterpretk{\sub}{\idx}^{-1}[B_1] \cap \sinterpret {\pc} \cap (F_1 \circ \sinterpretk{\sub}{\idx})^{-1} [B_2] = \sinterpretk{\sub}{\idx}^{-1}[B] \cap \sinterpret{\pc}$; and
          \item $ \OO' = \sinterpretk{\sub}{\idx}^{-1}[\OO_1] \cap \sinterpret {\po} \cap (F_1 \circ \sinterpretk{\sub}{\idx})^{-1} [\OO_2] = \sinterpretk{\sub}{\idx}^{-1}[\OO] \cap \sinterpret{\po}$.
        \end{enumerate}
        This is exactly what we need to show in this case!

      \item \refRule{if-T}: analogous to \refRule{if-F}.
      \item For \refRule{if-F}, we have the following shape of the transition:
        $$ (\pif\bexp{\p_1}{\p_2},\initconfig) \tr (\p_2,\initsub,0,\strue \sand \sub(\pneg b),\strue) $$
        Now let $(F,B,\OO) \in \FF_{\p_2}$.
        Using that $\sinterpretk{\initsub}{0}=\id_\statespace$, we recognize that
        \begin{enumerate}[label=(\roman*)]
          \item $F \circ \sinterpretk {\initsub}0 = F$;
          \item $\sinterpretk {\initsub}0^{-1}[B] \cap \sinterpret{\strue \sand \initsub(\pneg \bexp)} = B \cap (\compl{\interpret\bexp} \times\R^\omega)$, by \Cref{lem:substitutionlemmabexp}; and
          \item $\sinterpretk {\initsub}0^{-1}[\OO] \cap \sinterpret{\strue} = \OO$.
        \end{enumerate}
        Indeed, by definition, $(F,B \cap (\bexpcset),\OO) \in \FF_{\pif {\bexp}{\p_1}{\p_2}}$, so the case is finished.

      \item \refRule{iter-F}: we have the transition rule
        $$ (\while\bexp\p,\initconfig) \tr (\pskip,\initsub,0,\strue \sand \initsub(\pneg \bexp),\strue) $$
        and $(F,B,\OO) \in \FF_{\pskip}$ means $F = \id_{\statespace}$ and $B = \OO = \statespace$.
        With these data, we recognize that
        \begin{enumerate}[label=(\roman*)]
          \item $F \circ \sinterpretk {\initsub}0 = \id_{\statespace}$;
          \item $\sinterpretk {\initsub}0^{-1} [B] \cap \sinterpret {\strue \sand \initsub(\pneg \bexp)} = \bexpcset$ (\Cref{lem:substitutionlemmabexp}); and
          \item $\sinterpretk {\initsub}0^{-1} [\OO] \cap \sinterpret {\strue} = \statespace$.
        \end{enumerate}
        Recall the definition of $\FF_{\while\bexp\p}$:
        $$ \bigcup_{m=0}^\infty \mathbb F_{\bexp,\p}^m \{ (\id_{\statespace},\bexpcset,\statespace) $$
        Consider $m=0$ to finish the case.

      \item \refRule{iter-T}: we have the transition rule in the following shape:
        $$ (\while\bexp\p,\initconfig)\tr(\seq\p{\while\bexp\p},\initsub,0,\strue \sand \app\initsub\bexp,\strue) $$
        Consider $(F,B,\OO) \in \FF_{\seq\p{\while\bexp\p}}$.
        By definition, then, there is $(F_\p,B_\p,\OO_\p) \in \FF_{\p}$ and $(F',B',\OO') \in \FF_{\while\bexp\p}$ such that \rom{1} $F = F' \circ F_\p$, \rom{2} $B = B_\p \cap F_\p^{-1}[B']$, and \rom{3} $\OO = \OO_\p \cap F_\p^{-1}[\OO']$.
        Furthermore, by definition of $\FF_{\while\bexp\p}$, there is $m \in \N$ such that $(F',B',\OO') \in \mathbb F_{\bexp,\p}^m \{ (\id_{\statespace},\bexpcset,\statespace) \}$.
        We now recognize that
        \begin{enumerate}[label=(\roman*)]
          \item $F \circ \sinterpretk {\initsub}0 = F = F' \circ F_\p$;
          \item $\sinterpretk {\initsub}0^{-1}[B] \cap \sinterpret {\strue \sand \initsub(b)} = B_\p \cap F_\p^{-1}[B'] \cap (\bexpset)$ (\Cref{lem:substitutionlemmabexp}); and
          \item $\sinterpretk {\initsub}0^{-1}[\OO] \cap \sinterpret {\strue} = \OO = \OO_\p \cap F_\p^{-1}[\OO']$.
        \end{enumerate}
        Now, since $(F',B',\OO') \in \mathbb F_{\bexp,\p}^m \{ (\id_{\statespace},\bexpcset,\statespace)\}$ and $(F_\p,B_\p,\OO_\p) \in \FF_{\p}$, it follows by definition of $\mathbb F_{\bexp,\p}$ that
        $$\scalebox{0.9}{$ (F' \circ F_\p, (\bexpset) \cap B_\p \cap F_\p^{-1}[B'], \OO_\p \cap F_\p^{-1}[\OO']) 
        \in \mathbb F_{\bexp,\p}^{m+1} \{ (\id_{\statespace},\bexpcset, \statespace) \} $ }$$
        which is a subset of $\FF_{\while\bexp\p}$, so we are done.

    \end{itemize}
    Induction has finished for the small-step semantics $\tr$, and so the proof is done.
  \end{proof}
  \begin{lemma} [\emph{Canonical Symbolic Execution}] \label{lem:canonicalexecution}
    For all statements $\p$ and configurations $\cfg$, $(\p,\cfg)\trCl(\q,\cfg')$ if and only if $(\p,\initconfig) \trCl (\q,\cfg_1)$ (of the same length) such that:
    \begin{enumerate}[label=(\roman*)]
      \item\label{item:propsub} $\sinterpretk{\sub'}{k'} = \sinterpretk{\sub_1}{k_1} \circ \sinterpretk{\sub}{k}$,
      \item\label{item:propk}   $k' = k_1 + k$,
      \item\label{item:proppc}  $\sinterpret{\pc'} = \sinterpret{\pc} \cap \sinterpretk{\sub}{k}^{-1}[\sinterpret{\pc_1}]$, and
      \item\label{item:proppo}  $\sinterpret{\po'} = \sinterpret{\po} \cap \sinterpretk{\sub}{k}^{-1}[\sinterpret{\po_1}]$.
    \end{enumerate}
    In case $\cfg$, $\cfg'$, and $\cfg_1$ satisfy properties \Cref{item:propsub,item:propk,item:proppc,item:proppo}, we write $\cfg' \cfgeq \cfg_1 \cfgcmp \cfg$.
    We call $(\p,\initconfig)\trCl(\q,\cfg_1)$ a \emph{canonical} symbolic execution from $\p$ to $\q$.
  \end{lemma}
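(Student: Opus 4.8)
The plan is to induct on the length $\l$ of the transition chain, exploiting the fact (noted in \Cref{sec:symbex}) that progression of the system depends only on the program syntax. Hence any chain $(\p,\cfg)\trCl(\q,\cfg')$ and the chain $(\p,\initconfig)\trCl(\q,\cfg_1)$ obtained by \emph{replaying the same sequence of rules} from $\initconfig$ pass through exactly the same programs, and therefore have the same length; conversely, every canonical chain out of $\initconfig$ can be replayed from $\cfg$. This replay is a length-preserving bijection between chains out of $(\p,\cfg)$ and chains out of $(\p,\initconfig)$, and the two implications of the ``iff'' are just the two ways of reading it. It thus remains to show that \Cref{item:propsub,item:propk,item:proppc,item:proppo} relate the endpoints of two replayed chains. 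The base case $\l=0$ is immediate, since then $\cfg_1=\initconfig$, and $\sinterpretk{\initsub}{0}=\id_{\statespace}$, $\sinterpret{\strue}=\statespace$, and $k_1=0$ make the four identities collapse.

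First I would establish a \emph{one-step} version: if $(\p,\cfg)\tr(\q,\cfg_a)$ with canonical counterpart $(\p,\initconfig)\tr(\q,\cfg_b)$, then $\cfg_a\cfgeq\cfg_b\cfgcmp\cfg$. This is checked rule by rule, with \refRule{seq-n} the only inductive case (by induction on proof-tree height, as in the treatment of \refRule{seq-n} in \Cref{lem:symbolicmappingsmallstep}). The substantive content lies in \Cref{item:propsub} for \refRule{asgn} and \refRule{smpl}, where one must verify $\sinterpretk{\sub_a}{k_a}=\sinterpretk{\sub_b}{k_b}\circ\sinterpretk{\sub}{k}$; this follows from a substitution-composition identity $\sinterpret{\sub(\exp)}(\rho)=\interpret{\exp}(\res{\sinterpretk{\sub}{k}(\rho)}{n})$, the evident generalization of \Cref{lem:substitutionlemmaexp} from $\initsub$ to an arbitrary $\sub$. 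For the branching, loop, and observe rules, \Cref{item:proppc,item:proppo} reduce to the Boolean analogue $\sinterpret{\sub(\bexp)}=\sinterpretk{\sub}{k}^{-1}[\bexpset]$, generalizing \Cref{lem:substitutionlemmabexp}: the canonical step contributes a conjunct of the form $\initsub(\bexp)$ with $\sinterpret{\initsub(\bexp)}=\bexpset$, and taking the preimage under $\sinterpretk{\sub}{k}$ recovers precisely $\sinterpret{\sub(\bexp)}$.

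Second I would lift this to chains by induction on $\l$, peeling off the first transition $(\p,\cfg)\tr(\p',\cfg_a)\trCl(\q,\cfg')$. Applying the induction hypothesis to the length-$\l$ suffix, and again to its replay from $\cfg_b$, together with the one-step lemma yields three instances of the relation that combine into the goal. Concretely, the identity $\sinterpret{\pc'}=\sinterpret{\pc}\cap\sinterpretk{\sub}{k}^{-1}[\sinterpret{\pc_1}]$ comes out because preimage distributes over intersection and commutes with composition, $(\sinterpretk{\sub_b}{k_b}\circ\sinterpretk{\sub}{k})^{-1}=\sinterpretk{\sub}{k}^{-1}\circ\sinterpretk{\sub_b}{k_b}^{-1}$, while \Cref{item:propsub,item:propk} follow from associativity of composition and additivity of the indices; the same algebra handles \Cref{item:proppo}. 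In essence this step is an \emph{associativity} statement for the composition operator $\cfgcmp$.

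The hard part will be the sample-index bookkeeping underlying \Cref{item:propsub,item:propk} taken together. The delicate point is that the canonical step draws $\y_0$ whereas the original step draws $\y_k$; these are reconciled only because $\sinterpretk{\sub}{k}$ left-shifts the sample stream by $k$ positions, so that $\sinterpret{\y_0}\big(\sinterpretk{\sub}{k}(\rho)\big)=\rho_{n+k}=\sinterpret{\y_k}(\rho)$ and, more generally, $\sinterpretk{\sub_b}{k_b}\circ\sinterpretk{\sub}{k}$ is again the interpretation of a single substitution, now at index $k_b+k$, with the correct shift on the infinite tail. Checking that this composition of interpretations respects both the $\x_i$-components and the sample tail---for \refRule{smpl} and for the multi-step composition---is where the genuine computation sits; the remaining obligations are routine set algebra.
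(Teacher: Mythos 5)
Your proposal is correct, and it reaches the lemma by a genuinely different decomposition than the paper. The paper runs a single induction on the length of the chain, peeling off the \emph{last} transition: its induction hypothesis gives $\cfg'\cfgeq\cfg_1\cfgcmp\cfg$ for the length-$\l$ prefixes, and the case analysis on the final rule then leans on composition identities --- the facts labelled (A)--(D) at the start of the paper's proof, e.g.\ $\sinterpretk{\sub}{\idx}^{-1}[\sinterpret{\sub_1(\bexp)}]=\sinterpret{\sub'(\bexp)}$ --- which must hold for \emph{arbitrary} triples already in the relation and are proven by structural induction on (Boolean) expressions; \refRule{seq-n} is dispatched there with a slightly ad hoc ``evaluate right-associatively without loss of generality'' remark. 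You instead peel off the \emph{first} transition and factor the work into a one-step lemma (the original step versus the same rule fired from $\initconfig$) plus a gluing step that combines $\cfg_a\cfgeq\cfg_b\cfgcmp\cfg$, $\cfg'\cfgeq\cfg_c\cfgcmp\cfg_a$ and $\cfg_1\cfgeq\cfg_c\cfgcmp\cfg_b$ into $\cfg'\cfgeq\cfg_1\cfgcmp\cfg$ by associativity of composition and distribution of preimages over intersections. Because one side of your one-step lemma is always $\initconfig$, the only substitution facts you need are the generalizations of \Cref{lem:substitutionlemmaexp,lem:substitutionlemmabexp} from $\initsub$ to an arbitrary $\sub$, namely $\sinterpret{\sub(\exp)}(\rho)=\interpret{\exp}\big(\res{\sinterpretk{\sub}{k}(\rho)}{n}\big)$ and $\sinterpret{\sub(\bexp)}=\sinterpretk{\sub}{k}^{-1}[\bexpset]$; these are precisely the paper's facts (C) and (D) specialized to the case where the canonical configuration is initial, so your route needs strictly less generality from the auxiliary lemmas, at the cost of the extra (but routine) gluing algebra. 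Your handling of \refRule{seq-n} by induction on proof-tree height mirrors \Cref{lem:symbolicmappingsmallstep} and is cleaner than the paper's. Both arguments bottom out in the same essential computation for \refRule{smpl}, the stream-shift identity $\sinterpret{\y_0}\big(\sinterpretk{\sub}{k}(\rho)\big)=\rho_{n+k}=\sinterpret{\y_k}(\rho)$ together with the tail shift by $k_b+k$, which you correctly single out as the crux.
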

  \begin{proof}
    First some observations.
    If $\cfg' \cfgeq \cfg_1 \cfgcmp \cfg$ then $\sinterpretk{\sub'}{k'} (\rho) = \sinterpretk{\sub_1}{k_1} \big( \sinterpretk{\sub}{\idx} (\rho)\big)$ for all $\rho$, and we have the following four facts that we will use throughout the proof:
    \begin{enumerate}[label=(\Alph*)]
      \item\label{item:substcompvariable} For all $\rho \in \R^{n+\omega}$ and $i \in \{0,\ldots,n-1\}$, $\sinterpret {\sub'(\x_i)} (\rho) = \sinterpret{\sub_1(\x_i)}( \sinterpretk\sub\idx (\rho))$;
      \item\label{item:substcompsample} For all $\rho \in \R^{n+\omega}$ and $\l \in \N$, 
        $$\sinterpretk{\sub'}{k'}(\rho) (n+\l) = \rho_{k_1+k+n+\l} = \sinterpretk{\sub_1}{k_1}(\rho)(k+n+\l) = \sinterpretk{\sub_1}{k_1} (\sinterpretk{\sub}{\idx} (\rho)) (n+\l) $$
      \item\label{item:substcompexp}
        For all expressions $e \in \E$ and all $\rho \in \statespace$:
        $$ \sinterpret {\sub_1(\exp)}(\rho) = \sinterpret {\sub'(\exp)} (\sinterpretk{\sub}{\idx}(\rho)) $$
      \item\label{item:substcompbexp}
        For all Boolean expressions $\bexp \in \B$ we have the following equality of sets:
        $$ \sinterpretk{\sub}{\idx}^{-1}[\sinterpret{\sub_1(\bexp)}] = \sinterpret{\sub'(\bexp)} $$
    \end{enumerate}
    \Cref{item:substcompvariable,item:substcompsample} are element-wise equalities of $\sinterpretk{\sub'}{\idx'}(\rho) = \sinterpretk{\sub_1}{\idx_1}(\sinterpretk\sub\idx(\rho))$.
    \Cref{item:substcompexp} is proven by induction on the structure of expressions:
    \begin{itemize}
      \item The base case $\x_i$ is \Cref{item:substcompvariable}.
      \item The case $\exp = q \in \Q$ is trivial.
      \item The inductive step with $\exp = \op (\exp_1, \ldots, \exp_{\arop})$ is as follows:
        $$ \begin{array}{rl}
          \sinterpret {\sub_1(\op (\exp_1, \ldots, \exp_{\arop}))}(\sinterpretk\sub\idx(\rho))
          = & \opsem ( \sinterpret{\sub_1(\exp_1)}(\sinterpretk\sub\idx(\rho)), \ldots, \sinterpret{\sub_1(\exp_{\arop})}(\sinterpretk\sub\idx(\rho))) \\
          \overset{\textup{IHs}}= & \opsem (\sinterpret{\sub'(\exp_1)}(\rho), \ldots, \sinterpret{\sub'(\exp_{\arop})}(\rho)) \\
          = & \sinterpret {\sub'(\op (\exp_1, \ldots, \exp_{\arop}))} (\rho)
        \end{array}$$
    \end{itemize}
    \Cref{item:substcompbexp} is proven by induction on the structure of Boolean expressions:
    \begin{itemize}
      \item $\bexp=\true$: both sides equal $\statespace$ (since $\sinterpretk\sub\idx$ is total).
      \item $\bexp=\false$: both sides equal $\emptyset$.
      \item $\bexp=\exp_1 \relop \exp_2$: write $\rho' := \sinterpretk\sub\idx (\rho)$. Then
        $$ \begin{array}{rll}
          \rho \in \sinterpretk\sub\idx^{-1}[\sinterpret{\sub_1(\exp_1 \relop \exp_2)}]
          \iff & \rho' \in \sinterpret{\sub_1(\exp_1) \relop \sub_1(\exp_2)} &\\
          \iff & \sinterpret{\sub_1(\exp)}(\rho') \relop \sinterpret {\sub_1(\exp_2)} (\rho') \qquad \qquad  &\\
          \iff & \sinterpret{\sub'(\exp_1)}(\rho) \relop \sinterpret {\sub'(\exp_2)} (\rho) & \textup{\Cref{item:substcompexp}} \\
          \iff & \rho \in \sinterpret{\sub'(\exp_1 \relop \exp_2)} &
        \end{array}$$
      \item $\bexp=\bexp_1 \por \bexp_2$: then
        $$ \rho \in \sinterpretk\sub\idx^{-1}[\sinterpret{\sub_1(\bexp_1 \por \bexp_2)}] \iff \sinterpretk\sub\idx(\rho) \in \sinterpret{\sub_1(\bexp_1)} \cup \sinterpret{\sub_1(\bexp_2)} $$
        and the two IHs will conclude this case.
      \item $\bexp=\bexp_1 \pand \bexp_2$, and
      \item $\bexp=\pneg \bexp_1$ are analogous to the case for $\por$.
    \end{itemize}
  The proof of the lemma is by induction on the length of the transition chain, with an analysis of the \emph{last} rule that was used.
  The base case (length of zero) is from the reflexive closure, which tells us that $\cfg_1 = \cfg_0$
  for the canonical execution, and $\gamma' = \gamma$ for all other $\gamma$.
  With these data, \Crefrange{item:propsub}{item:proppo} are thus verified:
  \begin{enumerate}[label=(\roman*)]
    \item $\sinterpretk{\sub_1} {k_1} \circ \sinterpretk{\sub}{k} = \sinterpretk{\initsub}{0} \circ \sinterpretk{\sub}{\idx} =
     \id_{\statespace} \circ \sinterpretk\sub\idx = \sinterpretk\sub\idx = \sinterpretk{\sub'}{k'}$.
    \item $k_1 + k = 0 + k = k = k'$,
    \item $\sinterpret{\pc} \cap \sinterpretk{\sub}{k}^{-1}[\sinterpret{\pc_1}] = \sinterpret{\pc'} \cap \sinterpretk\sub\idx^{-1}[\sinterpret{\strue}] = \sinterpret{\pc'} \cap \sinterpretk\sub\idx^{-1}[\statespace] = \sinterpret{\pc'} \cap \statespace = \sinterpret{\pc'}$.
    \item Similar to the above item.
  \end{enumerate}
  This finishes the base case for the induction.
  For the inductive step, we prove the statement for the pairs of transition chains
  $$ (\p,\initconfig)\trCl(\q,\cfg_1)\tr(\r,\cfg_2) \qquad (\p,\cfg)\trCl(\q,\cfg')\tr(\r,\cfg'') $$
  of length $\l+1$, where IH gives us that $\cfg' \cfgeq \cfg_1 \cfgcmp \cfg$,
  and the goal is to show that $\cfg'' \cfgeq \cfg_2 \cfgcmp \cfg$.
  This in turn is done by induction on the height of the proof tree that justifies the two transitions
  $$ (\q,\cfg_1)\tr(\r,\cfg_2)\qquad(\q,\cfg')\tr(\r,\cfg'')$$
  So we continue with a case analysis of the rules in \Cref{fig:symos} that may justify the outgoing transition from $\q$---all except \refRule{seq-n} are base cases:
    \begin{itemize}
      \item For \refRule{asgn}, we have the following data:
        \begin{enumerate}[label=(\roman*)]
          \item $\sub_2 = \sub_1 \update {i}{\sub_1(\exp)}$ and $\sub'' = \sub' \update {i}{\sub'(\exp)}$;
          \item $k_2 = k_1$ and $k'' = k'$ (so by IH: $k_2 + k = k_1 + k = k' = k''$);
          \item $\pc_2 = \pc_1$ and $\pc'' = \pc'$.
            Then, by IH:
            $$\sinterpret{\pc} \cap \sinterpretk{\sub}{\idx}^{-1}[\sinterpret{\pc_2}] = \sinterpret{\pc} \cap \sinterpretk{\sub}{\idx}^{-1}[\sinterpret{\pc_1}] = \sinterpret{\pc'} = \sinterpret{\pc''}$$
          \item Analogously to the above item we see that $\sinterpret{\po} \cap \sinterpretk\sub\idx^{-1}[\sinterpret{\po_2}] = \sinterpret{\po''}$.
        \end{enumerate}
        The only non-trivial item we have to verify to conclude $\cfg'' \cfgeq \cfg_2 \cfgcmp \cfg$ is that
        \begin{equation} \label{eq:canonicalassigngoal}
          \sinterpretk{\sub''}{k''} = \sinterpretk{\sub_2}{k_2}\circ \sinterpretk{\sub}{k}
        \end{equation}
        Let $\rho \in \statespace$ be arbitrary.
        Then
        \begin{equation} \label{eq:canonicalassignLHS}
        \begin{array}{rl}
          \sinterpretk{\sub''}{k''}(\rho)
          & = (\sinterpret{\sub''(\x_0)}(\rho), \ldots, \sinterpret{\sub''(\x_{n-1})}(\rho), \rho_{n+k''}, \rho_{n+k''+1}, \dots ) \\
          & = ( \sinterpret{\sub'(\x_0)}(\rho), \ldots, \sinterpret{\sub'(\exp)}(\rho), \ldots, \qquad \qquad \footnotesize{(i\textup{-th var.})}\\
          & \quad\qquad \sinterpret{\sub'(\x_{n-1})}(\rho), \rho_{n+k'}, \rho_{n+k'+1}, \ldots ) \\
          & \overset{\textup{IH}}= ( \sinterpret{\sub_1(\x_0)}(\rho'), \ldots, \sinterpret{\sub'(\exp)}(\rho), \ldots, \qquad \qquad \footnotesize{(i\textup{-th var.})}\\
          & \quad\qquad \sinterpret{\sub_1(\x_{n-1})}(\rho'), \rho'_{n+k_1}, \rho'_{n+k_1+1}, \ldots )
        \end{array}
        \end{equation}
        Here, we have put $\rho' := \sinterpretk\sub\idx (\rho)$ and used \Cref{item:substcompvariable,item:substcompsample} for all elements except the $i$-th.
        On the other hand, we have, for the RHS of the goal \eqref{eq:canonicalassigngoal}:
        $$ \begin{array}{rl}
          \sinterpretk{\sub_2}{k_2}(\rho')
          & = ( \sinterpret{\sub_2(\x_0)}(\rho'), \ldots, \sinterpret{\sub_2(\x_{n-1})}(\rho'), \rho'_{n+k_2}, \rho'_{n+k_2+1}, \ldots ) \\
          & = ( \sinterpret{\sub_1(\x_0)}(\rho'), \ldots, \sinterpret{\sub_1(\exp)}(\rho'), \ldots \quad\qquad (i\textup{-th var.})\\
          & \quad\qquad \sinterpret{\sub_1(\x_{n-1})}(\rho'), \rho'_{n+k_1}, \rho'_{n+k_1+1}, \ldots )
        \end{array}$$
        So we may identify this with \eqref{eq:canonicalassignLHS} if indeed $\sinterpret{\sub'(\exp)}(\rho) = \sinterpret{\sub_1(\exp)}(\rho')$, and this follows from \Cref{item:substcompexp}, using IH, i.e., $\cfg'\cfgeq\cfg_1\cfgcmp\cfg$.
      \item We continue with \refRule{smpl}, for which we have the following data:
        \begin{enumerate}[label=(\roman*)]
          \item $\sub_2 = \sub_1 \update {i}{\y_{k_1}}$ and $\sub'' = \sub' \update {i}{\y_{k'}}$;
          \item $k_2 = k_1+1$ and $k'' = k'+1$, from which we deduce
            $$k_2 + k = k_1 + 1 + k = (k_1+k)+1 \overset{\textup{IH}}= k'+1 = k''$$
          \item and \rom 4: we have $\pc'' = \pc'$ and $\pc_2 = \pc_1$, and similarly for the path observations.
            Apply the same reasoning as in the case \refRule{asgn} to observe that
            $$\sinterpret {\pc''} = \sinterpret {\pc} \cap \sinterpretk\sub\idx^{-1}[\pc_2]
            \qquad \textup{and} \qquad
            \sinterpret {\po''} = \sinterpret {\po} \cap \sinterpretk\sub\idx^{-1}[\po_2]$$
        \end{enumerate}
        Again, the only truly interesting item we have to verify is that
          $\sinterpretk{\sub''}{k''} = \sinterpretk{\sub_2}{k_2}\circ \sinterpretk{\sub}{k}$.
        This is handled in exactly the same way as in the case for \refRule{asgn} with the following exception:
        instead of having to show $\sinterpret{\sub_1(\exp)}(\rho') = \sinterpret{\sub'(\exp)}(\rho)$, we have to prove that $\sinterpret{\y_{k_1}}(\rho') = \sinterpret{\y_{k'}}(\rho)$ for all $\rho \in \statespace$.
        But this is easy:
        $$\sinterpret{\y_{k_1}}(\rho') = \rho'_{n+k_1} = \rho_{n+k_1+k} = \rho_{n+k'} = \sinterpret{\y_{k'}}(\rho)$$
        Thus, we concluded this case since $\sinterpretk{\sub''}{k''} = \sinterpretk{\sub_2}{k_2} \circ \sinterpretk{\sub}{k}$ and so $\cfg'' \cfgeq \cfg_2 \cfgcmp \cfg$.

      \item We continue with \refRule{obs}, for which we have the following data:
        \begin{enumerate}[label=(\roman*)]
          \item $\sub_2 = \sub_1$ and $\sub'' = \sub'$, and thus (using \rom 2):
            $$ \sinterpretk{\sub_2}{k_2} \circ \sinterpretk{\sub}{\idx} = \sinterpretk{\sub_1}{k_1} \circ \sinterpretk{\sub}{\idx} = \sinterpretk{\sub_1}{k_1} \circ \sinterpretk{\sub}{\idx} = \sinterpretk{\sub'}{k'} = \sinterpretk{\sub''}{k''} $$
          \item $k_2 = k_1$ and $k'' = k'$, from which we deduce $k_2 + k = k_1 + k \overset{\textup{IH}}= k' = k''$.
          \item $\pc_2 = \pc_1$ and $\pc'' = \pc'$, so that $\sinterpret{\pc} \cap \sinterpretk\sub\idx^{-1}[\pc_2] = \sinterpret{\pc''}$ by IH.
          \item $\po_2 = \po_1 \sand {\sub_1(\bexp)}$ and $\po'' = \po' \sand {\sub'(\bexp)}$.
        \end{enumerate}
        In this case, the non-trivial item we have to verify is \rom 4, so we set out to prove that
        \begin{equation} \label{eq:canonicalobservegoal}
          \sinterpret {\po} \cap \sinterpretk\sub\idx^{-1}[\sinterpret{\po_2}] = \sinterpret{\po''}
        \end{equation}
        For this, we first observe that
        \begin{itemize}
          \item $\sinterpret{\po''} = \sinterpret{\po'} \cap \sinterpret{\sub'(b)}$;
          \item $\sinterpret{\po_2} = \sinterpret{\po_1} \cap \sinterpret{\sub_1(b)}$; and
          \item by \Cref{item:substcompbexp} and IH, we have $\sinterpretk\sub\idx^{-1}[\sinterpret{\sub_1(b)}] = \sinterpret{\sub'(b)}$.
          \item The IH also tells us that $\sinterpret{\po'} = \sinterpret{\po} \cap \sinterpretk\sub\idx^{-1}[\sinterpret{\po_1}]$.
        \end{itemize}
        We can now make the following derivation:
        $$ \begin{array}{rl}
          \sinterpret{\po} \cap \sinterpretk\sub\idx^{-1} [\sinterpret{\po_2}]
          = & \sinterpret{\po} \cap \sinterpretk\sub\idx^{-1}\big[\sinterpret{\po_1} \cap \sinterpret{\sub_1(\bexp)}\big] \\
          = & \sinterpret{\po} \cap \sinterpretk\sub\idx^{-1}\big[\sinterpret{\po_1}\big] \cap \sinterpretk\sub\idx^{-1}\big[\sinterpret{\sub_1(b)}\big] \\
          = & \sinterpret{\po'} \cap \sinterpretk\sub\idx^{-1}\big[ \sinterpret{\sub_1(b)} \big] \\
          = & \sinterpret{\po'} \cap \sinterpret{\sub'(b)} \\
          = & \sinterpret{\po''} \\
        \end{array}$$
        and we have thus concluded $\cfg'' \cfgeq \cfg_2 \cfgcmp \cfg$.

      \item In the case for \refRule{seq-0}, where $\r=\seq\pskip\q$, there is nothing to prove, because $\cfg_2 = \cfg_1$, and $\cfg'' = \cfg'$.
        Thus, $\cfg'' \cfgeq \cfg_2 \cfgcmp \cfg$ directly by IH $\cfg' \cfgeq \cfg_1 \cfgcmp \cfg$.

      \item \refRule{seq-n}: the only inductive step.
        Here $\q = \seq{\q_1}{\q_2}$ and $\r = \seq{\q'_1}{\q_2}$, and so
        $$  \srule{}{(\q_1,\cfg_1)\tr(\q_1',\cfg_2)}
                    {(\seq{\q_1}{\q_2},\cfg_1)\tr(\seq{\q_1'}{\q_2},\cfg_2)} 
      \qquad\srule{}{(\q_1,\cfg') \tr(\q_1',\cfg'')}
                    {(\seq{\q_1}{\q_2},\cfg') \tr(\seq{\q_1'}{\q_2},\cfg'')}$$
        Here, we may assume that the transitions from $\q_1$ to $\q_1'$ are not by \refRule{seq-n} or \refRule{seq-0}, by observing that sequencing is associative, so we can evaluate right-associatively, without loss of generality.
        The analyses from all other cases for the transition from $\q_1$ to $\q_1'$ (and using the chain $(\q_1,\initconfig)\trCl(\q_1,\initconfig)$ of length $\l=0$) can now be repeated.
        This will establish that $(\q_1,\initconfig)\tr(\q'_1,\initconfig')$ for some $\initconfig' = (\sub'_0,\idx'_0,\pc'_0,\po'_0)$ that satisfies (1) $\cfg_2 \cfgeq \initconfig' \cfgcmp \cfg_1$ and (2) $\cfg'' \cfgeq \initconfig' \cfgcmp \cfg'$.
        With this, we verify that $\cfg'' \cfgeq \cfg_2 \cfgcmp \cfg$.
    \begin{enumerate}[label=(\roman*)]
      \addtocounter{enumi}{1}
      \item $k_2 + k \overset{(1)}= k'_0 + k_1 + k \overset{\textup{IH}}= k'_0 + k' \overset{(2)}= k''$;
      \addtocounter{enumi}{-2}
    \item $\sinterpretk{\sub_2}{k_2} \circ \sinterpretk{\sub}{\idx} \overset{(1)}= \sinterpretk{\sub'_0}{k'_0} \circ \sinterpretk{\sub_1}{k_1} \circ \sinterpretk{\sub}{\idx} \overset{\textup{IH}}= \sinterpretk{\sub'_0}{k'_0} \circ \sinterpretk{\sub'}{k'} \overset{(2)}= \sinterpretk{\sub''}{k''}$
      \addtocounter{enumi}{1}
    \item We can make the following derivation of sets:
      $$ \begin{array}{rl}
        \sinterpret{\pc} \cap \sinterpretk{\sub}{\idx}^{-1}\big[\sinterpret{\pc_2}\big]
        \overset{(1)}= & \sinterpret{\pc} \cap \sinterpretk{\sub}{\idx}^{-1}\Big[\sinterpret{\pc_1} \cap \sinterpretk{\sub_1}{k_1}^{-1}\big[\sinterpret{\pc'_0}\big] \Big] \\
        \overset{\textup{IH}}= & \sinterpret{\pc'} \cap \sinterpretk{\sub}{\idx}^{-1}\big[\sinterpretk{\sub_1}{k_1}^{-1}[\sinterpret{\pc'_0}]\big] \\
        \overset{\textup{IH}}= & \sinterpret{\pc'} \cap \sinterpretk{\sub'}{k'}^{-1}\big[\sinterpret{\pc'_0}\big] \\
        \overset{(2)}= & \sinterpret{\pc''} \\
      \end{array}$$
      This concludes this item.
    \item The path observation $\po''$ is done in exactly the same way as the path condition.
  \end{enumerate}
  Conclude that for every $\p$ and $\cfg$, $(\p,\initconfig)\trCl(\r,\cfg_2)$ (of length $\l+1$) if and only if $(\p,\cfg)\trCl(\r,\cfg'')$ (of length $\l+1$) such that $\cfg''\cfgeq\cfg_2\cfgcmp\cfg$ in case the last transition rule of the chains was proven by \refRule{seq-n}.

      \item \refRule{if-T}: looking at the rule, we know that
        \begin{enumerate}[label=(\roman*)]
          \item and \rom 2: $k_2 = k_1$ and $k'' = k'$, and $\sub_2 = \sub_1$ and $\sub'' = \sub'$, so there is nothing to prove again.
            \addtocounter{enumi}{1}
          \item the interesting case, where $\pc_2 = \pc_1 \sand \sub_1(\bexp)$ and $\pc'' = \pc' \sand \sub'(\bexp)$.
          \item $\po_2 = \po_1$ and $\po'' = \po'$, so nothing to prove here.
        \end{enumerate}
        We check item \rom 3, and prove that
          $$ \sinterpret {\pc} \cap \sinterpretk\sub\idx^{-1}[\sinterpret{\pc_2}] = \sinterpret{\pc''} $$
        For this, we observe
        \begin{itemize}
          \item $\sinterpret{\pc''} = \sinterpret{\pc'} \cap \sinterpret{\sub'(b)}$;
          \item $\sinterpret{\pc_2} = \sinterpret{\pc_1} \cap \sinterpret{\sub_1(b)}$; and
          \item by \Cref{item:substcompbexp} and IH, $\sinterpretk\sub\idx^{-1}[\sinterpret{\sub_1(b)}] = \sinterpret{\sub'(b)}$.
          \item by IH, $\sinterpret{\pc'} = \sinterpret{\pc} \cap \sinterpretk\sub\idx^{-1}[\sinterpret{\pc_1}]$.
        \end{itemize}
        With this, we derive:
        $$ \begin{array}{rl}
          \sinterpret{\pc} \cap \sinterpretk\sub\idx^{-1} [\sinterpret{\pc_2}]
          = & \sinterpret{\pc} \cap \sinterpretk\sub\idx^{-1}\big[\sinterpret{\pc_1} \cap \sinterpret{\sub_1(\bexp)}\big] \\
          = & \sinterpret{\pc} \cap \sinterpretk\sub\idx^{-1}\big[\sinterpret{\pc_1}\big] \cap \sinterpretk\sub\idx^{-1}\big[\sinterpret{\sub_1(b)}\big] \\
          = & \sinterpret{\pc'} \cap \sinterpretk\sub\idx^{-1}\big[ \sinterpret{\sub_1(b)} \big] \\
          = & \sinterpret{\pc'} \cap \sinterpret{\sub'(b)} \\
          = & \sinterpret{\pc''} \\
        \end{array}$$
        and we have thus concluded $\cfg'' \cfgeq \cfg_2 \cfgcmp \cfg$ in this case.

      \item \refRule{if-F} is analogous to \refRule{if-T}, where we just take complements.

      \item \refRule{iter-F}: we have that
        \begin{enumerate}[label=(\roman*)]
          \item and \rom 2 $k_2 = k_1$ and $k'' = k'$, and $\sub_2 = \sub_1$ and $\sub'' = \sub'$; nothing to prove,
            \addtocounter{enumi}{1}
          \item $\pc_2 = \pc_1 \sand \sub_1(\pneg \bexp)$ and $\pc'' = \pc' \sand \sub'(\pneg \bexp)$; the interesting case, and
          \item $\po_2 = \po_1$ and $\po'' = \po'$, so nothing to prove.
        \end{enumerate}
        This case is again concluded analogously to \refRule{if-T}.

      \item \refRule{iter-T}: perhaps surprisingly, this case is exactly the same as \refRule{if-T}.
        This is because we are doing forward induction on the length of the transition chain, so the symbolic execution is inherently finite.
        Indeed, it follows from the fact that $\cfg_2 = \cfg_1$ and $\cfg''=\cfg'$, with the exception of the path conditions, that checking item \rom 3 is the only relevant task, and this is done exactly as in \refRule{if-T}.

    \end{itemize}
    This finishes induction on the rules in \Cref{fig:symos}, and concludes the proof of \Cref{lem:canonicalexecution}.
  \end{proof}

  \begin{proposition}[Surjectivity of the Bijection] \label{prop:mappingsurjective}
    If $(F,B,\OO) \in \FF_{\p}$ then there is a configuration $\cfg$ such that $(\p,\initconfig) \trCl (\pskip,\cfg)$ and \rom{1} $\sinterpretk{\sub}{\idx} = F$, \rom{2} $\sinterpret {\pc} = B$, and \rom{3} $\sinterpret{\po} = \OO$.
  \end{proposition}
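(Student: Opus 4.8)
The plan is to prove this by structural induction on the program $\p$, matching the inductive definition of $\FF_\p$ clause by clause: for each $(F,B,\OO)\in\FF_\p$ I construct an explicit symbolic execution $(\p,\initconfig)\trCl(\pskip,\cfg)$ whose interpretation recovers $(F,B,\OO)$. This is the converse direction to \Cref{prop:welldefined}, so the two workhorses will be the Canonical Symbolic Execution lemma (\Cref{lem:canonicalexecution}), which lets me transplant a trace that starts at $\initconfig$ so that it starts at an arbitrary intermediate configuration while composing interpretations correctly, and the substitution lemma for Boolean expressions (\Cref{lem:substitutionlemmabexp}).

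The base cases $\pskip$, $\assign$, $\psample$, $\observe$ are immediate: in each $\FF_\p$ is a singleton, and the single-step trace from $\initconfig$ provided by \refRule{asgn}, \refRule{smpl}, \refRule{obs} (or the length-zero trace for $\pskip$) already realizes it. The required identities $\sinterpretk{\initsub\update i\exp}{0}:\rho\mapsto\rho\update i{\interpret\exp(\res\rho n)}$ and $\sinterpretk{\initsub\update i{\y_0}}{1}=(\rho\mapsto\sample i(\rho))$ are exactly those computed in the proof of \Cref{lem:symbolicmappingsmallstep}, and the path-observation side of $\observe$ uses $\sinterpret{\initsub\bexp}=\bexpset$ from \Cref{lem:substitutionlemmabexp}.

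For sequencing I would unfold $(F,B,\OO)\in\FF_{\seq{\p_1}{\p_2}}$ into $(F_i,B_i,\OO_i)\in\FF_{\p_i}$, obtain traces $(\p_i,\initconfig)\trCl(\pskip,\cfg_i)$ by the induction hypothesis, lift the first through \refRule{seq-n} to reach $(\seq\pskip{\p_2},\cfg_1)$, pass \refRule{seq-0}, and then invoke \Cref{lem:canonicalexecution} to run $\p_2$ from $\cfg_1$, ending in some $\cfg'$ with $\cfg'\cfgeq\cfg_2\cfgcmp\cfg_1$. The four properties of $\cfgeq$ then yield precisely $\sinterpretk{\sub'}{\idx'}=F_2\circ F_1$, $\sinterpret{\pc'}=B_1\cap F_1^{-1}[B_2]$, and $\sinterpret{\po'}=\OO_1\cap F_1^{-1}[\OO_2]$, as required. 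For branching I would take the \refRule{if-T} (resp.\ \refRule{if-F}) step, which moves to $(\p_1,\initconfig)$ but with path condition $\strue\sand\initsub\bexp$; I would again realize the body trace from this configuration via \Cref{lem:canonicalexecution} and rewrite the accumulated condition as $B\cap\bexpset$ with \Cref{lem:substitutionlemmabexp}, matching the definition of $\FF_{\pif\bexp{\p_1}{\p_2}}$.

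The hard part will be the while loop. Here I would run an inner induction on the iteration count $m$ with $(F,B,\OO)\in\mathbb{F}_{\bexp,\p_1}^m\{(\id_\statespace,\bexpcset,\statespace)\}$. The case $m=0$ is handled by \refRule{iter-F}, whose path condition becomes $\bexpcset$ via \Cref{lem:substitutionlemmabexp}. For $m+1$, I would unfold the outer layer of $\mathbb{F}_{\bexp,\p_1}$ into $(F_0,B_0,\OO_0)\in\mathbb{F}_{\bexp,\p_1}^m\{\cdots\}$ and $(F_{\p_1},B_{\p_1},\OO_{\p_1})\in\FF_{\p_1}$, take the \refRule{iter-T} step into $(\seq{\p_1}{\while\bexp\p_1},\cfg_b)$ with $\cfg_b$ carrying path condition $\bexpset$, and then splice together the loop-body trace (from the structural hypothesis) and the $m$-iteration trace (from the inner hypothesis) exactly as in the sequencing case, except that the initial iter-T transplant forces a second use of \Cref{lem:canonicalexecution} — so two applications in total. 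Chaining the two $\cfgeq$-equivalences and simplifying with $\sinterpretk{\initsub}{0}=\id_\statespace$ yields $F=F_0\circ F_{\p_1}$, $B=\bexpset\cap B_{\p_1}\cap F_{\p_1}^{-1}[B_0]$, and $\OO=\OO_{\p_1}\cap F_{\p_1}^{-1}[\OO_0]$, which is precisely one application of $\mathbb{F}_{\bexp,\p_1}$ and hence lands in $\FF_{\while\bexp\p_1}$. The bookkeeping of composing interpretations through the nested inductions, and keeping the role of the path condition of $\cfg_b$ straight, is where the care is needed.
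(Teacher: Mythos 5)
Your proposal is correct and follows essentially the same route as the paper's proof: structural induction on $\p$, realizing each triple of $\FF_\p$ by an explicit trace, with \Cref{lem:canonicalexecution} used to transplant the inductively obtained canonical executions (once for sequencing and branching, twice for the \refRule{iter-T} unfolding) and \Cref{lem:substitutionlemmabexp} to rewrite the accumulated path conditions, together with the same inner induction on the iteration count $m$ for while loops. No gaps; the bookkeeping you flag is exactly the bookkeeping the paper carries out.
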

  \begin{proof}
    By induction on the structure of $\p$.
    \begin{itemize}
      \item $\p =\pskip$: we have $(F,B,\OO) = (\id_{\statespace},\statespace,\statespace)$ and, indeed, $(\pskip,\cfg_0) \trCl (\pskip,\cfg_0)$ where $\cfg_0$ satisfies \rom 1-\rom 3;
      \item $\p =\assign$: we have $(F,B,\OO) = (\rho \mapsto \rho\update{i}{\interpret{\exp}(\res \rho n)},\statespace,\statespace)$ and $(S,\cfg_0) \trCl (\pskip,\cfg)$ with $\cfg = (\initsub\update{i}{\initsub(\exp)},0,\strue,\strue)$ and one verifies in exactly the same way as in the proof of \Cref{lem:symbolicmappingsmallstep} that $\sinterpretk{\sub}{\idx} = F$ in this case.
      \item Idem for $\p =\psample$, where $(F,B,\OO) = (\rho \mapsto \sample i (\rho),\statespace,\statespace)$ and $(S,\initconfig) \trCl (\pskip,\cfg)$ with $\cfg = (\initsub\update{i}{\y_0},1,\strue,\strue)$.
      \item Now if $\p = \observe$, then $(F,B,\OO) = (\id_{\statespace},\statespace,\bexpset)$ and $\cfg = (\initsub,0,\strue,\app\initsub\bexp)$.
        Again, checking that these data satisfy \rom 1-\rom 3 is done in the same way as in the proof of \Cref{lem:symbolicmappingsmallstep}.
    \item Consider now the case where $\p = \seq {\p_1}{\p_2}$, where we have two IHs, namely one for all $(F_1,B_1,\OO_1) \in \FF_{\p_1}$ and the other for all $(F_2,B_2,\OO_2) \in \FF_{\p_2}$.
      So let $(F,B,\OO) \in \FF_{\p} = \FF_{\seq{\p_1}{\p_2}}$.
      Then, by definition, there are $(F_1,B_1,\OO_1) \in \FF_{\p_1}$ and $(F_2,B_2,\OO_2) \in \FF_{\p_2}$, such that
      \begin{itemize}
        \item $F = F_2 \circ F_1$;
        \item $B = B_1 \cap F_1^{-1}[B_2]$; and
        \item $\OO = \OO_1 \cap F_1^{-1}[\OO_2]$.
      \end{itemize}
      and by the two IHs, we obtain two canonical symbolic executions
      $$ (\p_1,\initconfig) \trCl (\pskip,\cfg_1) \qquad \textup{and} \qquad (\p_2,\initconfig) \trCl (\pskip,\cfg_2), \quad (*) $$
      By repeatedly applying \refRule{seq-n} and finally \refRule{seq-0}, we then also have
      $$ (\seq {\p_1}{\p_2},\initconfig) \trCl (\seq{\pskip}{\p_2},\cfg_1) \tr (\p_2,\cfg_1) \trCl (\pskip,\cfg) $$
      for some $\cfg$.
      Then, using the canonical symbolic transition chain (*), and by \Cref{lem:canonicalexecution}, we have $\cfg \cfgeq \cfg_2 \cfgcmp \cfg_1$.
      We conclude this case by showing that properties \rom 1-\rom 3 are satisfied by exactly this $\cfg$:
      \begin{enumerate}[label=(\roman*)]
        \item $\sinterpretk{\sub}{\idx} = \sinterpretk{\sub_2}{k_2} \circ \sinterpretk{\sub_1}{k_1} = F_2 \circ F_1 = F$;
        \item $\sinterpret{\pc} = \sinterpret{\pc_1} \cap \sinterpretk{\sub_1}{k_1}^{-1}[\pc_2] = B_1 \cap F_1^{-1}[B_2] = B$; and
        \item $\sinterpret{\po} = \sinterpret{\po_1} \cap \sinterpretk{\sub_1}{k_1}^{-1}[\po_2] = \OO_1 \cap F_1^{-1}[\OO_2] = \OO$.
      \end{enumerate}
      And so the case for sequencing is finished.
      \item Now let $\p = \pif b{\p_1}{\p_2}$, and $(F,B',\OO) \in \FF_{\p}$.
        This can happen in either of two ways:
        \begin{enumerate}
          \item $(F,B,O) \in \FF_{\p_1}$ and $B' = B \cap (\bexpset)$.
            In this case, by the IH for $\p_1$, we have $\cfg_1$ such that $ (\p_1,\cfg_0) \trCl (\pskip,\cfg_1)$ with the required properties \rom 1-\rom 3.
            This is the canonical execution, so we have $(\pif b{\p_1}{\p_2},\cfg_0) \tr (\p_1,\cfg') \trCl (\pskip,\cfg'')$ for some $\cfg'$ with $\cfg'' \cfgeq \cfg_1 \cfgcmp \cfg'$ by \Cref{lem:canonicalexecution}.
            But we know $\sub' = \initsub$, $k'=0$, $\pc' = \strue \sand \app\initsub\bexp$, and $\po' = \strue$.
            With this we verify \rom 1-\rom 3 for $\cfg''$:
            \begin{enumerate}[label=(\roman*)]
              \item $\sinterpretk{\sub''}{k''} = \sinterpretk{\sub_1}{k_1} \circ \sinterpretk{\sub'}{k'} = \sinterpretk{\sub_1}{k_1} \circ \id_{\statespace} = \sinterpretk{\sub_1}{k_1}$, and by IH for $\p_1$, this equals $F$.
              \item $\sinterpret{\pc''} = \sinterpret{\pc'} \cap \sinterpretk{\sub'}{k'}^{-1} \big[{\sinterpret{\pc_1}}\big] = \sinterpret{\strue \sand \app\initsub\bexp} \cap \id_{\statespace}^{-1}\big[\sinterpret{\pc_1}\big]$, and by \Cref{lem:substitutionlemmabexp} and the IH that $\sinterpret{\pc_1}=B$, this is $(\bexpset) \cap B = B' $.
              \item $\sinterpret{\po''} = \sinterpret{\po'} \cap \id_{\statespace}^{-1}\big[ \sinterpret{\strue} \big] = \sinterpret{\po'} = \OO$.
            \end{enumerate}
            Thus, in the case that $(F,B,O) \in \FF_{\p_1}$ and $B' = B \cap (\bexpset)$, we have
            $$ (S,\cfg_0) \trCl (\pskip,\cfg'') $$
            where $\cfg''$ satisfies the properties \rom 1-\rom 3, so we are done.
          \item The case that $(F,B,O) \in \FF_{\p_2}$ and $B' = B \cap (\bexpcset)$ is analogous; just replace $\pc' = \strue \sand \app\initsub\bexp$ by $\pc' = \strue \sand \app\initsub{\pneg\bexp}$.
        \end{enumerate}
        We have thus finished the inductive step in case $\p = \pif b{\p_1}{\p_2}$.

      \item We conclude the proof with iteration; let $\p = \while\bexp\q$.
        Then, for $(F,B,\OO) \in \FF_{\p}$ means there is $m$ such that $(F,B,O) \in \F^m_{\bexp,\q} (\F_0)$ (where $\F_0 = \{(\id_{\statespace},\bexpcset,\statespace)\}$).
        We therefore proceed by induction on $m$ that the following holds:
        for all $(F,B,\OO) \in \F^m_{\bexp,\q}(\F_0)$, there is $\cfg$ such that $(\while\bexp\q,\cfg_0\trCl(\pskip,\cfg)$.

        The base case is when $m = 0$, in which case we have $(F,B,\OO) = (\id_{\statespace},\bexpcset,\statespace)$, and, indeed,
        $ (\while\bexp\q,\initconfig)\tr(\pskip,\initsub,0,\strue \sand \app\initsub{\pneg\bexp}, \strue$.
        In this setting, it is routine to verify that properties \rom 1-\rom 3 hold.

        We now prove the statement for $m+1$.
        $(F,B,\OO) \in \F^{m+1}_{\bexp,\q} (\F_0)$ means there are $(F_\q,B_\q,\OO_\q) \in \FF_\q$ and $(F_m,B_m,\OO_m) \in \F^m_{\bexp,\q} (\F_0)$, such that
        \begin{itemize}
          \item $F = F_m \circ F_\q$;
          \item $B = (\bexpset) \cap B_\q \cap F_\q^{-1}[B_m]$; and
          \item $\OO = \OO_\q \cap F_\q^{-1}[\OO_m]$.
        \end{itemize}
        Now apply the IH for the subterm $\q$ of $\p$ for $(F_\q,B_\q,\OO_\q) \in \FF_\q$, and for the case $m$ in the induction on $m$ for $(F_m,B_m,\OO_m) \in \F^m_{\bexp,\q}(\F_0)$, to obtain \emph{canonical} symbolic executions
        $$ (\q,\initconfig)\trCl(\pskip,\cfg_\q) \quad (*) 
        \qquad \textup{and} \quad 
        (\while\bexp\q,\initconfig)\trCl(\pskip,\cfg_m) \quad (**) $$
        where $\cfg_\q = (\sub_\q,k_\q,\pc_\q,\po_\q)$ and $\cfg_m = (\sub_m,\idx_m,\pc_m,\po_m)$ with the properties that
        \begin{itemize}
          \item $\sinterpretk{\sub_\q}{k_\q} = F_\q$ and $\sinterpretk{\sub_m}{k_m} = F_m$;
          \item $\sinterpret{\pc_\q} = B_\q$ and $\sinterpret{\pc_m} = B_m$; and
          \item $\sinterpret{\po_\q} = \OO_\q$ and $\sinterpret{\po_m} = \OO_m$.
        \end{itemize}
        Using \Cref{lem:canonicalexecution} twice, we have an outgoing transition chain from $(\while\bexp\q,\initconfig)$ as follows:
        $$ \begin{array} {rll}
          (\while\bexp\q,\initconfig){}
          \tr&(\seq\q{\while\bexp\q},\initconfig') &\textup{\refRule{iter-T}} \\
          \trCl&(\seq\pskip{\while\bexp\q},\cfg')  
                          &\textup{\refRule{seq-n} and \Cref{lem:canonicalexecution} with (*)} \\
          \tr&(\while\bexp\q,\cfg')                &\textup{\refRule{seq-0}} \\
          \trCl&(\pskip,\cfg)                      &\textup{\Cref{lem:canonicalexecution} with (**)}
      \end{array}$$
      where,  $\initconfig' = (\initsub,0,\strue \sand \initsub(b),\strue)$, and (1) $\cfg' \cfgeq \cfg_\q \cfgcmp \initconfig'$ and (2) $\cfg \cfgeq \cfg_m \cfgcmp \cfg'$.
      We verify the properties \rom 1-\rom 3 for this $\cfg$:
      \begin{enumerate}[label=(\roman*)]
        \item $\sinterpretk\sub\idx \overset{(2)}= \sinterpretk{\sub_m}{k_m} \circ \sinterpretk{\sub'}{k'} \overset{(1)}= \sinterpretk{\sub_m}{k_m} \circ \sinterpretk{\sub_\q}{k_\q} \circ \sinterpretk {\initsub}0 \overset{\textup{IH}}= F_m \circ F_\q \circ \id_{\statespace} = F$;
        \item $\sinterpret{\pc} \overset{(2)}= \sinterpret{\pc'} \cap \sinterpretk{\sub'}{k'}^{-1}[\sinterpret{\pc_m}] \overset{(1)}= \sinterpret{\strue \sand \initsub(b)} \cap \sinterpretk{\initsub}0^{-1}[\sinterpret{\pc_\q}] \cap (\sinterpretk {\sub_\q}{k_\q} \circ \sinterpretk{\initsub}{0})^{-1}[B_m]$.
          It is routine to equate this to $(\bexpset) \cap B_\q \cap F_\q^{-1}[B_m] = B$.
        \item Analogous to the above property
      \end{enumerate}
      We have thus verified that there is $\cfg$ such that $(\while bT,\initconfig) \trCl (\pskip,\cfg)$ for $(F,B,\OO) \in \F^{m+1}_{b,T}(\F_0)$ such that $\sinterpretk\sub\idx = F$, $\sinterpret{\pc} = B$, and $\sinterpret{\po}=\OO$.
    \end{itemize}
    Induction on $\p$ has finished, and we finished the proof of \Cref{prop:mappingsurjective}.
    A notable consequence of this is that the mappings $\Phi_\p$ defined in the proof of \Cref{thm:symbexsemantics} are surjective.
  \end{proof}

  \begin{proposition}[Injectivity of the Bijection]\label{prop:mappinginjective}
    Let $\p \in \P$.
    If $(\p,\initconfig) \trCl (\pskip,\cfg_1)$ and $(\p,\initconfig) \trCl (\pskip,\cfg_2)$
    then either $\cfg_1=\cfg_2$ or $\sinterpret{\pc_1} \cap \sinterpret{\pc_2} = \emptyset$.
  \end{proposition}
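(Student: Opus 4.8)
The plan is to make rigorous the informal remark already used in the proof of \Cref{thm:symbexsemantics}: two terminating traces from $\initconfig$ either coincide, or they first diverge at a single branching step where one path condition acquires a conjunct $\sbexp$ and the other acquires $\sneg\sbexp$; since path conditions only shrink (in interpretation) as execution proceeds, the two final path conditions are then confined to complementary sets. To this end I would first isolate two auxiliary facts about the transition system of \Cref{fig:symos}, each proved by a routine induction on the structure of $\p$ (equivalently, on the height of the derivation tree), whose only recursive rule is \refRule{seq-n}.

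The first is a \emph{monotonicity} lemma: if $(\p,\cfg)\trCl(\p',\cfg')$ then $\sinterpret{\pc'}\subseteq\sinterpret{\pc}$. At the single-step level every rule either leaves the path condition unchanged (\refRule{asgn}, \refRule{smpl}, \refRule{obs}, \refRule{seq-0}), conjoins a symbolic Boolean expression to it (the branching rules \refRule{if-T}, \refRule{if-F}, \refRule{iter-T}, \refRule{iter-F}), or---for \refRule{seq-n}---copies the update produced by the inner transition, to which the induction hypothesis applies; since $\sinterpret{\sbexp_1\sand\sbexp_2}=\sinterpret{\sbexp_1}\cap\sinterpret{\sbexp_2}$, each step can only shrink the interpreted set. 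The second is a \emph{dichotomy} lemma for single steps: if $(\p,\cfg)\tr(\p',\cfg')$ and $(\p,\cfg)\tr(\p'',\cfg'')$ with $(\p',\cfg')\neq(\p'',\cfg'')$, then $\pc'$ and $\pc''$ are $\pc\sand\sbexp$ and $\pc\sand\sneg\sbexp$ for some $\sbexp\in\SB$. Here the deterministic rules produce a unique successor (so the hypothesis of distinct successors is vacuous for them); the two branching rules for a head \syntax{if} (resp.\ \syntax{while}) differ exactly by the conjuncts $\app\sub\bexp$ and $\app\sub{\pneg\bexp}=\sneg\app\sub\bexp$; and \refRule{seq-n} reduces the claim to the inner transition on the subterm, which it passes through without altering the path condition.

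With these lemmas in hand, I would compare the two given traces state by state and let $r$ be the length of their longest common prefix, so the traces share states up to index $r-1$. If the shared state at index $r-1$ is $(\pskip,\cfg_1)$, then---since $\pskip$ admits no outgoing transition---neither trace can continue, so the other trace also stops there, forcing $\cfg_1=\cfg_2$. Otherwise the shared state is not terminal, so both traces leave it by a transition, necessarily to distinct successors by maximality of $r$; the dichotomy lemma then supplies an $\sbexp\in\SB$ such that the successor on the first trace has path condition $\pc\sand\sbexp$ and the successor on the second has $\pc\sand\sneg\sbexp$ (where $\pc$ is the path condition of the shared state). Applying the monotonicity lemma to the remainder of each trace yields $\sinterpret{\pc_1}\subseteq\sinterpret{\sbexp}$ and $\sinterpret{\pc_2}\subseteq\sinterpret{\sneg\sbexp}=\statespace\setminus\sinterpret{\sbexp}$, whence $\sinterpret{\pc_1}\cap\sinterpret{\pc_2}=\emptyset$.

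The main obstacle is the bookkeeping around \refRule{seq-n}: a head \syntax{if}/\syntax{while} may be nested arbitrarily deeply under sequential composition, so both auxiliary lemmas must carry their inductions through \refRule{seq-n}, relying on the fact---already exploited repeatedly in \Cref{lem:canonicalexecution}---that this rule copies the inner transition's substitution, sampling index, and path-condition update verbatim. Once that is handled the argument is purely structural: beyond the set-level interpretation of $\sand$ and $\sneg$, no measure-theoretic input is required, which is why framing the comparison via the longest common prefix (rather than a separate determinism statement) keeps the final step short.
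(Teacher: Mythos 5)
Your proposal is correct and follows essentially the same route as the paper, which never actually spells out a proof of this proposition but instead relies on the one-sentence observation in the proof of \Cref{thm:symbexsemantics} that two distinct terminating traces must acquire complementary conjuncts $\sbexp$ and $\sneg\sbexp$ in their path conditions. Your monotonicity and single-step dichotomy lemmas, threaded through \refRule{seq-n} and combined via the longest-common-prefix argument, are precisely the rigorous elaboration of that remark, and the case analysis (terminal shared state versus genuine branching point) is complete.
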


\subsection{Proof of \Cref{thm:mainresult}} \label{app:mainresultproof}
We restate the theorem for convenience:
\mainresult*
\begin{proof}
For every program $\p$, there is a function $f_\p$ such that
\begin{enumerate}[label=(\roman*)]
  \item\label{item:osdencorrespondence} $(\mu\otimes\infleb) \Big(\f^{-1}_{\p}[A\times\R^{\omega}]\Big) = \sem \p (\mu) (A)$, and
  \item\label{item:ossymcorrespondence} $(\mu\otimes\infleb) \Big(\f^{-1}_{\p}[A\times\R^{\omega}]\Big) = \sum_{(F,B,\OO) \in \FF_\p} (\mu\otimes\infleb)(F^{-1}[A\times\R^{\omega}] \cap B \cap \OO)  $.
\end{enumerate}
The function $f_\p$ is given in \Cref{def:osfun} below.
\Cref{item:osdencorrespondence} is \Cref{lem:osdencorrespondence} (take $B=A\times\R^\omega$).
\Cref{item:ossymcorrespondence} follows from \Cref{lem:ossymcorrespondence}, which gives
$$ \f_\p^{-1}[A\times\R^\omega] = \bigcup_{(F,B,\OO)\in\FF_\p} F^{-1}[A\times\R^\omega] \cap B \cap \OO, $$
and measuring the union on the right-hand side is the same as summing all the individual measures -- as in \Cref{item:ossymcorrespondence} -- because it is a disjoint union by \Cref{lem:disjointpathconditions}.
The theorem is proven by chaining the two equalities in \Cref{item:osdencorrespondence,item:ossymcorrespondence} and applying \Cref{thm:symbexsemantics}. 
\par\hfill Q.E.D.
\end{proof}

\begin{definition}[Operational semantics \cite{kozen1979semantics}, extended to support observe statements] \label{def:osfun}
  For programs $\p \in \P$ in $n$ variables, the partial functions $f_\p : \R^{n+\omega}\rightharpoonup \R^{n+\omega} \cup \{\abort\}$ are defined inductively on the structure of $\p$ as:
    $$\f_\p : \rho \mapsto
    \begin{cases}
      \rho & \textup{if } \p = \pskip \\
      \rho \update {i} {\interpret {\exp} (\res{\rho}{n})}
                         & \textup{if } \p = \assign \\
      \sample i (\rho)
                         & \textup{if } \p = \psample \\
      \rho & \textup{if } \p = \observe \textup { and } \res \rho n \in \interpret {\bexp} \\
      \abort & \textup{if } \p = \observe \textup { and } \res \rho n \not \in \interpret {\bexp} \\
      \f_\p (\rho)
                         & \textup{if } \p = \pif b {\p_1}{\p_2} \textup { and } \res \rho n \in \interpret{\bexp}  \\
      \f_{\p_2} (\rho)
                         & \textup{if } \p = \pif b {\p_1}{\p_2} \textup { and } \res \rho n \not\in \interpret{\bexp}  \\
      \f_\q^m (\rho)
                         & \textup{if } \p = \while b {\q}\textup{, } m := \min_{} \{ j \in \N \setst  \res {\f_\q^j (\rho)} n \not \in \interpret{\bexp} \} \\
      (\f_{\p_2} \circ \f_{\p_1}) (\rho)
                         & \textup{if } \p = \seq{\p_1}{\p_2} \\
    \end{cases}
    $$
    where $\f^m$ denotes $m$-fold iterated applications of $\f$ (identity for $m=0$).
\end{definition}
In this definition, $\abort$ indicates an aborted execution due to an observe statement.
The functions may be partial because loops might diverge:
for certain $\rho \in \statespace$ there may not be $j \in \N$ for
which
$\f_\q^j (\rho) \not \in \interpret{\bexp} \times
\R^\omega$; we write $\f_\p(\rho) \diverges$ if $\p$ diverges on input $\rho$.
A failed observe yields the explicit \emph{aborted} state
$\abort$ (not the same as divergence!) to which the domain naturally extends.

The distribution of outputs through this operational semantics is the
(possibly unnormalized) one defined by the denotational semantics:
\begin{lemma}[Correctness of operational semantics \cite{kozen1979semantics}, extended to support observe statements] \label{lem:osdencorrespondence} 
    Let $\p\in\P$ be a program in $n$ variables and $\mu \in \ms{\R^n}$ a probability measure over the input variables.
    Then, for every $B \in \BB(\R^{n+\omega})$,
    \begin{equation}
      \label{eq:kozengoal}
      (\sem\p(\mu) \otimes \infleb) (B) = (\mu\otimes\infleb) (\f_{\p}^{-1}[B])
    \end{equation}
\end{lemma}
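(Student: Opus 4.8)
The plan is to prove \eqref{eq:kozengoal} by structural induction on $\p$, following the parallel inductive definitions of the measure transformer $\sem{\p}$ and of the operational function $\f_\p$ from \Cref{def:osfun} case by case. Two preliminary remarks shape the whole argument. First, although the statement is phrased for a probability measure, the sequencing case forces the induction hypothesis to be available for arbitrary subprobability measures, since $\sem{\p_1}(\mu)$ is in general only a subprobability measure; I therefore prove the identity for every $\mu\in\ms{\R^n}$. Second, I will repeatedly use the commutation of restriction with the product measure, namely $\restrictMeasure{\interpret\bexp}(\mu)\otimes\infleb = \restrictMeasure{\bexpset}(\mu\otimes\infleb)$, which is immediate from the definition of $\restrictMeasure{\cdot}$ together with uniqueness of the product measure. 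As a side obligation, measurability of each $\f_\p$ (so that the preimages make sense) follows by the same induction and I would dispatch it in passing.

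The base cases are mostly direct. For $\pskip$ both sides equal $(\mu\otimes\infleb)(B)$. For $\assign$ the map $\f_{\assign}$ updates coordinate $i$ via $\asgn^i_{\exp}$ and is the identity elsewhere, so $\sem{\assign}(\mu)\otimes\infleb = (\mu\otimes\infleb)\circ \f_{\assign}^{-1}$ by compatibility of pushforward with products, and evaluating at $B$ gives the claim. The $\observe$ case is where abortion enters: since $B\subseteq\statespace$ never contains $\abort$, we have $\f_{\observe}^{-1}[B]=B\cap(\bexpset)$, which is exactly the mass recorded by $\restrictMeasure{\interpret\bexp}$ on the denotational side. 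The sampling case is the most delicate base case: $\f_{\psample}=\sample i$ overwrites coordinate $i$ with the head $\rho_n$ of the stream and shifts the stream left. Here I would verify the identity first on measurable rectangles $B=A_0\times\cdots\times A_{n-1}\times C$, where peeling the single coordinate $\rho_n$ off the product $\infleb=\lambda^\omega$ produces precisely the factor $\lambda(A_i)$ appearing in $\sem{\psample}$, and then extend to all $B\in\BB(\statespace)$ by a Dynkin $\pi$--$\lambda$ uniqueness argument.

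For $\seq{\p_1}{\p_2}$ I apply the induction hypothesis twice, using $\f_{\seq{\p_1}{\p_2}}=\f_{\p_2}\circ\f_{\p_1}$ and $\sem{\seq{\p_1}{\p_2}}=\sem{\p_2}\circ\sem{\p_1}$: once for $\p_2$ with input measure $\sem{\p_1}(\mu)$ and target $B$, then once for $\p_1$ with target $\f_{\p_2}^{-1}[B]$. For $\pif\bexp{\p_1}{\p_2}$ I split the domain on the guard, writing $\f_{\pif\bexp{\p_1}{\p_2}}^{-1}[B]$ as the \emph{disjoint} union of $\f_{\p_1}^{-1}[B]\cap(\bexpset)$ and $\f_{\p_2}^{-1}[B]\cap(\bexpcset)$; measuring each piece, using the restriction--product commutation and the two induction hypotheses, reproduces the setwise sum defining $\sem{\pif\bexp{\p_1}{\p_2}}$.

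The while case is the main obstacle. Here $\f_{\while\bexp\q}(\rho)=\f_\q^{m}(\rho)$ for the least $m$ at which the guard first fails, so I would partition $\f_{\while\bexp\q}^{-1}[B]$ into the sets $D_m$ of inputs that perform exactly $m$ iterations, exit, and land in $B$; inputs on which the loop diverges or aborts contribute nothing. Each $D_m$ must be matched with the $m$-th summand $\restrictMeasure{\interpret{\pneg\bexp}}\circ(\sem\q\circ\restrictMeasure{\interpret\bexp})^{m}$ of the denotational infinite sum, which I expect to establish by an auxiliary induction on $m$ proving $(\mu\otimes\infleb)(D_m)=\big((\restrictMeasure{\interpret{\pneg\bexp}}\circ(\sem\q\circ\restrictMeasure{\interpret\bexp})^{m})(\mu)\otimes\infleb\big)(B)$, where each unfolding reuses the restriction--product commutation and the induction hypothesis for the loop body $\q$, exactly as in the branching case. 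Countable additivity of $\mu\otimes\infleb$ over the disjoint family $\{D_m\}_{m\ge 0}$ then converts the sum of the $D_m$-measures into $(\mu\otimes\infleb)(\f_{\while\bexp\q}^{-1}[B])$, matching the infinite sum on the denotational side. The genuinely delicate points are checking that the $D_m$ are disjoint and measurable, since the guard-failure index $m$ is defined by a countable Boolean combination of measurable conditions, and arguing that diverging and aborting inputs can be discarded without double-counting or losing mass.
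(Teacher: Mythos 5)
Your proposal is correct and follows essentially the same route as the paper's proof: structural induction on $\p$ with the same case analysis, the same restriction--product commutation $\restrictMeasure{\interpret\bexp}(\mu)\otimes\infleb=\restrictMeasure{\bexpset}(\mu\otimes\infleb)$ for branching and iteration, the same partition of $\f_{\while\bexp\q}^{-1}[B]$ into the pairwise disjoint ``exactly $m$ iterations'' sets (your $D_m$ are the paper's $U_m$) with an auxiliary induction on $m$, and a $\pi$--$\lambda$ extension from cylinder sets. The only difference is organizational: the paper sets up the Dynkin system $\HH_\p$ globally at the outset (which it then needs explicitly in the sequencing case, where $\f_{\p_2}^{-1}[B]$ is a general Borel set), whereas you localize the same uniqueness arguments inside the individual cases --- both are sound.
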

\begin{proof}
    For arbitrary $\p\in\P$, define
    $$ \HH_\p := \{ B \in \BB(\R^{n+\omega}) \setst (\sem\p(\mu) \otimes \infleb) (B) = (\mu\otimes\infleb) (\f_{\p}^{-1}[B]) \textup{ for all } \mu \in \ms{\R^n} \} $$
    Assume now that $\HH_\p$ contains all the cylinder sets for every $\p$.
    It is not hard to see that every $\HH_\p$ is also a Dynkin system:
    \begin{itemize}
      \item $\HH_\p$ always contains the emptyset.
      \item If $B \in \HH_\p$, then
        $$ (\sem\p(\mu) \otimes \infleb)(\compl B) = (\sem\p(\mu) \otimes \infleb) (\R^{n+\omega}) - (\sem\p(\mu) \otimes \infleb) (B) $$
        and, since $\R^{n+\omega} \in \HH_\p$ as a cylinder set, and $B \in \HH_\p$, this is equal to
        $$ (\mu\otimes\infleb) (\f_{\p}^{-1} [\R^{n+\omega}]) - (\mu\otimes\infleb) (\f_{\p}^{-1}[B]) = (\mu\otimes\infleb) (\f_{\p}^{-1} [\compl B]), $$
        so $\HH_\p$ is always closed under complement.
      \item Both $(\sem\p(\mu) \otimes \infleb)$ and $\pushforward {(\mu\otimes\infleb)} {\f_{\p}}$ are measures, so $\HH_\p$ is obviously closed under countable unions of pairwise disjoint sets.
    \end{itemize}
    By the $\pi$-$\lambda$-theorem, $\HH_\p$ therefore contains all Borel sets.

    It remains to show the assumption that $\HH_\p$ contains the cylinder sets for every $\p$.
    Thus, we show \eqref{eq:kozengoal} by induction on the structure of $\p$, where we may assume that $B$ is a cylinder set, i.e., it can be written as
    $$ B = \underbrace{B_0 \times \dots \times B_{n-1}}_{=:B^{(n)}} \times \underbrace{B_n \times B_{n+1} \times \dots \times B_{n+p} \times \R^\omega}_{=:B^{(\omega)}} $$
    for some $p \in \N$.
    \begin{itemize}
    \item For $\p = \pskip$, we have $\sem\p(\mu) = \mu$, so the LHS of \eqref{eq:kozengoal} is just $(\mu\otimes\infleb) (B)$ for any measurable set $B$.
      Also $\f_{\p}^{-1}[B] = B$, so the RHS equals $(\mu\otimes\infleb) (B)$ as well.
    \item For $\p = \assign$, for the LHS \eqref{eq:kozengoal} we have $\sem\p(\mu) = \pushforward{\mu}{\asgn^i_\exp}$, where
      $$\asgn_{\exp}^i(x_0,\ldots,x_{n-1}) = (x_0, \ldots, x_{i-1}, \interpret{\exp}(x_0,\ldots,x_{n-1}), x_{i+1}, \ldots, x_{n-1} ) $$
      For this $\asgn_{\exp}^i$, we can compute the following preimage:
      \begin{equation} \label{eq:kozenassign}
        (\asgn_{\exp}^i)^{-1}[B^{(n)}] = (B_0 \times \dots \times B_{i-1} \times \R \times B_{i+1} \times \dots \times B_{n-1}) \cap \interpret{\exp}^{-1}[B^{(n)}]
      \end{equation}
      Hence, the LHS of \eqref{eq:kozengoal}, which is $(\sem\p(\mu) \otimes \infleb) (B^{(n)} \times B^{(\omega)})$, becomes the $\mu$ of this set \eqref{eq:kozenassign}, multiplied by $\infleb (B^{(\omega)})$.
      For the RHS, the preimage of $B$ under $\f_{\p} : \rho \mapsto \rho\update{i}{\interpret{\exp}(\res \rho n)} $ is
      $$\f_{\p}^{-1}[B] = \Big( (B_0 \times \dots \times B_{i-1} \times \R \times B_{i+1} \times \dots \times B_{n-1}) \cap \interpret{\exp}^{-1}[B^{(n)}] \Big) \times B^{(\omega)}, $$
      and measuring this with $(\mu\otimes\infleb)$ establishes this case.
    \item Now let $\p = \psample$.
      We have, on the LHS of \eqref{eq:kozengoal},
      $$ \sem\p(\mu) (B_0 \times \dots \times B_{n-1}) = \mu (B_0 \times \dots \times B_{i-1} \times \R \times B_{i+1} \times \dots \times B_{n-1}) \cdot \lambda (B_i) $$
      and $(\sem\p(\mu) \otimes \infleb) (B)$ is exactly this, multiplied by $\infleb (B^{(\omega)})$.

      For the RHS, $\f_{\p} (\rho) = \sample i (\rho)$, where
      $$ \sample i (\rho) = (\rho_0, \ldots, \rho_{i-1}, \bm{\rho_n}, \rho_{i+1}, \ldots, \rho_{n-1}, \bm{\rho_{n+1}}, \bm{\rho_{n+2}}, \ldots ) $$
      so that
      $$\f_{\p}^{-1} [B] = B_0 \times \dots \times B_{i-1} \times \R \times B_{i+1} \times \dots \times B_{n-1} \times B_i \times B^{(\omega)} $$
      and measuring this with $(\mu\otimes\infleb)$ establishes the case.
    \item The last base case is $\p = \observe$.
      We have, for the LHS,
      $$ (\sem\p(\mu) \otimes \infleb) (B) = \mu(B^{(n)} \cap \interpret{\bexp}) \cdot \infleb (B^{(\omega)}) $$
      On the RHS, $\f_{\p} (\rho) = \rho$ if $\rho \in \interpret{\bexp} \times \R^\omega$ (and undefined otherwise), so $\f_{\p}^{-1}[B] = (B^{(n)} \cap \interpret {\bexp}) \times B^{\omega}$.
      Measuring this last set with $(\mu\otimes\infleb)$ establishes this case.
    \end{itemize}

  The base cases of the structure of $\p$ are established, and we continue with the induction steps.
  The induction hypotheses for substatements $\p'$ of $\p$ (these are $\p_1$ and $\p_2$ in the case for \emph{if} and \emph{sequencing}, and $\q$ in the case for \emph{while}) are that $\HH_{\p'}$ contains all cylinder sets.
  By the reasoning above, the IH therefore entails that $\HH_{\p'}$ moreover contains all Borel sets.
  \begin{itemize}
    \item If $\p = \pif \bexp{\p_1}{\p_2}$, we have two induction hypotheses (IH), which we apply respectively using the measures $\restrictMeasure{\interpret{\bexp}}(\mu)$ and $\restrictMeasure{\interpret{\pneg \bexp}}(\mu)$ in the following derivation:
      $$ \begin{array}{l}
        (\sem\p(\mu) \otimes \infleb)(B) \\
        \quad = (\sem\p(\mu) \otimes \infleb)(B^{(n)} \times B^{(\omega)}) \\
        \quad = \sem\p(\mu)(B^{(n)}) \cdot \infleb(B^{(\omega)}) \\
        \quad = \Big(\sem{\p_1}(\restrictMeasure{\interpret{\bexp}}(\mu))(B^{(n)}) + \sem{\p_2}(\restrictMeasure{\interpret{\pneg \bexp}}(\mu))(B^{(n)}) \Big) \cdot \infleb (B^{(\omega)}) \\
        \quad = \sem{\p_1}(\restrictMeasure{\interpret{\bexp}}(\mu))(B^{(n)}) \cdot \infleb (B^{(\omega)}) + \sem{\p_2}(\restrictMeasure{\interpret{\pneg \bexp}}(\mu))(B^{(n)}) \cdot \infleb (B^{(\omega)}) \\
        \quad = (\sem{\p_1}(\restrictMeasure{\interpret{\bexp}}(\mu)) \otimes \infleb)(B) + (\sem{\p_2}(\restrictMeasure{\interpret{\pneg \bexp}}(\mu)) \otimes \infleb) (B) \\
        \quad \overset{\textup{IH}}= (\restrictMeasure{\interpret{\bexp}}(\mu) \otimes \infleb) (\f_{\p_1}^{-1}[B]) + (\restrictMeasure{\interpret{\pneg \bexp}}(\mu)\otimes \infleb) (\f_{\p_2}^{-1}[B]) \\
        \quad \overset{\textup{(*)}}= (\mu\otimes\infleb) \big(\f_{\p_1}^{-1} [B] \cap (\interpret{\bexp} \times \R^\omega)\big) + (\mu\otimes\infleb) \big(\f_{\p_2}^{-1} [B] \cap (\compl{\interpret{\bexp}} \times \R^\omega)\big) \\
        \quad = (\mu\otimes\infleb) \Big( \big(\f_{\p_1}^{-1} [B] \cap (\interpret{\bexp} \times \R^\omega)\big) \cup \big(\f_{\p_2}^{-1} [B] \cap \compl{\interpret{\bexp}} \times \R^\omega) \big) \Big)\\
        \quad = (\mu\otimes\infleb) (\f_{\p}^{-1}[B])
      \end{array}$$
      In the last equality we apply the definition of $\f_{\p}$.
      In the second to last equality, the sum of the measures is the measure of the union, because the sets are disjoint.
      Indeed, one is contained in $\interpret{\bexp} \times \R^\omega$ and the other in $\compl {\interpret{\bexp}} \times \R^\omega$.
      At (*), we use the fact that if $(X, \Sigma_X, \mu_X)$ and $(Y, \Sigma_Y, \mu_Y)$ are $\sub$-finite measure spaces and $A \in \Sigma_X$, then $(\restrictMeasure{A}(\mu_X) \otimes \mu_Y) (E) = (\mu_X \otimes \mu_Y) (E \cap (A \times Y))$ for all $E \in \Sigma_X \otimes \Sigma_Y$.
      (Here, $\Sigma_X = \BB(\R^n)$ and $\Sigma_Y = \BB(\R^\omega)$.)
      This case is now finished.
    \item Now let $\p = \while \bexp\q$.
      The LHS of \eqref{eq:kozengoal} is
      \begin{equation}
        \label{eq:kozenwhileLHS}
        \begin{array}{rl}
        (\sem\p(\mu) \otimes \infleb)(B)
        = & \sem\p(\mu)(B^{(n)}) \cdot \infleb(B^{(\omega)}) \\
        = & \Big( \sum_{m=0}^\infty \restrictMeasure{\interpret{\pneg\bexp}} \circ \big( \sem{\q} \circ \restrictMeasure{\interpret{\bexp}} \big)^m \Big) (\mu) (B^{(n)}) \cdot \infleb(B^{(\omega)}) \\
        = & \sum_{m=0}^\infty \restrictMeasure {\interpret{\pneg\bexp}} \big( (\sem{\q} \circ \restrictMeasure {\interpret{\bexp}} )^m (\mu) \big) (B^{(n)}) \cdot \infleb(B^{(\omega)}) \\
        = & \sum_{m=0}^\infty (\restrictMeasure {\interpret{\pneg\bexp}} \big( (\sem{\q} \circ \restrictMeasure {\interpret{\bexp}} )^m (\mu) \big) \otimes \infleb) (B) \\
        = & \sum_{m=0}^\infty ((\sem{\q} \circ \restrictMeasure {\interpret{\bexp}} )^m (\mu)\otimes \infleb) (B \cap (\compl{\interpret{\bexp}} \times \R^{\omega}))
      \end{array}
      \end{equation}
      Looking at the RHS, we derive the following expression:
      \begin{equation}
        \label{eq:kozenwhileRHS}
        \begin{array}{rl}
        \f_{\p}^{-1} [B]
        = & \{ \rho \in \R^{n+\omega} \setst \f_{\p}(\rho) \in B \} \\
        = & \{ \rho \in \R^{n+\omega} \setst \exists m ~.~ \f_\q^m(\rho) \in B \cap (\compl{\interpret{\bexp}} \times \R^\omega) \\
          & \qquad \wedge \forall j \in \{0,\ldots,m-1\} ~.~ \f_\q^j (\rho) \in \interpret{\bexp}\times \R^\omega \} \\
        = & \{ \rho \in \R^{n+\omega} \setst \exists m ~.~ \rho \in \f_\q^{-m}[B \cap (\compl{\interpret{\bexp}} \times \R^\omega)] \\
          & \qquad \wedge \forall j \in \{0,\ldots,m-1\} ~.~ \rho \in \f_\q^{-j} [\interpret{\bexp} \times \R^\omega] \} \\
          = & \bigcup_{m=0}^{\infty} \underbrace{\Big\{ \f_\q^{-m} [B \cap (\compl {\interpret{\bexp}} \times \R^{\omega})] \cap \bigcap_{j=0}^{m-1} \f_\q^{-j} [\interpret{\bexp} \times \R^{\omega}] \Big\}}_{=:U_m}
      \end{array}
      \end{equation}
      Here, $\f_\q^{-m}[A]$ denotes the $m$-th preimage, i.e., $\f_\q^{-m}[A] = A$ for $m=0$ and
      $$ \f_\q^{-m} [A] = \underbrace {\f_\q^{-1}[ \f_\q^{-1} [ \dots [}_{m \textup{ times}} A]\dots] $$
      The family of sets $(U_m)_{m \in \N}$ in \eqref{eq:kozenwhileRHS} are pairwise disjoint. That is, if $m \neq m'$ then $U_m \cap U_{m'} = \emptyset$.
      Indeed, without loss of generality, $m' > m$, and then
      $$ \rho \in U_{m'} \implies \rho \in \f_\q^{-m}[\interpret{\bexp}\times \R^\omega] \implies \rho \not \in \f_\q^{-m} [\compl {\interpret{\bexp}} \times \R^\omega] \implies \rho \not \in U_m $$
      Hence, for the RHS, we have
      $$ (\mu\otimes\infleb) (\f_{\p}^{-1}[B]) = (\mu\otimes\infleb) {\Big(\bigcup_{m=0}^\infty U_m\Big)} = \sum_{m=0}^\infty (\mu\otimes\infleb)(U_m) $$
      Comparing this with the derivation of the LHS \eqref{eq:kozenwhileLHS}, it will suffice to show the following holds for every $m$ and every $C \in \BB(\R^{n+\omega})$:
      \begin{equation}
        \label{eq:kozenwhilegoal} \big((\sem{\q} \circ \restrictMeasure {\interpret{\bexp}} )^m (\mu)\otimes \infleb\big) (C) = (\mu\otimes\infleb) (\f_\q^{-m} [C] \cap \bigcap_{j=0}^{m-1} \f_\q^{-j} [\interpret{\bexp} \times \R^{\omega}])
      \end{equation}
      Indeed, using $C = B \cap (\compl {\interpret{\bexp}} \times \R^{\omega})$ in \eqref{eq:kozenwhilegoal} will establish the result for this case.

      The claim \eqref{eq:kozenwhilegoal} can be proved by induction on $m$.
      \begin{itemize}
        \item The base case $m=0$ is easy, since both sides are obviously $(\mu\otimes\infleb) (C)$.
        \item For the inductive step, we use the IH (*) for the substatement $\q$ in $\p=\while \bexp\q$, and another IH (**) for $m=\l$.
          We prove the statement for $m=\l+1$.
          $$ \begin{array}{l}
            \big( (\sem{\q} \circ \restrictMeasure{\interpret{\bexp}})^{\l+1} (\mu) \otimes \infleb \big) (C) \\
            \quad = (\sem \q \big( \restrictMeasure{\interpret{\bexp}} \big( (\sem \q \circ \restrictMeasure{\interpret{\bexp}} )^\l(\mu) \big) \big) \otimes \infleb) (C) \\
            \quad \overset{\textup{*}}= (\restrictMeasure{\interpret{\bexp}} \big( (\sem \q \circ \restrictMeasure{\interpret{\bexp}})^\l (\mu)\big) \otimes \infleb) (\f_\q^{-1}[C]) \\
            \quad = \big( ( \sem \q \circ \restrictMeasure{\interpret{\bexp}} )^\l (\mu) \otimes \infleb \big) (\f_\q^{-1}[C] \cap (\interpret{\bexp}\times \R^\omega)) \\
            \quad \overset{\textup{**}}= (\mu\otimes\infleb) (\f_\q^{-\l} [\f_\q^{-1} [C] \cap (\interpret{\bexp} \times \R^\omega)] \cap \bigcap_{j=0}^{\l-1} \f_\q^{-j}[\interpret{\bexp}\times\R^\omega]) \\
            \quad = (\mu\otimes\infleb) (\f_\q^{-(\l+1)}[C] \cap \bigcap_{j=0}^{\l} \f_\q^{-j}[\interpret{\bexp}\times\R^\omega]) \\
          \end{array}$$
      \end{itemize}
      The inductive proof shows that \eqref{eq:kozenwhilegoal} holds for all $m$ and every $C \in \BB(\R^{n+\omega})$.

      Conclude that $(\sem\p(\mu) \otimes \infleb)(B) = (\mu\otimes\infleb) (\f_{\p}^{-1}[B])$ for all $B \in \BB(\R^{n+\omega})$ and all $\mu \in \ms{\R^n}$ in the case that $\p = \while bT$.
    \item For the case where $\p = \seq {\p_1} {\p_2}$, we have
      $$ (\sem\p(\mu) \otimes \infleb) (B) = (\sem{\p_2 \otimes \infleb)(\sem{\p_1}(\mu))}(B) $$
      Now we apply the IH for $\p_2$ with $\sem{\p_1}(\mu)$ and then for $\p_1$ with $\mu$, to rewrite this further as
      $$ \ldots = (\sem{\p_1}(\mu) \otimes \infleb) (\f_{\p_2}^{-1}[B]) = (\mu\otimes\infleb) (\f_{\p_1}^{-1} [\f_{\p_2}^{-1}[B]]) = (\mu\otimes\infleb) (\f_{\p}^{-1}[B]) $$
      Here, we again use the crucial fact that $\HH_{\p_1}$ is a Dynkin systems that contains the cylinder sets, so they contain the Borel sets.
      In particular, $\HH_{\p_1}$ contains $\f_{\p_2}^{-1}[B]$.
      This concludes the case for sequencing.
  \end{itemize}
  We have covered all cases for $\p$, and the proof is done.
  \end{proof}

\begin{lemma}[Correspondence between symbolic and concrete semantics] \label{lem:ossymcorrespondence}
    Let $\p \in \P$ be a program in $n$ variables and $\rho \in \R^{n+\omega}$.
    Then $\f_\p(\rho) = \rho'$ for some $\rho' \in \R^{n+\omega}$ iff there is $(F,B,\OO) \in \FF_\p$ such that $\rho \in B \cap \OO$. In this case, $F(\rho) = \rho' = \f_\p(\rho)$.
\end{lemma}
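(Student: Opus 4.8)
The plan is to prove the biconditional by structural induction on $\p$, establishing in every case both the existence equivalence and the accompanying identity $F(\rho)=\f_\p(\rho)$ at once. The general shape is to chase the set-theoretic definition of $\FF_\p$ against the clausal definition of $\f_\p$ from \Cref{def:osfun}, using that $\rho\in B\cap\OO$ is precisely the condition guaranteeing that the execution represented by a triple is the one that $\f_\p$ actually follows on input $\rho$.

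For the base cases this is immediate. For $\p\in\{\pskip,\assign,\psample\}$ the function $\f_\p$ is total and $\FF_\p$ is a singleton $\{(F,\statespace,\statespace)\}$ whose $F$ is literally the defining clause of $\f_\p$; since $B\cap\OO=\statespace$, every $\rho$ qualifies and $F(\rho)=\f_\p(\rho)$ holds by definition. For $\p=\observe$ the unique triple is $(\id,\statespace,\bexpset)$, so $\rho\in B\cap\OO$ iff $\res\rho n\in\interpret\bexp$, which is exactly the condition distinguishing $\f_\p(\rho)=\rho$ from $\f_\p(\rho)=\abort$.

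For the two binary cases I would push the inductive hypotheses through the defining set operations. In $\p=\seq{\p_1}{\p_2}$, writing $\rho'':=F_1(\rho)$, I would use that $\f_\p(\rho)\in\statespace$ iff $\f_{\p_1}(\rho)\in\statespace$ and $\f_{\p_2}(\f_{\p_1}(\rho))\in\statespace$; the two IHs supply triples $(F_i,B_i,\OO_i)$, and the observation $\rho''\in B_2\cap\OO_2\iff\rho\in F_1^{-1}[B_2\cap\OO_2]$ identifies $\rho\in B_1\cap\OO_1\cap F_1^{-1}[B_2\cap\OO_2]$ as membership in the $B\cap\OO$ of the composite triple $(F_2\circ F_1,\,B_1\cap F_1^{-1}[B_2],\,\OO_1\cap F_1^{-1}[\OO_2])\in\FF_\p$, with $F(\rho)=F_2(\rho'')=\f_\p(\rho)$. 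The case $\p=\pif\bexp{\p_1}{\p_2}$ is similar but easier: the factor $\bexpset$ (resp.\ $\bexpcset$) intersected into the path condition records exactly the guard test $\res\rho n\in\interpret\bexp$ (resp.\ $\notin$) that selects the branch $\f_\p$ follows, so membership in the enlarged $B\cap\OO$ both fixes the branch and, via that branch's IH, yields the correct value.

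The main obstacle is $\p=\while\bexp\q$, which I would handle by an auxiliary induction on the iteration count $m$. Concretely, I would show that for each $m$ there is $(F,B,\OO)\in\mathbb F_{\bexp,\q}^m\{(\id,\bexpcset,\statespace)\}$ with $\rho\in B\cap\OO$ iff $\f_\q^j(\rho)\in\statespace$ for all $j\le m$, the guard holds at $\f_\q^j(\rho)$ for $j<m$ and fails at $\f_\q^m(\rho)$, and that in this case $F(\rho)=\f_\q^m(\rho)$. The base $m=0$ reads off the seed triple $(\id,\bexpcset,\statespace)$. For the step, an element of $\mathbb F_{\bexp,\q}^{m+1}\{\cdot\}$ has the form $(F\circ F_\q,\,\bexpset\cap B_\q\cap F_\q^{-1}[B],\,\OO_\q\cap F_\q^{-1}[\OO])$, so $\rho$ lies in its $B\cap\OO$ exactly when the guard holds at $\rho$, $\rho\in B_\q\cap\OO_\q$, and $F_\q(\rho)\in B\cap\OO$; the outer IH for $\q$ turns the middle clause into $\f_\q(\rho)\in\statespace$ with $F_\q(\rho)=\f_\q(\rho)$, and the inner IH applied at $\f_\q(\rho)$ turns the last clause into the iteration conditions for the remaining $m$ steps. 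The delicate point is the reindexing $\f_\q^j(\f_\q(\rho))=\f_\q^{j+1}(\rho)$, which combined with the guard test at $\rho$ recovers exactly the $(m+1)$-step conditions and the identity $F(F_\q(\rho))=\f_\q^m(\f_\q(\rho))=\f_\q^{m+1}(\rho)$. Finally I would assemble the claim over $\FF_{\while\bexp\q}=\bigcup_m\mathbb F_{\bexp,\q}^m\{\cdot\}$: if $\f_\p(\rho)\in\statespace$ then the defining $m=\min\{j:\res{\f_\q^j(\rho)}n\notin\interpret\bexp\}$ is precisely the layer whose triple contains $\rho$, and conversely any layer containing $\rho$ forces that same minimal exit index, giving $F(\rho)=\f_\q^m(\rho)=\f_\p(\rho)$; a diverging loop ($\f_\p(\rho)\diverges$) corresponds to no layer containing $\rho$, and an aborting body iteration is excluded by the inner IH for $\q$.
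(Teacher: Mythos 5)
Your proposal is correct and follows essentially the same route as the paper: structural induction on $\p$, with the base cases read off from the singleton semantics, the sequencing and branching cases obtained by pushing the inductive hypotheses through the set-theoretic definition of $\FF_\p$, and the while case handled by an inner induction on the iteration count $m$ that matches each layer $\mathbb F_{\bexp,\q}^m$ against the minimal exit index of $\f_\q$. The paper merely separates the two implications more explicitly in each inductive step, but the substance is the same.
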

\begin{proof}
    The proof is by induction on the structure of $\p$.
    The base cases are trivial, because the symbolic semantics $\FF_{\p}$ are the singletons $\{(F,B,\OO)\}$ where $F$ is just $\f_\p$, $B$ is the entire space, and so is $\OO$, except in the case for observe, where it is the measurable set corresponding to the observed Boolean formula.
    \begin{itemize}[label=$\bullet$]
      \item If $\p = \pskip$ then $\f_{\p} = \id_{\statespace}$ and $\FF_{\p} = \{ (\id_{\statespace},\statespace,\statespace) \}$.
      \item If $\p = \assign$ then also $\FF_{\p} = \{ (\f_{\p},\statespace,\statespace) \}$.
      \item Idem for $\p = \psample$.
      \item If $\p = \observe$ then $\f_{\p}$ is the identity on $\interpret{\bexp}$ and $\f_\p(\rho)=\abort$ in case $\rho \in \compl{\interpret\bexp}$.
        Also, $(F,B,\OO)\in\FF_{\p}$ iff $(F,B,\OO)=(\id_{\statespace},\statespace,\interpret{\bexp})$.
        Thus, $\f_\p(\rho)=\rho$ iff $\rho\in\interpret\bexp=B\cap\OO$ and in this case, $F(\rho)=\rho=\f_\p(\rho)$.
    \end{itemize}
    In the inductive steps, the `if' and `only if' parts are proven separately. 
    \begin{itemize}[label=$\bullet$]
      \item First sequencing: let $\p = \seq {\p_1}{\p_2}$.
      \par$\underline{\Longrightarrow}:$\quad
            By definition, $\f_\p(\rho)=\f_{\p_2}(\f_{\p_1}(\rho))$.
            If $\f_\p(\rho)=\rho''$ for some $\rho'' \in \R^{n+\omega}$ then $\f_{\p_1}(\rho)=\rho'$ for some $\R^{n+\omega}$ and $\f_{\p_2}(\rho')=\rho''$.
            By IHs, then, there are $(F_1,B_1,\OO_1)\in\FF_{\p_1}$ and $(F_2,B_2,\OO_2)\in\FF_{\p_2}$ such that $\rho\in B_1\cap\OO_1$ and $\rho'\in B_2\cap\OO_2$.
            Moreover, $\f_{\p_1}(\rho)=F_1(\rho)$ and $\f_{\p_2}(\rho')=F_2(\rho')$.
            By definition of $\FF_{\p}$, $(F,B,\OO) := (F_2\circ F_1,B_1 \cap F_1^{-1}[B_2],\OO_1 \cap F_1^{-1}[\OO_2]) \in \FF_{\p}$.
            It can be straightforwardly verified that $\rho \in B \cap \OO$ and $F(\rho)=\f_\p(\rho)$.
      \par$\underline{\Longleftarrow}:$\quad
            Let $(F,B,\OO) \in \FF_{\p}$.
            By definition, it is composed of two $(F_i,B_i,\OO_i) \in \FF_{\p_i} (i=1,2)$ such that $F = F_2 \circ F_1$, $B = B_1 \cap F_1^{-1}[B_2]$, and $\OO = \OO_1 \cap F_1^{-1}[\OO_2]$.
            For $\rho\in B$, we have $\rho \in B_1$ and $F_1(\rho) \in B_2$.
            If $\rho \in \OO$ then $\rho \in B_1 \cap \OO_1$ so $\rho':=F_1(\rho)\overset{\textup{IH}}=\f_{\p_1}(\rho)$ for some $\rho'\in\R^{n+\omega}$.
            In addition, $\rho' \in B_2 \cap \OO_2$, so $\rho'':=F_2(\rho')\overset{\textup{IH}}=\f_{\p_2}(\rho')$ for some $\rho'' \in \R^{n+\omega}$.
            Hence, $\rho'' = \f_{\p_2}(\f_{\p_1}(\rho)) = \f_{\p}(\rho) $ and obviously $F(\rho)=\rho''$.
      \item Now consider $\p = \pif b{\p_1}{\p_2}$.
        \par$\underline{\Longrightarrow}:$\quad
            There are two cases to consider for the given $\rho$: $\rho\in\bexpset$ or $\rho\not\in\bexpset$.
            We prove only the second; the first is analogous.
            In that case, $\f_\p(\rho)=\f_{\p_2}(\rho)$.
            If $\f_{\p_2}(\rho)=\rho'$ for some $\rho'\in\R^{n+\omega}$ then by IH there is $(F,B,\OO)\in\FF_{\p_2}$ such that $\rho \in B\cap\OO$ and $F(\rho)=\rho'$.
            By definition, $(F,B\cap(\bexpcset),\OO)\in\FF_\p$ and, obviously, $\rho\in B\cap(\bexpcset)\cap\OO$.
        \par$\underline{\Longleftarrow}:$\quad
            Let $(F,B,\OO) \in \FF_{\p}$.
            Then either \rom{1} $B = B_1 \cap (\bexpset)$ and $(F,B_1,\OO) \in \FF_{\p_1}$, or \rom{2} $B = B_2 \cap (\bexpcset)$ and $(F,B_2,\OO) \in \FF_{\p_2}$.
            We prove only case \rom 1; case \rom 2 is analogous.
            Suppose $\rho \in B\cap\OO$ then, since $\rho\in B_1\cap\OO$, by IH,
            $F(\rho) = \f_{\p_1}(\rho) =\f_\p(\rho),$
            where the last equality uses $\rho \in \bexpset$.
      \item Finally, iteration: $\p = \while\bexp\q$.

        \noindent$\underline{\Longrightarrow}$:\quad
        If $\f_\p(\rho)=\rho'$ then there is $m\in\N$ such that $\f_\q^m(\rho) \in \bexpcset$ and for all $j<m$ in $\N$, $\f_\q^j(\rho) \in \bexpset$.
        By induction on $m$, we show that for every such $m$ and for every $\rho$, there is $(F,B,\OO) \in \F_{\bexp,\q}^m(\F_0)$, where denotes $\{(\id_\statespace,\bexpcset,\statespace)\}$, such that $\rho \in B\cap\OO$ and $F(\rho)=\f^m_\q(\rho)$.
        Verifying the base case $m=0$ with $(F,B,\OO) \in \F_0$ is routine.

        Suppose now, for the inductive step, that $\f_\q^{m+1}(\rho) \in \bexpcset$ and for all $j<m+1$ in $\N$, $\f_\q^j(\rho) \in \bexpset$.
        Then, putting $\rho':=\f_\q(\rho)$ this just says
        $\f_\q^m(\rho') \in \bexpcset$ and for all $j<m$ in $\N$, $\f_\q^j(\rho') \in \bexpset$.
        \begin{itemize}
          \item by the IH on $m$, now, there is $(F',B',\OO') \in \F_{\bexp,\q}^m(\F_0)$ such that $\f_\q(\rho) \in B' \cap \OO'$ and $F'(\f_\q(\rho))=\f^m_\q(\f_\q(\rho))$, and
          \item by the IH on the subterm $\q$, there is $(F_\q,B_\q,\OO_\q) \in \FF_\q$ such that $\rho \in B_\q \cap \OO_\q$ and $F_\q(\rho)=\f_\q(\rho)$, and
        \end{itemize}
        It is now mechanically verified that the element
        $$ (F,B,\OO) := (F'\circ F_\q, B_\q \cap F_\q^{-1}[B'] \cap (\bexpset), \OO_\q \cap F_\q^{-1}[\OO']) \in \F^{m+1}_{\bexp,\q}(\F_0) $$
        satisfies $\rho \in B \cap \OO$ and $F(\rho)=\f_\q^{m+1}(\rho)$.

        \noindent$\underline{\Longleftarrow}$:\quad
        Conversely, for $(F,B,\OO) \in \FF_\p$ we must have $m\in\N$ such that $(F,B,\OO)\in\F^m_{\bexp,\q}(\F_0)$, and we proceed again by induction on $m$.
        For $m=0$, if $\rho \in B\cap\OO$, then we know that $\rho \in \bexpcset$ and so, indeed $\min\{j \in \N \mid \f_\q^j(\rho)\in\bexpcset\}=0$, so $\f_\p(\rho)=\f_\q^0(\rho)=\rho=F(\rho)$.

        In the inductive step, let $(F,B,\OO) \in \F^{m+1}_{\bexp,\q}(\F_0)$. 
        Then, by definition of $\F_{\bexp,\q}$,
        $$ (F,B,\OO) = (F'\circ F_\q, B_\q \cap F_\q^{-1}[B'] \cap (\bexpset), \OO_\q \cap F_\q^{-1}[\OO']) $$
        for some $(F',B',\OO') \in \F^m_{\bexp,\q}(\F_0)$ and $(F_\q,B_\q,\OO_\q)\in\FF_\q$.
        For all $\rho \in B\cap\OO$, we also have $\rho \in B_\q \cap \OO_\q$ and so $\f_\q(\rho)=F_\q(\rho)$ by IH.
        Now $F_\q(\rho) \in B' \cap \OO'$ so by IH on $m$, we have $\f_\p(F_\q(\rho))=\f^m_\q(F_\q(\rho))$ and $m=\min\{j \in \N \mid \f_\q^j(F_\q(\rho))\in\bexpcset\}=0$.
        Since $F_\q(\rho)=\f_\q(\rho)$ and also $\rho \in \bexpset$, this means that
        $$ m+1 = \min\{j \mid \f_\q^j(\rho) \in \bexpcset\} $$
        and we conclude that $\f_\p(\rho)=\f^{m+1}_\q(\rho)=\f^m_\q(\f_\q(\rho))=F'(F_\q(\rho))=F(\rho)$.
        \end{itemize}
        This finishes induction over the structure of $\p$; we are done with the proof of \Cref{lem:ossymcorrespondence}.
  \end{proof}


  \begin{lemma}[Disjoint Path Probabilities] \label{lem:disjointpathconditions}
    If $(F,B,\OO), (F',B',\OO') \in \FF_{\p}$ are two distinct elements, then
    $B \cap B' = \emptyset$.
  \end{lemma}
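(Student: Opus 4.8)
The plan is to avoid reasoning about $\FF_\p$ directly and instead transport the claim to the configuration side, where the disjointness of path conditions for distinct traces has already been isolated as \Cref{prop:mappinginjective}. The bridge is the one-to-one correspondence of \Cref{thm:symbexsemantics}, under which every triple $(F,B,\OO)\in\FF_\p$ arises from a final configuration $(\sub,\idx,\pc,\po)\in\Gamma_\p$ with $F=\sinterpretk\sub\idx$, $B=\sinterpret{\pc}$, and $\OO=\sinterpret{\po}$.

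Concretely, let $(F,B,\OO)$ and $(F',B',\OO')$ be two distinct elements of $\FF_\p$. First I would dispose of the degenerate case: if $B=\emptyset$ or $B'=\emptyset$ then $B\cap B'=\emptyset$ trivially (this is also consistent with the convention that all triples with empty path condition are identified, so such triples cause no trouble). Assume therefore $B\neq\emptyset$ and $B'\neq\emptyset$. By surjectivity of the correspondence (\Cref{prop:mappingsurjective}) I would pick final configurations $\cfg_1=(\sub_1,\idx_1,\pc_1,\po_1)$ and $\cfg_2=(\sub_2,\idx_2,\pc_2,\po_2)$ in $\Gamma_\p$, with $(\p,\initconfig)\trCl(\pskip,\cfg_i)$, mapping to the two triples, so that $\sinterpret{\pc_1}=B$ and $\sinterpret{\pc_2}=B'$. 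Since the assignment $\cfg\mapsto(\sinterpretk\sub\idx,\sinterpret{\pc},\sinterpret{\po})$ is a well-defined function sending $\cfg_1,\cfg_2$ to two \emph{distinct} triples, the configurations must themselves be distinct, $\cfg_1\neq\cfg_2$. \Cref{prop:mappinginjective} then forces $\sinterpret{\pc_1}\cap\sinterpret{\pc_2}=\emptyset$, i.e.\ $B\cap B'=\emptyset$, which is exactly the claim. As a by-product this also shows that two distinct triples can never share a nonempty path condition.

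The only delicate point along this route is the bookkeeping around the two equivalence conventions (empty $B$ on the triple side, unsatisfiable $\pc$ on the configuration side); handling the empty case up front keeps the remaining argument a clean application of \Cref{prop:mappinginjective}.

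The alternative would be a direct structural induction on $\p$: the base cases are vacuous since each $\FF_\p$ is then a singleton, while sequencing and branching follow by splitting on which component of the defining pair differs and invoking the induction hypothesis (using $(\bexpset)\cap(\bexpcset)=\emptyset$ for the cross-branch subcase, and $F_1^{-1}[B_2]\cap F_1^{-1}[B_2']=F_1^{-1}[B_2\cap B_2']$ to push disjointness through a common prefix). I expect the genuine obstacle there to be the \emph{while} case: within a fixed iteration count $m$ the argument mirrors sequencing, but proving disjointness of path conditions coming from \emph{different} iteration counts $m\neq m'$ essentially re-derives the pairwise disjointness of the ``exit after exactly $m$ steps'' sets $U_m$ that already appears in the \emph{while} case of \Cref{lem:osdencorrespondence}. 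Routing through \Cref{thm:symbexsemantics} and \Cref{prop:mappinginjective} sidesteps this entirely, which is why I would take the transport approach.
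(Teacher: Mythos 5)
Your proof is correct, but it takes a genuinely different route from the paper. The paper proves the lemma by a self-contained structural induction on $\p$ entirely inside the symbolic semantics $\FF_\p$: the base cases are singletons; sequencing and branching split on which component of the defining pair differs (using $F_1^{-1}[B_2]\cap F_1^{-1}[B_2']=F_1^{-1}[B_2\cap B_2']$ and $(\bexpset)\cap(\bexpcset)=\emptyset$, exactly as you anticipated); and the bulk of the work is the \emph{while} case, where elements from different iteration counts $m_1\neq m_2$ are handled via an auxiliary claim decomposing the path condition of an $m$-fold iterate into per-iteration guards, so that the exit guard $B\subseteq F^{-1}[\bexpcset]$ of the shorter run clashes with the continuation guard of the longer one. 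You instead transport the statement to the configuration side via \Cref{prop:mappingsurjective} and \Cref{prop:mappinginjective}. This is logically sound and non-circular: \Cref{prop:mappinginjective} rests on the trace-branching analysis (the first point where two traces diverge contributes complementary conjuncts $\sub(\bexp)$ and $\sub(\pneg\bexp)$ under the same substitution), not on the present lemma, and your handling of the empty-path-condition identifications is the right bookkeeping. What your route buys is brevity and a clean separation of concerns; what it costs is that \Cref{prop:mappinginjective} is only stated, not proved in detail, in the paper---its justification is the informal paragraph in the proof of \Cref{thm:symbexsemantics}---so the genuinely hard content (in particular the cross-iteration-count disjointness for \emph{while}) does not disappear but is absorbed into that sketch. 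The paper's direct induction makes this content explicit and also keeps the chain $\f_\p\to\FF_\p\to\sem{\p}$ independent of the transition system, which is presumably why the authors chose it. Your assessment that the \emph{while} case is the real obstacle of the direct route, and that it re-derives the pairwise disjointness of the ``exit after exactly $m$ steps'' sets, matches the paper's proof precisely.
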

  \begin{proof}
    The proof is by induction on the structure of $\p$.
    $\FF_{\p}$ is a singleton for all base cases of $\p$, so there there is nothing to prove there.
    \begin{itemize}
      \item Suppose $\p = \seq {\p_1}{\p_2}$ and let $(F,B,\OO)$ and $(F',B',\OO') \in \FF_{\p}$.
        By definition of $\FF_{\p}$, there are
        $(F_1,B_1,\OO_1), (F'_1,B'_1,\OO'_1) \in \FF_{\p_1}$ and
        $(F_2,B_2,\OO_2), (F'_2,B'_2,\OO'_2) \in \FF_{\p_2}$
        such that $F = F_2 \circ F_1$ and $F' = F'_2 \circ F'_1$; $B = B_1 \cap F_1^{-1}[B_2]$ and $B' = B'_1 \cap F'_1 {-1}[B'_2]$; and ${\OO} = {\OO}_1 \cap F_1^{-1}[\OO_2]$ and ${\OO}' = {\OO}'_1 \cap F'^{-1}_1[{\OO}'_2]$.
        \begin{enumerate}
          \item If $(F_1,B_1,\OO_1)$ and $(F'_1,B'_1,\OO'_1)$ are distinct, $B_1$ and $B'_1$ are disjoint by IH on $\p_1$.
            Thus $B$ and $B'$ are disjoint, as $B \subseteq B_1$ and $B' \subseteq B'_1$.
          \item Otherwise, $F_1=F_1'$ and $B_1=B'_1$, but $B_2$ and $B'_2$ must be disjoint by IH on $\p_2$, since $(F_2,B_2,\OO_2)$ and $(F_2,B_2,\OO_2)$ are distinct (otherwise $(F,B,\OO)$ and $(F',B',\OO')$ are equal).
            Then $F_1^{-1}[B_2]$ and $F'^{-1}_1[B'_2]$ are disjoint (since $F_1 = F'_1$) and so $B \cap B' = \emptyset$, since $B \subseteq F^{-1}_1[B_2]$ and $B' \subseteq F'^{-1}_1[B'_2]$.
        \end{enumerate}
        We have concluded that $B \cap B' = \emptyset$ in both cases, thus finishing this step.

      \item For $\p = \pif \bexp{\p_1}{\p_2}$, let $(F_1,B'_1,\OO_1), (F_2,B'_2,\OO_2) \in \FF_{\p}$ be distinct.
        There are two possibilities for $B'_1$:
        \begin{enumerate}
          \item $B'_1 = B_1 \cap (\bexpset)$ and $(F_1,B_1,\OO_1) \in \FF_{\p_1}$.
            Two cases also for $B'_2$:
            \begin{enumerate}
              \item $B'_2 = B_2 \cap (\bexpset)$ and $(F_2,B_2,\OO_2) \in \FF_{\p_1}$.
                In this case, $B_1$ and $B_2$ are disjoint because of the IH.
                Since $B'_1$ and $B'_2$ are subsets of these (respectively), the required disjointness follows.
              \item $B'_2 = B_2 \cap (\bexpcset)$ and $(F_2,B_2,\OO_2) \in \FF_{\p_2}$.
                Then $B'_1$ and $B'_2$ are disjoint because one is a subset of $\bexpset$ and the other of $\bexpcset$.
            \end{enumerate}
          \item $B'_1 = B_1 \cap (\bexpcset)$ and $(F_1,B_1,\OO_1) \in \FF_{\p_2}$.
            The rest of this case is a reasoning that is completely symmetrical to the above.
        \end{enumerate}
        In all cases, $B'_1 \cap B'_2 = \emptyset$, so the inductive step is finished.

      \item Finally, the case where $\p = \while{\bexp}{\q}$.
        Let $(F_1,B_1,\OO_1),(F_2,B_2,\OO_2) \in \FF_{\p}$.
        Then there are $m_1$ and $m_2$ such that
        $$(F_1,B_1,\OO_1)\in\F^{m_1}_{\bexp,\q}(\F_0)\quad\textup{and}\quad(F_2,B_2,\OO_2)\in\F^{m_2}_{\bexp,\q}(\F_0), $$
        where $\mathbb F_0 = \{(\id_{\statespace},\bexpcset,\statespace)\}$.
        Proceed by a case analysis:
        \begin{itemize}
          \item If $m_1=m_2=0$ then $(F_1,B_1,\OO_1)=(F_2,B_2,\OO_2)=(\id_\statespace,\bexpcset,\statespace)$, so there is nothing to prove.
          \item For the case $m_1=m_2>0$, we proceed by induction on $m:=m_1=m_2$:
            \begin{itemize}
              \item The base case is $m=1$. 
                There are two $(F_{\q,i},B_{\q,i},\OO_{\q,i}) \in \FF_\q$, $i=1,2$ such that $F_i=F_{\q,i}$, $B_i=B_{\q,i} \cap F_{\q,i}^{-1}[\bexpcset] \cap \bexpset$, and $\OO_i=\OO_{\q,i}$, by definition of $\F_{\bexp,\q}$.
                These two $(F_{\q,i},B_{\q,i},\OO_{\q,i})$, $i=1,2$ must be distinct, otherwise $(F_i,B_i,\OO_i)$, $i=1,2$ are equal.
                Then the sets $B_{\q,1}$ and $B_{\q,2}$ are disjoint by the IH for the subterm $\q$ of $\p$.
                A fortiori, $B_1$ and $B_2$ are disjoint.
              \item We now prove the statement for $m+1$.
                So let $(F_i,B_i,\OO_i)\in\F^{m+1}_{\bexp,\q}(\F_0)$, $i=1,2$.
                Then there are, for $i=1,2$, $(F_{\q,i},B_{\q,i},\OO_{\q,i}) \in \FF_\q$ and $(F'_i,B'_i,\OO'_i) \in \mathbb F^m_{\bexp,\q}(\F_0)$, such that
                \begin{enumerate}[label=(\roman*)]
                  \item $F_i = F'_i \circ F_{\q,i}$;
                  \item $B_i = (\bexpset) \cap B_{\q,i} \cap F_{\q,i}^{-1}[B'_i]$; and
                  \item $\OO_i = \OO_{\q,i} \cap F_{\q,i}^{-1}[\OO'_i]$.
                \end{enumerate}
                So we consider the following two cases:
                \begin{enumerate}
                  \item $(F_{\q,1},B_{\q,1},\OO_{\q,1})$ and $(F_{\q,2},B_{\q,2},\OO_{\q,2})$ are distinct.
                    By IH on $\q$, 
                    $$B_{\q,1} \cap B_{\q,2} = \emptyset$$
                  \item Otherwise, if $(F_{\q,1},B_{\q,1},\OO_{\q,1}) = (F_{\q,2},B_{\q,2},\OO_{\q,2})$ then $(F'_1,B'_1,\OO'_1)$ and $(F'_2,B'_2,\OO'_2)$ must be distinct.
                    But then by, by IH for $m$, we have $B'_1 \cap B'_2 = \emptyset$, and since $F_{\q,1} = F_{\q,2}$, we also have
                    $$ F_{\q,1}^{-1}[B'_1] \cap F_{\q,2}^{-1}[B'_2] =\emptyset $$
                \end{enumerate}
                In both cases $B_1$ and $B_2$ are disjoint, a fortiori.
            \end{itemize}
            Induction on $m$ has finished.
          \item The last possibility is $m_2 \neq m_1 > 0$, or, w.l.o.g., $m_2 > m_1 > 0$.

        \noindent We will use the following claim; write $\F_0=\{(\id_\statespace,\bexpcset,\statespace)\}$:
        \[ \forall m \in \N_{\geq 0} \,.\,
           \forall (F,B,\OO) \in \F^m_{\bexp,\q}(\F_0) \,.\,
         \exists (F_1,B_1,\OO_1), \dots, (F_m,B_m,\OO_m) \in \FF_\q \tag{*} \]
         s.t.
        \begin{enumerate}[label=(\roman*)]
          \item\label{item:whilecmp} $F = F_m \circ \dots \circ F_1$;
          \item\label{item:whileiter} $\forall j \in \{0,1,\dots,m-1\} \,.\, B \subseteq F_1^{-1}[\dots F_j^{-1}[(\bexpset) \cap B_{j+1}]\dots]$;
          \item\label{item:whileterm} $B \subseteq F^{-1}[\bexpcset]$.
        \end{enumerate}
        Here, \Cref{item:whileiter} for $j=0$ is just $B \subseteq (\bexpset) \cap B_1$.
        We prove all three items by induction on $m$.
        \begin{itemize}
          \item For $m=1$ we know $(F,B,\OO)=(F_\q,B_\q\cap(\bexpset)\cap F_\q^{-1}[\bexpcset],\OO_\q)$ for some $(F_\q,B_\q,\OO_\q)\in\FF_\q$.
            It is straightforwardly verified that by putting $(F_1,B_1,\OO_1):=(F_\q,B_\q,\OO_\q)$, \Cref{item:whilecmp,item:whileiter,item:whileterm} are satisfied.
          \item Now let $(F,B,\OO) \in \F^{m+1}_{\bexp,\q}(\F_0)$.
            Then there are $(F_\q,B_\q,\OO_\q) \in \FF_\q$ and $(F',B',\OO')\in\F^m_{\bexp,\q}(\F_0)$ such that $F=F'\circ F_\q$, $B=B_\q \cap (\bexpset) \cap F_\q^{-1}[B']$, and $\OO=\OO_\q \cap F_\q^{-1}[\OO']$.
            By the induction hypothesis, then, there are $(F'_1,B'_1,\OO'_1),\dots,(F'_m,B'_m,\OO'_m) \in \FF_\q$ such that
            \begin{enumerate}[label=(\roman*)]
              \item $F' = F'_m \circ \dots \circ F'_1$;
              \item $B' \subseteq {F'}^{-1}_1[\dots {F'}^{-1}_j[(\bexpset) \cap B'_{j+1}]\dots]$ for all $j \in \{0,1,\dots,m-1\}$; and
              \item $B' \subseteq {F'}^{-1}[\bexpcset]$.
            \end{enumerate}
            Put 
            $(F_1,B_1,\OO_1):=(F_\q,B_\q,\OO_\q)$ and 
            $(F_{j+1},B_{j+1},\OO_{j+1}):=(F'_j,B'_j,\OO'_j)$
            for $1 \leq j \leq m$.
            Then every $(F_j,B_j,\OO_j) \in \FF_\q$, and
            \begin{enumerate}[label=(\roman*)]
              \item $F = F' \circ F_\q = F'_m \circ \dots \circ F'_1 \circ F_\q = F_{m+1} \circ F_m \circ \dots \circ F_2 \circ F_1$;
              \item $B' \subseteq {F'}^{-1}_1[\dots {F'}^{-1}_j[(\bexpset) \cap B'_{j+1}]\dots]$ for all $j \in \{0,1,\dots,m-1\}$.
                And so, for all $j \in \{0,1,\dots,m-1\}$:
                $$\begin{array}{rl}
                  B \subseteq F_\q^{-1}[B'] \subseteq 
                  & F_1^{-1}[{F'}^{-1}_1[\dots {F'}^{-1}_j[(\bexpset)\cap B'_{j+1}]\dots]] \\
                  = & F_1^{-1}[F^{-1}_2[\ldots F^{-1}_{j+1}[(\bexpset)\cap B_{j+2}]\dots]]
                \end{array}$$
                This is to say that for all $j \in \{1,\dots,m\}$:
                $$ B \subseteq F_1^{-1}[\ldots F^{-1}_{j}[(\bexpset)\cap B_{j+1}]\dots] $$
                The case $j=0$ is trivially verified: $B \subseteq (\bexpset)\cap B_\q = (\bexpset)\cap B_1$.
              \item $B' \subseteq {F'}^{-1}[\bexpcset]$ so $B \subseteq F_\q^{-1}[{F'}^{-1}[\bexpcset]] = F^{-1}[\bexpcset]$.
            \end{enumerate}
        \end{itemize}
            We have verified all three properties for $(F,B,\OO)\in\F^{m+1}_{\bexp,\q}(\F_0)$, finished induction on $m$, and proved the claim $(*)$, which we will use now.

            Now let $(F,B,\OO) \in \F^{m_1}_{\bexp,\q}(\F_0)$ and $(F',B',\OO')\in\F^{m_2}_{\bexp,\q}(\F_0)$ (recall $m_2>m_1>0$).
            Then by the claim $(*)$ above, there are
            $$ \begin{array}{rl}
              (F_1,B_1,\OO_1),\dots,(F_{m_1},B_{m_1},\OO_{m_1}), &\\
              (F'_1,B'_1,\OO'_1),\dots,(F'_{m_2},B'_{m_2},\OO'_{m_2}) & \in \FF_\q
            \end{array}$$
            with the properties of \Cref{item:whilecmp,item:whileiter,item:whileterm}.
            \begin{itemize}
              \item Suppose first that 
                $$ (F_1,B_1,\OO_1)=(F'_1,B'_1,\OO'_1), \dots, (F_{m_1},B_{m_1},\OO_{m_1})=(F'_{m_1},B'_{m_1},\OO'_{m_1})$$
                By property \Cref{item:whileterm}, then, 
                $$B \supseteq F^{-1}[\bexpcset] = F_1^{-1}[\dots F_{m_1}^{-1}[\bexpcset]\dots]$$
                On the other hand, by property \Cref{item:whileiter}, since $m_1 \in \{0,\dots,m_2-1\}$,
                $$B' \supseteq {F'}^{-1}_1[\dots {F'}^{-1}_{m_1}[\bexpset]\dots] $$ 
                and since $F_j=F'_j$ for all $j \in \{1,\dots,m_1\}$, 
                $$ F_1^{-1}[\dots F_{m_1}^{-1}[\bexpcset]\dots] \cap {F'}^{-1}_1[\dots {F'}^{-1}_{m_1}[\bexpset]\dots] = \emptyset $$
              \item Otherwise let $M$ be the smallest integer in $\{1,\ldots,m_1\}$ such that we have $(F_M,B_M,\OO_M)\neq(F'_M,B'_M,\OO'_M)$.
                Then $$F_{M-1}\circ\dots\circ F_1 = F'_{M-1}\circ\dots\circ F'_1$$
                By property \Cref{item:whileiter},
                $$ B \supseteq F_1^{-1}[\dots F_{M-1}^{-1}[B_M]\dots ]
                \quad\textup{and}\quad
                B' \supseteq {F'}^{-1}_1[\dots {F'}^{-1}_{M-1}[B'_M]\dots ] $$ 
                By IH on $\q$, $B_M \cap B'_M = \emptyset$, so that now
                $$ F_1^{-1}[\dots F_{M-1}^{-1}[B_M]\dots ] \cap {F'}^{-1}_1[\dots {F'}^{-1}_{M-1}[B'_M]\dots ] = \emptyset $$
            \end{itemize}
            In both cases we conclude $B \cap B' = \emptyset$.
        \end{itemize}
    We have considered all possible values for $m_1$ and $m_2$.
  \end{itemize}
  Inductive proof of \Cref{lem:disjointpathconditions} is now finished.
  \end{proof}
\end{document}